\newtheorem{proposition}{Proposition}%[section]
\newtheorem{lemma}{Lemma}%[section]
\begin{document}
\title[Rogue wave]{The hierarchy of higher order solutions of the derivative  nonlinear Schr\"odinger
equation}
\author{Yongshuai Zhang, Lijuan Guo, Shuwei Xu, Zhiwei Wu and Jingsong HE}
\dedicatory { Department of Mathematics, Ningbo University,
Ningbo , Zhejiang 315211, P.\ R.\ China }

%%%%%%%%%%%%%%%%%%%%%%%%%%%%%%%%%%%%%%%%%%%%%%%%
\begin{abstract}
 In this paper, we provide a simple method to generate higher order position solutions and rogue wave solutions for the derivative nonlinear Schr\"odinger equation. The formulae of these higher order solutions
 are given in terms of determinants.
 The dynamics and structures of solutions generated by this method are studied.
\end{abstract}
%%%%%%%%%%%%%%%%%%%%%%%%%%%%%%%%%%%%%%%%%%%%%%%%

\maketitle \vspace{2 ex}

\noindent {{\bf Keywords}: Higher-order positon. Higher-order rogue wave. Darboux transformation. Derivative  nonlinear Schr\"odinger equation.}

%\noindent {\bf PACS} numbers: 02.30.Ik,03.75.Lm,42.65.Tg\\
%02.30.Ik,integrable system;
%03.75.Lm, Tunneling, Josephson effect, Bose¨CEinstein condensates in periodic potentials,
        %solitons, vortices, and topological excitations
%42.65.Tg, Optical solitons; nonlinear guided waves

%%%%%%%%%%%%%%%%%%%%%%%%%%%%%%%%%%%%%%%%%%%%%%%%
\section{{\bf Introduction}}

%%%%%%%%%%%%%%%%%%%%%%%%%%%%%%%%%%%%%%%%%%%%%%%%%%%%%%%%%%%%%%%%%%%%%%%%%%%%%%%%%%%%%%%%%%%%%%%%%%%%%%%%%%
The derivative nonlinear Schr\"odinger equation (DNLS)
\begin{equation}\label{dnls}
{\rm i}q_{t}-q_{xx}+{\rm i}(|q|^2q)_{x}=0.
\end{equation}
plays an important role in plasma physics and nonlinear optics.
It not only dominates the evolution of small-amplitude Alf\'en waves in a
low-$\beta$ plasma\cite{JPSJ41265,JPP16321,PS40227,NWCSP}, but also is used to describe
the behavior of large-amplitude magnetohydrodynamic (MHD) waves in a high-$\beta$ plasma \cite{JPP67271,PLA3726107}. On the other hand, the DNLS equation governs the transmission of sub-picosecond in single mode optical fibers
\cite{PRA231266,PRA271393,NFO}.

%%%%%%%%%%%%%%%%%%%%%%%%%%%%%%%%%%%%%%%%%%%%%%%%%%%%%%%%%%%%%%%%%%%%%%%

For the DNLS equation with vanishing boundary condition (VBC), Kaup and Newell (KN)\cite{JMP19798} first found the
one-soliton solution by inverse scattering transformation (IST).  On the basis of bilinear transformation,
the first $N$-solition formula was obtained by Nakamuro and Chen \cite{JPSJ49813}. Determinant expression of the $N$-soliton solution can be established via applying the Darboux transformation \cite{JPA23439}. In the case of the
non-vanishing boundary condition(NVBC), Kawata and Inoue developed an IST for the DNLS equation and obtained a
breather-type soliton (paired soliton) \cite{JPSJ441968}.
 Wadati {\sl et al} derived the stationary
solution for the DNLS equation under the plane wave boundary \cite{JPSJ48279}.
Chen and Lam \cite{PRE69066604} revised the IST for the DNLS by introducing an affine parameter, and derived a breather solution, which can be degenerated to both dark soliton and bright soliton.

%%%%%%%%%%%%%%%%%%%%%%%%%%%%%%%%%%%%%%%%%%%%%%%%%%%%%%%%%%%%%%%%%%%%%%%%%%%%%%%%%%%%%%%%%%%%

Recently, rogue wave, an emerging phenomenon, is passionately discussed. The concept of rogue wave
was first proposed in the studies of deep ocean waves \cite{NOWATIST,EOW}, and
gradually extended to other fields such as optics fibre\cite{Nature450,PRL101233902,OE163644},
 Bose-Einstein condensates \cite{PRA80033610},
 capillary phenomena \cite{PRL104104503}, and so on.
 Rogue wave,  ``which appears from nowhere and disappears without a trace (WANDT)'' \cite{PLA373675},
 possesses the following two remarkable characteristics: i) locates in both space and time, ii) exhibits a dominant peak.

%%%%%%%%%%%%%%%%%%%%%%%%%%%%%%%%%%%%%%%%%%%%%%%%%%%%%%%%%%%%%%%%%%%%%%%%%%%%%%

The first order rogue wave  was found in 1983 by Peregrine \cite{JAMSS2516}, which is a solution of the NLS
equation. It is usually called the
Peregrine soliton, and has been observed
experimentally in fiber \cite{NP6790}, water tank \cite{PRL106204502} and  multi-component plasma \cite{PRL107255005}.
The first order rogue wave solution of  the DNLS equation was first found by Xu and coworkers \cite{JPA44305203} by
the Darboux transformation and certain limit technique.
Recently, Guo et al \cite{Guo} obtained two kinds of generalized Darboux
transformations, and got the formulae of higher order solutions
for both the VBC and NVBC. Moveover, Guo showed two patterns (fundamental
and triangular) of the second  order rogue wave from a special seed solution
$q={\rm exp}(-{\rm i}x)$,  which are similar to the case of the NLS equation.
 Therefore, it is nature to ask whether the rogue wave solution of the DNLS equation possesses new structures that have not been found in other soliton equations such as the NLS equation.

%%%%%%%%%%%%%%%%%%%%%%%%%%%%%%%%%%%%%%%%%%%%%%%%%%%%%%%%%%%%%%%%%%%%

The Darboux transformation, generated from the work of Darboux in 1882 for the Sturm-Liouville equation, has been an important method in generating solutions of integrable systems. To get the rogue wave solution, we need to iterate Darboux transformation at the same eigenvalue, but it does not work in this case. So we must modify the  Darboux transformation to get the solutions at the same eigenvalue. In this paper, we adopt the Taylor expansion to
deal with this defect, and obtain positon solutions, rational traveling solutions and rogue wave solutions.

%%%%%%%%%%%%%%%%%%%%%%%%%%%%%%%%%%%%%%%%%%%%%%%%%%%%%%%%%%%%%%%%%%%%%%%%%%%%%%%%%%%%%%%%%%%%%%%%%%%%%%%%%%%%%

The organization of this paper is as follows. In section 2, we provide a new method to generate solutions at the same eigenvalue base on the method of Darboux transformation and Taylor expansion, and display the formula of $N$-th order solution in terms of determinant. As applications, several explicitly analytic expressions are given, which include positon solutions, rational traveling solutions and rogue wave solutions. In section 3, we obtain the multi-rogue wave solutions by altering the mixed coefficients of eigenfunctions, which contain several free parameters. With the help of those
parameters, we consider the dynamics of multi-rogue wave. Moreover, three kinds of new structures: \emph{modified-triangular} structure,  \emph{ring-triangle} structure and \emph{multi-ring} structure are given. The conclusion is given in the last section.

\section{{\bf The solutions of the DNLS equation}}

The Kaup and Newell (KN) system\cite{JMP19798}:
\begin{equation}
  \label{KN}
  \left\{
  \begin{aligned}
  r_t-{\rm i}r_{xx}-(r^2q)_x&=0,\\
  q_t+{\rm i}q_{xx}-(rq^2)_x&=0.
  \end{aligned}
  \right.
\end{equation}
can be represented as the integrability condition of the following Kaup and Newell spectral system (Lax pair)\cite{JMP19798,JPSJ68355}:

%%%%%%%%%%%%%%%%%%%%%%%%%%%%%%%%%%%%%%%%%%%%%%%%%%%%%%%%%%%%%%%%%%%%%%%%%%%%%%%%%%%%%%%%%%%%%%%%%%%%%%%%%%%

\begin{equation}\label{lax}
\left\{
\begin{aligned}
\Psi_{x}&=M\Psi=(J\lambda^2+Q\lambda)\Psi,\\
\Psi_{t}&=N\Psi=(2J\lambda^4+V_{3}\lambda^3+V_{2}\lambda^2+V_{1}\lambda)\Psi,
\end{aligned}
\right.
\end{equation}

with

%%%%%%%%%%%%%%%%%%%%%%%%%%%%%%%%%%%%%%%%%%%%%%%%%%%%%%%%%%%%%%%%%%%%%%%%%%%%%%%%%%%%%%%%%%%%%%%%%%%%%

\begin{equation}
{J=\left( \begin{array}{cc}
{\rm i} &0 \\
0 &-{\rm i}
\end{array}
\right),
\quad Q=\left( \begin{array}{cc}
0 &q\\
r &0
\end{array}
\right),} \nonumber
\end{equation}
\begin{equation}
V_{3}=2Q, \quad V_{2}=Jqr,
\quad V_{1}=\left(\begin{array}{cc}
0 &-{\rm i}q_{x}+q^2r \\
{\rm i}r_{x}+r^2q &0
\end{array}
\right), \nonumber
\end{equation}
%%%%%%%%%%%%%%%%%%%%%%%%%%%%%%%%%%%%%%%%%%%%%%%%%%%%%%%%%%%%%%%%%%%%%%%%%%%%%%%%%%%%%%%%%%%%%%%%%%%%%%%%%%%%%%
Where $\lambda\in\mathbb{C}$,  $\Psi\in\mathbb{C}^2$,
$\Psi$ is called the eigenfunction of the spectral problem \eqref{lax} corresponding  to eigenvalue $\lambda$ .

When
\begin{equation}
  \label{reduction}
  r=-q^*,
\end{equation}
the KN system can be reduced to the DNLS equation, asterisk denotes complex conjugation.

\vspace{1 ex}
\noindent{\bf 2.1 Determinant expression}\\

\vspace{1 ex}
The $N$-th Darboux transformation of the KN system in terms of determinant was obtained In \cite{JPA44305203}.  And the formulae for $N$-th order solutions were given as following:

%%%%%%%%%%%%%%%%%%%%%%%%%%%%%%%%%%%%%%%%%%%%%%%%%%%%%%%%%%%%%%%%%%%%%%%%%%%%%%%%%%%%%%%%%%%%%%%%%%%%%%%%%%

\begin{lemma}\cite{JPA44305203}
Let $\Psi_{i}=\left(\begin{array}{c}
f_i\\
g_i
\end{array}
\right)$ $\left(i=1,2,\cdots,n\right)$ be distinct solutions related to $\lambda_i$ of the spectral problem (\ref{lax}),
then ($q^{[n]}$,$r^{[n]}$)  given by the  following formulae are new solutions of the KN system \eqref{KN}.
\begin{equation}\label{q[n]}
q^{[n]}=\frac{\Omega_{11}^2}{\Omega_{21}^2}q+2{\rm i}\frac{\Omega_{11}\Omega_{12}}{\Omega_{21}^2},\quad
r^{[n]}=\frac{\Omega_{21}^2}{\Omega_{11}^2}r-2{\rm i}\frac{\Omega_{21}\Omega_{22}}{\Omega_{11}^2}.
\end{equation}
with  $n=2k$,
\begin{equation}
\quad\Omega_{11}=\begin{vmatrix}
\lambda_{1}^{n-1}g_1&\lambda_{1}^{n-2}f_{1}&\lambda_{1}^{n-3}g_{1}&\cdots&\lambda_{1}g_{1}&f_{1}\\
\lambda_{2}^{n-1}g_{2}&\lambda_{2}^{n-2}f_{2}&\lambda_{2}^{n-3}g_{2}&\cdots&\lambda_{2}g_{2}&f_{2}\\
\vdots&\vdots&\vdots&\vdots&\vdots&\vdots\\
%\lambda_{n-1}^{n-1}g_{n-1}&\lambda_{n-1}^{n-2}f_{n-1}&\lambda_{n-1}^{n-3}g_{n-1}&\cdots&\lambda_{n-1}g_{n-1}&f_{n-1}\\
\lambda_{n}^{n-1}g_{n}&\lambda_{n}^{n-2}f_{n}&\lambda_{n}^{n-3}g_{n}&\cdots&\lambda_{n}g_{n}&f_{n} \nonumber
\end{vmatrix},
\end{equation}
\begin{equation}
\quad\Omega_{12}=\begin{vmatrix}
\lambda_{1}^{n}f_1&\lambda_{1}^{n-2}f_{1}&\lambda_{1}^{n-3}g_{1}&\cdots&\lambda_{1}g_{1}&f_{1}\\
\lambda_{2}^{n}f_{2}&\lambda_{2}^{n-2}f_{2}&\lambda_{2}^{n-3}g_{2}&\cdots&\lambda_{2}g_{2}&f_{2}\\
\vdots&\vdots&\vdots&\vdots&\vdots&\vdots\\
%\lambda_{n-1}^{n}f_{n-1}&\lambda_{n-1}^{n-2}f_{n-1}&\lambda_{n-1}^{n-3}g_{n-1}&\cdots&\lambda_{n-1}g_{n-1}&f_{n-1}\\
\lambda_{n}^{n}f_{n}&\lambda_{n}^{n-2}f_{n}&\lambda_{n}^{n-3}g_{n}&\cdots&\lambda_{n}g_{n}&f_{n} \nonumber
\end{vmatrix},
\end{equation}
\begin{equation}
\quad\Omega_{21}=\begin{vmatrix}
\lambda_{1}^{n-1}f_1&\lambda_{1}^{n-2}g_{1}&\lambda_{1}^{n-3}f_{1}&\cdots&\lambda_{1}f_{1}&g_{1}\\
\lambda_{2}^{n-1}f_{2}&\lambda_{2}^{n-2}g_{2}&\lambda_{2}^{n-3}f_{2}&\cdots&\lambda_{2}f_{2}&g_{2}\\
\vdots&\vdots&\vdots&\vdots&\vdots&\vdots\\
%\lambda_{n-1}^{n-1}f_{n-1}&\lambda_{n-1}^{n-2}g_{n-1}&\lambda_{n-1}^{n-3}f_{n-1}&\cdots&\lambda_{n-1}f_{n-1}&g_{n-1}\\
\lambda_{n}^{n-1}f_{n}&\lambda_{n}^{n-2}g_{n}&\lambda_{n}^{n-3}f_{n}&\cdots&\lambda_{n}f_{n}&g_{n} \nonumber
\end{vmatrix},
\end{equation}
\begin{equation}{\label{even1}}
\Omega_{22}=\begin{vmatrix}\nonumber
\lambda_{1}^{n}g_1&\lambda_{1}^{n-2}g_{1}&\lambda_{1}^{n-3}f_{1}&\cdots&\lambda_{1}f_{1}&g_{1}\\
\lambda_{2}^{n}g_{2}&\lambda_{2}^{n-2}g_{2}&\lambda_{2}^{n-3}f_{2}&\cdots&\lambda_{2}f_{2}&g_{2}\\
\vdots&\vdots&\vdots&\vdots&\vdots&\vdots\\
%\lambda_{n-1}^{n}g_{n-1}&\lambda_{n-1}^{n-2}g_{n-1}&\lambda_{n-1}^{n-3}f_{n-1}&\cdots&\lambda_{n-1}g_{n-1}&f_{n-1}\\
\lambda_{n}^{n}g_{n}&\lambda_{n}^{n-2}g_{n}&\lambda_{n}^{n-3}f_{n}&\cdots&\lambda_{n}f_{n}&g_{n}
\end{vmatrix}.
\end{equation}
Here $\lambda_{2l}=-\lambda_{2l-1}^*$ and
$\Psi_{2l}=\left(\begin{array}{c}
      f_{2l}\\
      g_{2l}
      \end{array} \right)=\left(\begin{array}{c}
     g_{2l-1}^*\\
     f_{2l-1}^*
     \end{array}\right)$.
\end{lemma}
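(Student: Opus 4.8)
The plan is to identify $(q^{[n]},r^{[n]})$ as the potentials produced by an $n$-fold Darboux transformation of the Kaup--Newell spectral system \eqref{lax} and to write the Darboux matrix down explicitly by Cramer's rule. Two symmetries of \eqref{lax} are used. First, with $\sigma=\mathrm{diag}(1,-1)$ one has $M(-\lambda)=\sigma M(\lambda)\sigma$ and $N(-\lambda)=\sigma N(\lambda)\sigma$, so $\sigma\Psi_{i}$ solves \eqref{lax} at $-\lambda_{i}$ whenever $\Psi_{i}$ solves it at $\lambda_{i}$. Second, under the reduction \eqref{reduction} a short computation (for both the $x$- and $t$-parts) shows $(g_{i}^{*},f_{i}^{*})^{\mathrm{T}}$ solves \eqref{lax} at $-\lambda_{i}^{*}$; this is exactly the pairing $\lambda_{2l}=-\lambda_{2l-1}^{*}$, $\Psi_{2l}=(g_{2l-1}^{*},f_{2l-1}^{*})^{\mathrm{T}}$ assumed in the lemma. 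I would look for a Darboux matrix $T_{n}(\lambda)$, a $2\times2$ matrix polynomial in $\lambda$ with the ``chessboard'' parity forced by the first symmetry — diagonal entries even in $\lambda$, off-diagonal entries odd — subject to $T_{n}(\lambda_{i})\Psi_{i}=0$ for $i=1,\dots,n$ (so that $T_{n}(-\lambda_{i})\sigma\Psi_{i}=0$ holds automatically) together with a normalization of the leading coefficient. Comparing the two top coefficients of the intertwining relation $T_{n,x}+T_{n}M=(J\lambda^{2}+Q^{[n]}\lambda)T_{n}$ shows the leading coefficient of $T_{n}$ is diagonal and expresses $q^{[n]}$ and $r^{[n]}$ as ratios of a subleading off-diagonal coefficient and a leading diagonal coefficient of $T_{n}$.

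The explicit solution is then linear algebra. The scalar kernel conditions split into two square linear systems for the coefficients of the diagonal and off-diagonal parts of $T_{n}$; because the diagonal part carries only even and the off-diagonal part only odd powers of $\lambda$, the two coefficient matrices are, after reordering columns, exactly the matrices whose determinants are $\Omega_{11}$ and $\Omega_{21}$, while the inhomogeneous terms coming from the normalization produce $\Omega_{12}$ and $\Omega_{22}$. Solving by Cramer's rule and substituting into the formulae from the previous step yields \eqref{q[n]}. The squares $\Omega_{11}^{2},\Omega_{21}^{2}$ and the mixed products $\Omega_{11}\Omega_{12},\Omega_{21}\Omega_{22}$ appear because the parity symmetry forces each $\lambda_{i}$ to be eliminated together with $-\lambda_{i}$: the relevant determinants of the ``doubled'' system factor, through the even/odd splitting of the columns and the relation between $\Psi$ at $-\lambda_{i}$ and $\sigma\Psi$ at $\lambda_{i}$, into products of the $n\times n$ determinants $\Omega_{jk}$. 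Carrying out this factorization and reordering cleanly — so that signs and powers of $\lambda_{i}$ land exactly where \eqref{q[n]} says — is the combinatorial heart of the argument.

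It remains to confirm that $T_{n}$ really is a Darboux transformation and that \eqref{reduction} is inherited. For the first point, put $\widehat M:=(T_{n,x}+T_{n}M)T_{n}^{-1}$; this is rational in $\lambda$ with at worst simple poles at the zeros $\pm\lambda_{i}$ of $\det T_{n}$. At $\lambda_{i}$ the residue vanishes: the columns of $\mathrm{adj}\,T_{n}(\lambda_{i})$ lie in $\ker T_{n}(\lambda_{i})=\langle\Psi_{i}\rangle$, so $\mathrm{adj}\,T_{n}(\lambda_{i})=\Psi_{i}w_{i}^{\mathrm{T}}$ for some $w_{i}\in\mathbb{C}^{2}$, while differentiating $T_{n}(\lambda_{i})\Psi_{i}=0$ in $x$ and using $\Psi_{i,x}=M(\lambda_{i})\Psi_{i}$ gives $(T_{n,x}+T_{n}M)(\lambda_{i})\Psi_{i}=0$; the pole at $-\lambda_{i}$ vanishes by the parity symmetry. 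Hence $\widehat M$ is a polynomial, and a degree count together with $M(0)=0$ and the normalization forces $\widehat M=J\lambda^{2}+Q^{[n]}\lambda$ with $Q^{[n]}$ off-diagonal, i.e.\ $(q^{[n]},r^{[n]})$ solve the KN system \eqref{KN}; the $t$-part is identical with $N$ in place of $M$, using the zero-curvature compatibility. Finally, under $\lambda_{2l}=-\lambda_{2l-1}^{*}$ and $\Psi_{2l}=(g_{2l-1}^{*},f_{2l-1}^{*})^{\mathrm{T}}$ one computes, by conjugating each determinant and permuting the paired rows $2l-1\leftrightarrow2l$, that $\Omega_{21}=\pm\Omega_{11}^{*}$ and $\Omega_{22}=\mp\Omega_{12}^{*}$ with matching signs, whence \eqref{q[n]} and $r=-q^{*}$ give $r^{[n]}=-(q^{[n]})^{*}$. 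The main obstacle is the bookkeeping of the middle step — fixing the precise normalization and parity pattern of $T_{n}$ so that the two Cramer systems reproduce the columns of $\Omega_{11},\Omega_{21}$ and the coefficients assemble into the squared and mixed-product combinations of \eqref{q[n]} — together with the residue computation ensuring $\widehat M$ has no poles; the remaining verifications are routine.
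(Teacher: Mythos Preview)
The paper does not prove this lemma at all: it is quoted verbatim from \cite{JPA44305203}, and the only remark the authors add is the one-line ``It is trivial to check $r^{[2k]}=-{q^{[2k]}}^{*}$.'' So there is no ``paper's own proof'' to compare against; your proposal goes well beyond what the present paper supplies.

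That said, your outline is the standard and correct route to such determinant formulae for the KN system, and it matches in spirit what \cite{JPA44305203} does: build an $n$-fold Darboux matrix with the parity structure dictated by the $\lambda\mapsto-\lambda$ symmetry, impose the kernel conditions $T_n(\lambda_i)\Psi_i=0$, solve the resulting linear systems by Cramer's rule to read off the top coefficients, and verify via a residue argument that the transformed Lax operators have the required polynomial form. Your treatment of the reduction --- pairing $\lambda_{2l}=-\lambda_{2l-1}^{*}$, $\Psi_{2l}=(g_{2l-1}^{*},f_{2l-1}^{*})^{\mathrm T}$, then conjugating and row-swapping to obtain $\Omega_{21}=\pm\Omega_{11}^{*}$, $\Omega_{22}=\mp\Omega_{12}^{*}$ --- is exactly the computation behind the paper's ``trivial to check'' remark. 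The only place where real care is needed, as you correctly flag, is the bookkeeping that turns the parity-split Cramer solutions into the specific squared and mixed products $\Omega_{11}^{2}/\Omega_{21}^{2}$ and $\Omega_{11}\Omega_{12}/\Omega_{21}^{2}$; this is a genuine (if mechanical) calculation rather than an obstacle, and your plan handles it.
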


It is trivial to check $r^{[2k]}=-{q^{[2k]}}^*$.\\

\noindent{\bf 2.2 Solutions from vacuum}\\

Let us consider the trivial case. When $q=r=0$,  the following $\Psi$ is an eigenfunction for $\lambda$,
\begin{equation}\label{fun1}
\Psi=\left(\begin{array}{c}
f\\
g
\end{array}
\right),\quad
f={\rm exp}({\rm i}(\lambda^{2}x+2\lambda^{4}t)),\quad
g={\rm exp}(-{\rm i}(\lambda^{2}x+2\lambda^{4}t)).
\end{equation}\\

%%%%%%%%%%%%%%%%%%%%%%%%%%%%%%%%%%%%%%%%%%%%%%%%%%%%%%%%%%%%%%%%%%%%%%%%%%%%%%%%%%%%%%%%%%%%%%

By applying the above formulae \eqref{q[n]}, we can get $N$-soliton solution of the DNLS equation from vacuum. To get new kinds of solutions, we set the eigenvalues share the same value, i.e., iterating the Darboux transformation at the same eigenvalue. However, the formulae \eqref{q[n]} will be ineffective in this case. Next, we will use the Taylor expansion to generate the Darboux
transformation and get the formula of $q^{[n]}$ at the same eigenvalue
as we have done for the case of the NLS equation\cite{arxiv12093742}.

%%%%%%%%%%%%%%%%%%%%%%%%%%%%%%%%%%%%%%%%%%%%%%%%%%%%%%%%%%%%%%%%

At first, we define new functions $\Psi[i,j,k]$ for a general solution $\Psi=\Psi(\lambda)$ corresponding to $\lambda$ as following:
\begin{equation}
  \lambda^{j}\Psi=\Psi[i,j,0]+\Psi[i,j,1]\epsilon+\Psi[i,j,2]\epsilon^2+\cdots+\Psi[i,j,k]\epsilon^k+\cdots,
\end{equation}
with
$$\Psi[i,j,k]=\frac{1}{k!}\frac{\partial^k(\lambda_i^j\Psi(\lambda_i))}{\partial\lambda_i^k}.$$
In particular
$$\Psi[1,1,0]=\lambda_1\Psi(\lambda_1),\qquad
\Psi[i,j,0]=\lambda_i^j\Psi(\lambda_i).
$$
\begin{proposition}
{ Let $\lambda_1=\alpha_1+{\rm i}\beta_1$, $\lambda_2=-\lambda_1^*$, then the following formula is the $n$-th($n=2k$) solution of the DNLS equation generated at the same eigenvalue.
\begin{equation}\label{nposition}
 q^{[n]}=2{\rm i}\frac{\delta_{11}\delta_{12}}{\delta_{21}^2}
 \end{equation}
where
\begin{equation}
\delta_{11}=\begin{vmatrix}
g[1,n-1,0]&f[1,n-2,0]&g[1,n-3,0]&\cdots&g[1,1,0]&f[1,0,0]\\
g[2,n-1,0]&f[2,n-2,0]&g[2,n-3,0]&\cdots&g[2,1,0]&f[2,0,0]\\
g[1,n-1,1]&f[1,n-2,1]&g[1,n-3,1]&\cdots&g[1,1,1]&f[1,0,1]\\
g[2,n-1,1]&f[2,n-2,1]&g[2,n-3,1]&\cdots&g[2,1,1]&f[2,0,1]\\
\vdots&\vdots&\vdots&\vdots&\vdots&\vdots\\
g[1,n-1,k-1]&f[1,n-2,k-1]&g[1,n-3,k-1]&\cdots&g[1,1,k-1]&f[1,0,k-1]\\
g[2,n-1,k-1]&f[2,n-2,k-1]&g[2,n-3,k-1]&\cdots&g[2,1,k-1]&f[2,0,k-1]\nonumber
\end{vmatrix},
\end{equation}
\begin{equation}
\delta_{12}=\begin{vmatrix}
f[1,n,0]&f[1,n-2,0]&g[1,n-3,0]&\cdots&g[1,1,0]&f[1,0,0]\\
f[2,n,0]&f[2,n-2,0]&g[2,n-3,0]&\cdots&g[2,1,0]&f[2,0,0]\\
f[1,n,1]&f[1,n-2,1]&g[1,n-3,1]&\cdots&g[1,1,1]&f[1,0,1]\\
f[2,n,1]&f[2,n-2,1]&g[2,n-3,1]&\cdots&g[2,1,1]&f[2,0,1]\\
\vdots&\vdots&\vdots&\vdots&\vdots&\vdots\\
f[1,n,k-1]&f[1,n-2,k-1]&g[1,n-3,k-1]&\cdots&g[1,1,k-1]&f[1,0,k-1]\\
f[2,n,k-1]&f[2,n-2,k-1]&g[2,n-3,k-1]&\cdots&g[2,1,k-1]&f[2,0,k-1]\\\nonumber
\end{vmatrix},
\end{equation}
\begin{equation}
\delta_{21}=\begin{vmatrix}
f[1,n-1,0]&g[1,n-2,0]&f[1,n-3,0]&\cdots&f[1,1,0]&g[1,0,0]\\
f[2,n-1,0]&g[2,n-2,0]&f[2,n-3,0]&\cdots&f[2,1,0]&g[2,0,0]\\
f[1,n-1,1]&g[1,n-2,1]&f[1,n-3,1]&\cdots&f[1,1,1]&g[1,0,1]\\
f[2,n-1,1]&g[2,n-2,1]&f[2,n-3,1]&\cdots&f[2,1,1]&g[2,0,1]\\
\vdots&\vdots&\vdots&\vdots&\vdots&\vdots\\
f[1,n-1,k-1]&g[1,n-2,k-1]&f[1,n-3,k-1]&\cdots&f[1,1,k-1]&g[1,0,k-1]\\
f[2,n-1,k-1]&g[2,n-2,k-1]&f[2,n-3,k-1]&\cdots&f[2,1,k-1]&g[2,0,k-1]
\end{vmatrix}.\nonumber
\end{equation}}
\end{proposition}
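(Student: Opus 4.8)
The plan is to obtain \eqref{nposition} as a coalescence (confluent) limit of the $n$-soliton formula \eqref{q[n]}. Fix $n=2k$, choose $k$ distinct real numbers $c_1,\dots,c_k$ and a small real parameter $\epsilon$, and set $\lambda_{2l-1}=\lambda_1+c_l\epsilon$, $\lambda_{2l}=-\lambda_{2l-1}^*$ for $l=1,\dots,k$, with eigenfunctions $\Psi_{2l-1}=\Psi(\lambda_{2l-1})$ built from \eqref{fun1} and $\Psi_{2l}=(g_{2l-1}^*,f_{2l-1}^*)^{T}$ as required in Lemma~1. For every sufficiently small $\epsilon\neq0$ these $2k$ eigenvalues are pairwise distinct, so Lemma~1 applies: $q^{[n]}(\epsilon)$ solves the KN system, and since $\lambda_{2l}=-\lambda_{2l-1}^*$ it also satisfies the reduction $r^{[n]}=-(q^{[n]})^*$, hence solves the DNLS equation. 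Because the seed is the vacuum $q=r=0$, the first term of \eqref{q[n]} drops out and $q^{[n]}(\epsilon)=2{\rm i}\,\Omega_{11}(\epsilon)\Omega_{12}(\epsilon)/\Omega_{21}(\epsilon)^2$; it then remains to evaluate $\lim_{\epsilon\to0}q^{[n]}(\epsilon)$.

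The next step is to expand the determinants in $\epsilon$. Since $\Psi(\lambda)$ is entire in $\lambda$, each row of $\Omega_{11}$ is a real-analytic vector function of $\epsilon$, and by the very definition of $\Psi[i,j,m]$ the row attached to $\lambda_{2l-1}$ has $m$-th Taylor coefficient $c_l^{m}\,(g[1,n-1,m],f[1,n-2,m],\dots)$ while the row attached to $\lambda_{2l}$ (which equals $\lambda_2+(-c_l)\epsilon$) has $m$-th Taylor coefficient $(-c_l)^{m}\,(g[2,n-1,m],f[2,n-2,m],\dots)$. Expanding $\Omega_{11}(\epsilon)$ by multilinearity over the choice of one Taylor order per row, a term survives only if the orders used within the odd-index family and within the even-index family are each pairwise distinct; hence the minimal total power of $\epsilon$ is $2\bigl(0+1+\dots+(k-1)\bigr)=k(k-1)$, realized exactly when both families use the orders $0,1,\dots,k-1$. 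Collecting that leading term and sorting the rows into the interleaved pattern (odd order $0$, even order $0$, odd order $1$, $\dots$), the $c_l$-dependence factors out as a product of two Vandermonde determinants $V(c_1,\dots,c_k)\,V(-c_1,\dots,-c_k)$ with $V(x_1,\dots,x_k):=\prod_{1\le i<j\le k}(x_j-x_i)$, and the remaining determinant is precisely $\delta_{11}$. The identical bookkeeping gives $\Omega_{12}(\epsilon)=\epsilon^{k(k-1)}V(c)V(-c)\,\delta_{12}+O(\epsilon^{k(k-1)+1})$ and $\Omega_{21}(\epsilon)=\epsilon^{k(k-1)}V(c)V(-c)\,\delta_{21}+O(\epsilon^{k(k-1)+1})$. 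Dividing, the common factor $\epsilon^{k(k-1)}V(c)V(-c)$ cancels, so
\[
q^{[n]}=\lim_{\epsilon\to0}2{\rm i}\,\frac{\Omega_{11}(\epsilon)\,\Omega_{12}(\epsilon)}{\Omega_{21}(\epsilon)^2}=2{\rm i}\,\frac{\delta_{11}\delta_{12}}{\delta_{21}^2},
\]
which is \eqref{nposition}. Finally, since $\Omega_{ab}(\epsilon)/\epsilon^{k(k-1)}$ extends analytically across $\epsilon=0$ and $\delta_{21}\not\equiv0$ for the eigenfunctions \eqref{fun1}, the convergence $q^{[n]}(\epsilon)\to q^{[n]}$ is smooth and locally uniform together with all $x$- and $t$-derivatives; as each $q^{[n]}(\epsilon)$ solves the DNLS equation, so does the limit.

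The delicate point is the middle step: one must check that the leading power $\epsilon^{k(k-1)}$ is the \emph{same} for $\Omega_{11}$, $\Omega_{12}$ and $\Omega_{21}$, that its coefficient is exactly the displayed $\delta$-determinant with the stated row order, and — most carefully — that the even-index rows genuinely produce the conjugate data $g[2,j,m],f[2,j,m]$, so that the two Vandermonde prefactors are truly common and cancel; one also needs $\delta_{21}\not\equiv0$ to rule out a further order drop. Everything else is the routine confluent-limit computation, entirely parallel to the one carried out for the NLS equation in \cite{arxiv12093742}.
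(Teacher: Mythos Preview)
Your argument is correct and follows the same confluent-limit route as the paper: Taylor-expand the rows of the $\Omega$-determinants about $\lambda_1,\lambda_2$ and show that the leading $\epsilon$-coefficients are common scalars times the $\delta$-determinants, which then cancel in the ratio. The paper's own proof is much terser---it simply lists the truncated expansions row by row and defers the reduction to ``simple calculation''---whereas you make the extraction explicit via distinct perturbations $c_l\epsilon$ and a Vandermonde factorization; this is a cleaner bookkeeping device, not a different idea.
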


%%%%%%%%%%%%%%%%%%%%%%%%%%%%%%%%%%%%%%%%%%%%%%%%%%%%%%%%%%%%%%%%%%%%%%%%%%%%%%%

\begin{proof}
For the entries in the first column of $\Omega_{11}$ \eqref{q[n]},
\begin{equation*}
\begin{aligned}
\lambda_1^{n-1}g_1&=g[1,n-1,0],\\
\lambda_2^{n-1}g_2&=g[2,n-1,0],\\
\lambda_3^{n-1}g_3&=g[1,n-1,0]+g[1,n-1,1]\epsilon,\\
\lambda_4^{n-1}g_4&=g[2,n-1,0]+g[2,n-1,1]\epsilon,\\
&\vdots\\
\lambda_{n-1}^{n-1}g_{n-1}&=g[1,n-1,0]+g[1,n-1,1]\epsilon+\cdots+g[1,n-1,k-1]\epsilon^{k-1},\\
\lambda_n^{n-1}g_n&=g[2,n-1,0]+g[2,n-1,1]\epsilon+\cdots+g[2,n-1,k-1]\epsilon^{k-1}.\\
\end{aligned}
%\end{gathered}
\end{equation*}
Taking the similar procedure to the other entries in $\Omega_{11}$, $\Omega_{12}$, and $\Omega_{21}$. Finally, the $q^{[n]}$ can be obtained through simple calculation.
\end{proof}
%%%%%%%%%%%%%%%%%%%%%%%%%%%%%%%%%%%%%%%%%%%%%%%%%%%%%%%%%%%%%%%%%%%%%%%%%%%%%%%%%%%%%%%%%%%%%%%%%%%%
For example, when $n=4$, \begin{equation}\label{q[41]}
  q^{[4]}=\frac{\delta_{11}^2}{\delta_{21}^2}q+2{\rm i}\frac{\delta_{11}\delta_{12}}{\delta_{21}^2}=2{\rm i}\frac{\delta_{11}\delta_{12}}{\delta_{21}^2},
\end{equation}
where
$$
\delta_{11}=\begin{vmatrix}
g[1,3,0]&f[1,2,0]&g[1,1,0]&f[1,0,0]\\
g[2,3,0]&f[2,2,0]&g[2,1,0]&f[2,0,0]\\
g[1,3,1]&f[1,2,1]&g[1,1,1]&f[1,0,1]\\
g[2,3,1]&f[2,2,1]&g[2,1,1]&f[2,0,1]
\end{vmatrix},
$$
$$
\delta_{12}=\begin{vmatrix}
f[1,4,0]&f[1,2,0]&g[1,1,0]&f[1,0,0]\\
f[2,4,0]&f[2,2,0]&g[2,1,0]&f[2,0,0]\\
f[1,4,1]&f[1,2,1]&g[1,1,1]&f[1,0,1]\\
f[2,4,1]&f[2,2,1]&g[2,1,1]&f[2,0,1]
\end{vmatrix},
$$
$$
\delta_{21}=\begin{vmatrix}
f[1,3,0]&g[1,2,0]&f[1,1,0]&g[1,0,0]\\
f[2,3,0]&g[2,2,0]&f[2,1,0]&g[2,0,0]\\
f[1,3,1]&g[1,2,1]&f[1,1,1]&g[1,0,1]\\
f[2,3,1]&g[2,2,1]&f[2,1,1]&g[2,0,1]
\end{vmatrix}.
$$
%%%%%%%%%%%%%%%%%%%%%%%%%%%%%%%%%%%%%%%%%%%%%%%%%%%%%%%%%%%%%%%%%%%%%%%%%%%%%%%%%%%%%%%%%%%%%%%%%%%%
Substituting the eigenfunction \eqref{fun1} into the formula \eqref{q[41]}, we obtain the positon solution
\begin{equation}\label{posoliton1}
  q_{positon}^{[4]}=\frac{L_1^*L_2}{L_1^2},
\end{equation}
where
\begin{equation}\nonumber
\begin{split}
L_1=&G_1-{\rm i}G_2,\\
L_2=&-16{\rm i}\alpha_1\beta_1(\cos(F_2)+{\rm i}\sin(F_2))\left(\left(\beta_1^3+4{\rm i}\alpha_1^4\beta_1x+4{\rm i}\alpha_1^2\beta_1^3x
     -32{\rm i}\alpha_1^4\beta_1^3t+16{\rm i}\alpha_1^6\beta_1t-48{\rm i}\alpha_1^2 \right.\right.\\
    &\left.\times\beta_1^5t\right)\sinh(F_1)-\left.({\rm i}\alpha_1^3+4\alpha_1^3\beta_1^2x+32\alpha_1^3\beta_1^4t+48\alpha_1^5\beta_1^2t
    +4\beta_1^4\alpha_1x-16\beta_1^6\alpha_1t)\cosh(F_1)\right),\\
G_1=&{\alpha_{{1}}}^{4}+{\beta_{{1}}}^{4}+256\,{\alpha_{{1}}}^{8}{\beta_{{
     1}}}^{2}xt-256\,{\alpha_{{1}}}^{4}{\beta_{{1}}}^{6}xt+256\,{\alpha_{{1}
     }}^{6}{\beta_{{1}}}^{4}xt-256\,{\alpha_{{1}}}^{2}{\beta_{{1}}}^{8}xt+
     512\,{\alpha_{{1}}}^{2}{\beta_{{1}}}^{10}{t}^{2}\\
    &+32\,{\alpha_{{1}}}^{2
     }{\beta_{{1}}}^{6}{x}^{2}+32\,{\alpha_{{1}}}^{6}{\beta_{{1}}}^{2}{x}^{
     2}+512\,{\alpha_{{1}}}^{10}{\beta_{{1}}}^{2}{t}^{2}+2048\,{\alpha_{{1}
     }}^{8}{\beta_{{1}}}^{4}{t}^{2}+3072\,{\alpha_{{1}}}^{6}{\beta_{{1}}}^{
     6}{t}^{2}\\
    &+64\,{\alpha_{{1}}}^{4}{\beta_{{1}}}^{4}{x}^{2}+2048\,{\alpha
     _{{1}}}^{4}{\beta_{{1}}}^{8}{t}^{2}+({\alpha_{{1}}}^{4}-{\beta_{{1}}}^{4})\cosh \left( 2F_1 \right), \\
G_2=&-16\,{\alpha_{{1}}}^{2}{\beta_{{1}}}^{4}x-384\,{\alpha_{{1}}}^{4}{
     \beta_{{1}}}^{4}t+64\,{\alpha_{{1}}}^{2}{\beta_{{1}}}^{6}t+16\,{\alpha
     _{{1}}}^{4}{\beta_{{1}}}^{2}x+(2\,{\alpha_{{1}}}^{3}\beta_{{1}}+2\,\alpha_{{1}}{\beta_{{1}}}^{3})\sinh
     \left( 2F_1  \right) \\
     &+64\,{\alpha_{{1}}}^{6}{\beta_{{1}
     }}^{2}t,\\
F_1=&4\,\alpha_{{1}}\beta_{{1}} \left( 4\,t{\alpha_{{1}}}^{2}-4\,t{\beta_{{1}}}^{2}+x \right),\\
F_2=&2\,{\alpha_{{1}}}^{2}x+4\,{\alpha_{{1}}}^{4}t-24\,t{\alpha_{{1}}}^{2}{\beta_{{1}}}^{2}-2\,
     {\beta_{{1}}}^{2}x+4\,{\beta_{{1}}}^{4}t.
  \end{split}
\end{equation}

%%%%%%%%%%%%%%%%%%%%%%%%%%%%%%%%%%%%%%%%%%%%%%%%%%%%%%%%%%%%%%%%%%%%%%%%%%%%%%%%%%%%%%%%%%%%%%%%%%%%%%%%%%%%%%%%%%
when $x\rightarrow\pm\infty$, $|q^{[4]}|=0$, when $x=0$, $t=0$, $|q^{[4]}|^2=64\beta_1^2$. A simple analysis shows that it possesses phase shift compared with $2$-rd soliton when $t\rightarrow\pm\infty$. After taking values as $\alpha_1=0.5$, $\beta_1=0.5$, the evolution of positon solution of the DNLS equation is shown in Fig. \ref{positon}.

%%%%%%%%%%%%%%%%%%%%%%%%%%%%%%%%%%%%%%%%%%%%%%%%%%%%%%%%%%%%%%%%%%%%%%%%%%%%%%% positon %%%%%%%%%%%%%%%%

%%%%%%%%%%%%%%%%%%%%%%%%%%%%%%%%%%%%%%%%%%%%%%%%%%%%%%%%%%%%%%%%%%%%%%%%%%%%%%%%%%%%%%%%%%%%%%%%

Next, if we set $\alpha_1\rightarrow0$ in above procedure, we will get the  the second order rational traveling solution. With these parameters, we find that the general
solution can be given in the same form as \eqref{posoliton1}, but with the values for $L_1$ and $L_2$ written by

%%%%%%%%%%%%%%%%%%%%%%%%%%%%%%%%%%%%%%%%%%%%%%%%%%%%%%%%%%%%%%%%% rational solution %%%%%%%%%%%%%%%%%%
\begin{equation}\label{posoliton12}
  \begin{split}
    L_1=&G_1-{\rm i}G_2,\\
    L_2=&-8{\rm i}\beta_1{\rm exp}(2{\rm i}\beta_1^2(2\beta_1^2t-x))(-3{\rm i}-12\beta_1^2x-48\beta_1^4t+48{\rm i}
         \beta_1^4x^2+2304{\rm i}\beta_1^8t^2-768{\rm i}\beta_1^6xt\\
        &-64\beta_1^6x^3+4096\beta_1^{12}t^3-3072\beta_1^{10}xt^2+768\beta_1^8tx^2),\\
    G_1=&3+4096\,{\beta_{{1}}}^{10}t{x}^{3}-24576\,{\beta_{{1}}}^{12}{t}^{2}{x}
         ^{2}+65536\,x{\beta_{{1}}}^{14}{t}^{3}-768\,{\beta_{{1}}}^{6}xt+96\,{
         \beta_{{1}}}^{4}{x}^{2}-65536\,{\beta_{{1}}}^{16}{t}^{4}\\
         &+4608\,{\beta_
         {{1}}}^{8}{t}^{2}-256\,{\beta_{{1}}}^{8}{x}^{4},\\
    G_2=&576\,{\beta_{{1}}}^{4}t-48\,{\beta_{{1}}}^{2}x+3072\,{\beta_{{1}}}^{8}
         t{x}^{2}-256\,{\beta_{{1}}}^{6}{x}^{3}+16384\,{\beta_{{1}}}^{12}{t}^{3
         }-12288\,{\beta_{{1}}}^{10}x{t}^{2}.
  \end{split}\nonumber
\end{equation}

%%%%%%%%%%%%%%%%%%%%%%%%%%%%%%%%%%%%%%%%%%%%%%%%%%%%%%%%%%%%%%%%%%%%%%%%%%%%%%%%%%%%%%%

The dynamics of rational travelling solution of the DNLS equation with $\beta_1=0.3$ are shown in Fig. \ref{ration}. Actually, it represents the interaction of two rational traveling solitons, and shares same properties with positon. \\

%%%%%%%%%%%%%%%%%%%%%%%%%%%%%%%%%%%%%%%%%%%%%%%%%%%%%%%%%%%%%%%%%%%%%%%%%%%%%%%%%%%%%%%%%%%%%%%%%%%%

%%%%%%%%%%%%%%%%%%%%%%%%%%%%%%%%%%%%%%%%%%%%%%%%%%%%%%%%%%%%%%%%%%%%%%%%%%%%%%%%%%%%%%%%%%%%%%%%%%%%

%%%%%%%%%%%%%%%%%%%%%%%%%%%%%%%%%%%%%%%%%%%%%%%%%%%%%%%%%%%%%%%%%%%%%%%%%%%%%%%%%%%%%%%%%%%%%%%%%%%%%%%%%%

\noindent
{\bf 2.3. Solutions from periodic solution}\\

%%%%%%%%%%%%%%%%%%%%%%%%%%%%%%%%%%%%%%%%%%%%%%%%%%%%%%%%%%%%%%%%%%%%%%%%%%%%%%%%%%%%%%%%%%%%%%%%%%%%%%%

Here, we will apply the method discussed above to generate solutions from periodic seed solution with the same eigenvalue. Moreover, we generate a hierarchy of rogue wave solutions.

%%%%%%%%%%%%%%%%%%%%%%%%%%%%%%%%%%%%%%%%%%%%%%%%%%%%%%%%%%%%%%%%%%%%%%%%%%%%%%%%%%%%%%%%%%%%%%%%%%%%%%%%%

We start with a general periodic solution
\begin{equation}
  \label{seed}
  q=c e^{({\rm i}(ax+bt))},\quad
  b=a(-c^2+a),\quad a,c\in\mathbb{R}.
\end{equation}
Substituting \eqref{seed} into the spectral problem \eqref{lax}, we obtain the eigenfunction
$\Psi=\left(
\begin{array}{c}
f\\
g
\end{array}
\right)$
corresponding to the eigenvalue $\lambda$ via applying the method of separation of variables and superposition principle.

%%%%%%%%%%%%%%%%%%%%%%%%%%%%%%%%%%%%%%%%%%%%%%%%%%%%%%%%%%%%%%%%%%%%%%%%%%%%%%%%%%%%%%%%%%%%%%

\begin{equation}\label{eigenfun}
\left(
\begin{array}{c}
f(x,t,\lambda)\\
g(x,t,\lambda)
\end{array}
\right)=\left(
\begin{array}{c}
D_1\omega^1_{11}(x,t,\lambda)+D_2\omega^2_{11}(x,t,\lambda)+D_1{\omega^1_{12}}^\ast(x,t,-\lambda^\ast)+D_2{\omega^2_{12}}^\ast(x,t,-\lambda^\ast)\\
D_1\omega^1_{12}(x,t,\lambda)+D_2\omega^2_{12}(x,t,\lambda)+D_1{\omega^1_{12}}^\ast(x,t,-\lambda^\ast)+D_2{\omega^2_{12}}^\ast(x,t,-\lambda^\ast)
\end{array}
\right).
\end{equation}

%%%%%%%%%%%%%%%%%%%%%%%%%%%%%%%%%%%%%%%%%%%%%%%%%%%%%%%%%%%%%%%%%%%%%%%%%%%%%%%%%%%%%%%%

where
\begin{equation}\label{D1}
\left\{
\begin{aligned}
D_1&=1,\\
D_2&=1.
\end{aligned}
\right.
\end{equation}

\begin{equation}
\left(
\begin{array}{c}
\omega^1_{11}(x,t,\lambda)\\
\omega^1_{12}(x,t,\lambda)
\end{array}
\right)=\left(
\begin{array}{c}
\exp({c_1(x+2\lambda^2t+(-c^2+a)t)+\dfrac{1}{2}({\rm i}(ax+bt))})\\
\dfrac{{\rm i}a-2{\rm i}\lambda^2+2c_1}{2{\lambda}c}\exp({c_1(x+2\lambda^2t+bt)-\dfrac{1}{2}({\rm i}(ax+bt))})
\end{array}
\right),\nonumber
\end{equation}

%%%%%%%%%%%%%%%%%%%%%%%%%%%%%%%%%%%%%%%%%%%%%%%%%%%%%%%%%%%%%%%%%%%%%%%%%%%%%%%%%%%%%%%%%%%%%%%%%%

\begin{equation}
\left(\begin{array}{c}
\omega^2_{11}(x,t,\lambda)\\
\omega^2_{12}(x,t,\lambda)
\end{array}
\right)=\left(\begin{array}{c}
\exp(-c_1(x+2\lambda^2t+(-c^2+a)t)+\dfrac{1}{2}({\rm i}(ax+bt)))\\
\dfrac{{\rm i}a-2{\rm i}\lambda^2-2c_1}{2{\lambda}c}\exp(-c_1(x+2\lambda^2t+bt)-\dfrac{1}{2}({\rm i}(ax+bt)))
\end{array}
\right),\nonumber
\end{equation}

%%%%%%%%%%%%%%%%%%%%%%%%%%%%%%%%%%%%%%%%%%%%%%%%%%%%%%%%%%%%%%%%%%%%%%%%%%%%%%%%%%%%%%%%%%%%%%%%%%%%%%

\begin{equation}
\omega^1(x,t,\lambda)=\left(\begin{array}{c}
\omega^1_{11}(x,t,\lambda)\\
\omega^1_{12}(x,t,\lambda)\nonumber
\end{array}
\right),
\quad
\omega^2(x,t,\lambda)=\left(\begin{array}{c}
\omega^2_{11}(x,t,\lambda)\\
\omega^2_{12}(x,t,\lambda)\nonumber
\end{array}
\right),
\end{equation}
\begin{equation}
c_1=\dfrac{\sqrt{-a^2-4\lambda^4-4\lambda^2(c^2-a)}}{2}.\nonumber
\end{equation}

%%%%%%%%%%%%%%%%%%%%%%%%%%%%%%%%%%%%%%%%%%%%%%%%%%%%%%%%%%%%%%%%%%%%%%%%%%%%%%%%%%%%%%%%%%%%%%%%%%%%%%

To get the rogue wave solutions, the value of $n$ in formula (\ref{q[n]}) must be even. When $n=2k$, we obtain the new expression of $q^{[n]}$.
\begin{proposition}
  Assuming $\lambda_1=\frac{1}{2}\sqrt {-{c}^{2}+2\,a}-\frac{1}{2}\,{\rm i}c$, $\lambda_2=-\lambda_1^*$, then $q^{[n]}$ given by following formula is the $k$-th rogue wave solution  for the DNLS equation.
  \begin{equation}\label{RW}
q^{[n]}=\frac{\delta_{11}^2}{\delta_{21}^2}q+2{\rm i}\frac{\delta_{11}\delta_{12}}{\delta_{21}^2};
\end{equation}

%%%%%%%%%%%%%%%%%%%%%%%%%%%%%%%%%%%%%%%%%%%%%%%%%%%%%%%%%%%%%%%%%%%%%%%%%%%%%%

where
\begin{equation}
\delta_{11}=\begin{vmatrix}
g[1,n-1,1]&f[1,n-2,1]&g[1,n-3,1]&\cdots&g[1,1,1]&f[1,0,1]\\
g[2,n-1,1]&f[2,n-2,1]&g[2,n-3,1]&\cdots&g[2,1,1]&f[2,0,1]\\
g[1,n-1,2]&f[1,n-2,2]&g[1,n-3,2]&\cdots&g[1,1,2]&f[1,0,2]\\
g[2,n-1,2]&f[2,n-2,2]&g[2,n-3,2]&\cdots&g[2,1,2]&f[2,0,2]\\
\vdots&\vdots&\vdots&\vdots&\vdots&\vdots\\
g[1,n-1,k]&f[1,n-2,k]&g[1,n-3,k]&\cdots&g[1,1,k]&f[1,0,k]\\
g[2,n-1,k]&f[2,n-2,k]&g[2,n-3,k]&\cdots&g[2,1,k]&f[2,0,k]\nonumber
\end{vmatrix},
\end{equation}
\begin{equation}
\delta_{12}=\begin{vmatrix}
f[1,n,1]&f[1,n-2,1]&g[1,n-3,1]&\cdots&g[1,1,1]&f[1,0,1]\\
f[2,n,1]&f[2,n-2,1]&g[2,n-3,1]&\cdots&g[2,1,1]&f[2,0,1]\\
f[1,n,2]&f[1,n-2,2]&g[1,n-3,2]&\cdots&g[1,1,2]&f[1,0,2]\\
f[2,n,2]&f[2,n-2,2]&g[2,n-3,2]&\cdots&g[2,1,2]&f[2,0,2]\\
\vdots&\vdots&\vdots&\vdots&\vdots&\vdots\\
f[1,n,k]&f[1,n-2,k]&g[1,n-3,k]&\cdots&g[1,1,k]&f[1,0,k]\\
f[2,n,k]&f[2,n-2,k]&g[2,n-3,k]&\cdots&g[2,1,k]&f[2,0,k]\nonumber
\end{vmatrix},
\end{equation}
\begin{equation}
\delta_{21}=\begin{vmatrix}
f[1,n-1,1]&g[1,n-2,1]&f[1,n-3,1]&\cdots&f[1,1,1]&g[1,0,1]\\
f[2,n-1,1]&g[2,n-2,1]&f[2,n-3,1]&\cdots&f[2,1,1]&g[2,0,1]\\
f[1,n-1,2]&g[1,n-2,2]&f[1,n-3,2]&\cdots&f[1,1,2]&g[1,0,2]\\
f[2,n-1,2]&g[2,n-2,2]&f[2,n-3,2]&\cdots&f[2,1,2]&g[2,0,2]\\
\vdots&\vdots&\vdots&\vdots&\vdots&\vdots\\
f[1,n-1,k]&g[1,n-2,k]&f[1,n-3,k]&\cdots&f[1,1,k]&g[1,0,k]\\
f[2,n-1,k]&g[2,n-2,k]&f[2,n-3,k]&\cdots&f[2,1,k]&g[2,0,k]
\end{vmatrix}.\nonumber
\end{equation}
\end{proposition}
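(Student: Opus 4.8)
The plan is to deduce this from Lemma~1 by a confluence (coalescence) limit, exactly as in the proof of Proposition~1, and then to read off the rogue-wave structure of the result. Concretely, I would start from the determinant formula \eqref{q[n]} of Lemma~1 with $n=2k$ \emph{distinct} eigenvalues, let the $k$ odd-indexed ones coalesce, $\lambda_{2l-1}=\lambda_1+s_l\epsilon$ with fixed distinct $s_l$ (so that, through the reduction $\lambda_{2l}=-\lambda_{2l-1}^{*}$, the even-indexed ones coalesce to $\lambda_2=-\lambda_1^{*}$), expand every entry of $\Omega_{11},\Omega_{12},\Omega_{21}$ in powers of $\epsilon$ about $\lambda_1$, and perform the Vandermonde-type row reduction on the $k$ coalescing odd rows and the $k$ coalescing even rows. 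In the limit $\epsilon\to0$ each of $\Omega_{11}$, $\Omega_{12}$, $\Omega_{21}$ acquires the \emph{same} scalar prefactor — a fixed power of $\epsilon$ times a product of Vandermonde determinants in the $s_l$ — because that prefactor depends only on how the rows coalesce, not on which entries ($f$ or $g$, which power of $\lambda$) sit in the columns; hence it cancels in the ratios $\Omega_{11}^{2}/\Omega_{21}^{2}$ and $\Omega_{11}\Omega_{12}/\Omega_{21}^{2}$, leaving a ratio of Wronskian-type determinants whose rows are the Taylor coefficients $\Psi[i,j,m]=\tfrac1{m!}\partial_{\lambda_1}^{m}(\lambda_1^{j}\Psi(\lambda_1))$.

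The one genuinely new point is the choice $\lambda_1=\tfrac12\sqrt{-c^{2}+2a}-\tfrac12\,{\rm i}c$: a short computation shows this is precisely the zero of $c_1=\tfrac12\sqrt{-a^{2}-4\lambda^{4}-4\lambda^{2}(c^{2}-a)}$ attached to the seed \eqref{seed}, i.e.\ a branch point of the spectral curve at which the two fundamental solutions $\omega^1$ and $\omega^2$ of \eqref{eigenfun} coincide. Two consequences must be extracted. First, although $c_1$ itself has only a square-root branch point at $\lambda_1$, the particular combination entering $\Psi$ when $D_1=D_2=1$ (see \eqref{D1}) involves only $\omega^1+\omega^2$ and its conjugate counterpart — the parts that are \emph{even} in $c_1$, together with $c_1\sinh$-type terms — so $\Psi(x,t,\lambda)$ is in fact analytic in $\lambda$ at $\lambda_1$ and the coefficients $\Psi[i,j,m]$ are well defined. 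Second, at the branch point the leading coefficient $\Psi(\lambda_1)$ is proportional to a ``seed-reproducing'' eigenvector (it carries no polynomial dependence on $x,t$, only the phase of \eqref{seed}), so the Darboux step built from the order-$0$ datum merely returns the seed. In the coalescence limit this degeneracy forces both numerator and denominator of \eqref{q[n]} to vanish one order faster in $\epsilon$ than in the generic (positon) case; extracting this common factor effectively replaces the order-$0$ rows by the order-$k$ rows, and the surviving determinants are exactly those of \eqref{RW}, with rows of orders $1,2,\dots,k$ rather than $0,1,\dots,k-1$. This is the only structural difference from the positon formula \eqref{nposition} (the term $\delta_{11}^{2}q/\delta_{21}^{2}$ in \eqref{RW} being present simply because the seed is nonzero).

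It then remains to record the two standard closing points. The reduction $r^{[n]}=-(q^{[n]})^{*}$ is inherited from Lemma~1: the pairing $\Psi_{2l}=(g_{2l-1}^{*},f_{2l-1}^{*})^{\mathrm T}$ together with $\lambda_{2l}=-\lambda_{2l-1}^{*}$ is preserved under the coalescence, so \eqref{RW} is a genuine solution of the DNLS equation \eqref{dnls}. Finally, to justify the name ``$k$-th rogue wave'' one checks the WANDT profile: each entry $f[i,j,m],g[i,j,m]$ is a polynomial in $x,t$ times the bounded phase $e^{\pm\frac{{\rm i}}{2}(ax+bt)}$, so $q^{[n]}$ equals a rational function of $(x,t)$ multiplied by $e^{{\rm i}(ax+bt)}$, with $q^{[n]}\to q$ as $x^{2}+t^{2}\to\infty$ and a single dominant peak; this is verified directly for $k=1,2$ and in general by a degree count on the polynomial entries of $\delta_{11},\delta_{12},\delta_{21}$.

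The step I expect to be the main obstacle is the controlled passage to the limit at the branch point — in particular, showing that numerator and denominator of \eqref{q[n]} vanish to exactly the same order in $\epsilon$ once the degeneracy of the order-$0$ datum is taken into account, and identifying the resulting finite, nonzero limit with the stated determinant ratio. The algebra of the individual Taylor coefficients is routine; it is the bookkeeping of these $\epsilon$-powers — which is precisely what forces the shift of row indices from $\{0,\dots,k-1\}$ to $\{1,\dots,k\}$ — and, secondarily, the asymptotic argument establishing the spatio-temporal localization, that require real care.
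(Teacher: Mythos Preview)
Your overall plan---coalescence of the eigenvalues in Lemma~1, Taylor expansion of each entry about $\lambda_1$, Vandermonde-style row reduction, cancellation of the common prefactor in the ratio---is exactly the paper's approach, and you correctly identify $\lambda_1$ as the zero of $c_1$ and correctly argue that the particular combination \eqref{eigenfun} with $D_1=D_2=1$ is analytic there.

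The gap is in your explanation of \emph{why} the row indices shift from $\{0,\dots,k-1\}$ to $\{1,\dots,k\}$. You assert that $\Psi(\lambda_1)$ is a nonzero ``seed-reproducing'' eigenvector, and that this degeneracy forces numerator and denominator of \eqref{q[n]} to vanish one order faster in $\epsilon$. That second inference does not hold: if the order-$0$ rows were nonzero, the individual determinants $\Omega_{11},\Omega_{12},\Omega_{21}$ would \emph{not} vanish at leading order---the Darboux transform returning the seed is a statement about the combination $\Omega_{11}^{2}q/\Omega_{21}^{2}+2{\rm i}\,\Omega_{11}\Omega_{12}/\Omega_{21}^{2}$, not about a common zero of the three determinants---so no common $\epsilon$-factor can be extracted and no index shift follows.

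What actually happens is simpler and is what the paper's (terse) proof silently uses. At $\lambda=\lambda_1$ one has $c_1=0$, hence $\omega^{1}=\omega^{2}$; a short computation gives $\dfrac{{\rm i}a-2{\rm i}\lambda_1^{2}}{2\lambda_1 c}=-1$, and with this the conjugate terms in \eqref{eigenfun} cancel the direct terms exactly, so that $f(\lambda_1)=g(\lambda_1)=0$. Consequently every zeroth coefficient $\Psi[i,j,0]=\lambda_i^{\,j}\Psi(\lambda_i)$ vanishes identically, and the Taylor expansion of each matrix entry starts at order $\epsilon$. This is precisely why the paper writes $\lambda_1^{n-1}g_1=g[1,n-1,1]\epsilon$ with no constant term. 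The row reduction then proceeds exactly as in Proposition~1 but one order higher throughout, producing rows $1,\dots,k$; the extra uniform power of $\epsilon$ is the same in all three determinants and cancels in the ratio. Replace your ``seed-reproducing'' paragraph with the observation $\Psi(\lambda_1)=0$ and your ``main obstacle'' disappears; the rest of your outline then goes through without the delicate order-matching you anticipated.
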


%%%%%%%%%%%%%%%%%%%%%%%%%%%%%%%%%%%%%%%%%%%%%%%%%%%%%%%%%%%%%%%%%%%%%%%%%%%%%%%%%%%%%%

\begin{proof}
For the entries in the first column of $\Omega_{11}$ \eqref{q[n]},
\begin{equation*}
\begin{aligned}
\lambda_1^{n-1}g_1=&g[1,n-1,1]\epsilon,\\
\lambda_2^{n-1}g_2=&g[2,n-1,1]\epsilon,\\
\lambda_3^{n-1}g_3=&g[1,n-1,1]\epsilon+g[1,n-1,2]\epsilon^2,\\
\lambda_4^{n-1}g_4=&g[2,n-1,1]\epsilon+g[2,n-1,2]\epsilon^2,\\
&\vdots\\
\lambda_{n-1}^{n-1}g_{n-1}=&g[1,n-1,1]\epsilon+g[1,n-1,2]\epsilon^2+\cdots+g[1,n-1,k]\epsilon^{k},\\
\lambda_n^{n-1}g_n=&g[2,n-1,1]\epsilon+g[2,n-1,2]\epsilon^2+\cdots+g[2,n-1,k]\epsilon^{k}.\\
\end{aligned}
\end{equation*}
Taking the similar procedure to the other entries in $\Omega_{11}$, $\Omega_{12}$, and $\Omega_{21}$. Finally, the $q^{[n]}$ can be obtained through simple calculation.
\end{proof}

%%%%%%%%%%%%%%%%%%%%%%%%%%%%%%%%%%%%%%%%%%%%%%%%%%%%%%%%%%%%%%%%%%%%%%%%%%%%%%%%%%%%

Here we point out that the formula \eqref{RW} is different with the result of \cite{Guo}.

Next, we present some special examples, which have different structures.

\begin{itemize}
  \item The first order rogue wave solution

  For $n=2$, we get the first order rogue wave solution according the above formulae
  \begin{equation}\label{RW1}
  q_{1_{rw}}^{[2]}=\frac{L_1^*L_2}{L_1^2}c{{\rm exp}^{{\rm i}a \left( x-t{c}^{2}+ta \right) }},
  \end{equation}
  where
  \begin{equation*}
  \begin{split}
  L_1=&e_1+{\rm i}e_2,\quad
  L_2=e_3+{\rm i}e_4,\\
  e_1=&-8\,{t}^{2}{c}^{2}{a}^{3}+12\,{t}^{2}{c}^{4}{a}^{2}-8\,x{c}^{2}t{a}^{2
  }+8\,x{c}^{4}ta-2\,a{x}^{2}{c}^{2}-6\,{t}^{2}{c}^{6}a-1,\\
  e_2=&4\,at{c}^{2}-6\,t{c}^{4}+2\,x{c}^{2},\\
  e_3=&8\,{t}^{2}{c}^{2}{a}^{3}+8\,x{c}^{2}t{a}^{2}-12\,{t}^{2}{c}^{4}{a}^{2}
       +2\,a{x}^{2}{c}^{2}-8\,x{c}^{4}ta+6\,{t}^{2}{c}^{6}a-3, \\
  e_4=&12\,at{c}^{2}-6\,t{c}^{4}+2\,x{c}^{2}.\\
  \end{split}
  \end{equation*}

  %%%%%%%%%%%%%%%%%%%%%%%%%%%%%%%%%%%%%%%%%%%%%%%%%%%%%%%%%%%%%%%%%%%%%%%%%%%%%%%%%%%%%%%%%%%%%%%%%%%%%%%%%%%%%%

  A direct analysis shows
  when $x\rightarrow \infty$, $t\rightarrow \infty$, $q^{[2]}\rightarrow c^2$, the maximum module of $\left|q^{[2]}\right|^2$ is equal to $9c^2$ and
  locates at the origin. A $1$-rogue wave with particular parameters is shown in Fig. \ref{fig.1rw}

  %%%%%%%%%%%%%%%%%%%%%%%%%%%%%%%%%%%%%%%%%%%%%%%%%%%%%%%%%%%%%%%%%%%%%%% fig.1rw %%%%%%%%%%%%%%%%%%%%%%

  %%%%%%%%%%%%%%%%%%%%%%%%%%%%%%%%%%%%%%%%%%%%%%%%%%%%%%%%%%%%%%%%%%%%%%%%%%%%%%%%%%%%%%%%%%%%%%%%%%%%%%%%%%%%%%

  \item High order rogue wave solutions

  Generally, the expression of rogue wave becomes more complicated with increasing $n$ (an analytic expression of the second rogue wave solution is displayed in appendix A).
  Therefore,
  we use numerical simulations to discuss the high order rogue wave for convenient. We set $a=1$ and $c=1$  in the following.

  When $n=4$, we can obtain the second order rogue wave solution of the DNLS equation according to the formula (\ref{RW}).
  \begin{equation}\label{RW2}
  q_{2_{rw}}^{[4]}=\frac{L_1^*L_2}{L_1^2}{\rm exp}({\rm i}x),
  \end{equation}
  with
  \begin{equation*}
  \begin{split}
    L_1=&9+90\,{x}^{2}-12\,{x}^{4}+666\,{t}^{2}+180\,{t}^{4}+8\,{x}^{6}+8\,{t}^
        {6}-54\,{\rm i}x+24\,{\rm i}t{x}^{4}-216\,{x}^{2}{t}^{2}\\
        &-72\,xt+24\,{x}^{4}{t}^{2}+48\,{x}^{3}t+48\,x{t}^{3}+288\,{\rm i}{t}^{2}x+24\,{x}^{2}{t}^{4}
        -24\,{\rm i}{t}^{4}x-24\,{\rm i}{x}^{5}\\
        &-48\,{\rm i}{x}^{3}-48\,{\rm i}{t}^{2}{x}^{3}+198\,{\rm i}t+24\,{\rm i}{t}^{5}+48\,{\rm i}{t}^{3}{x}^{2}+336\,{\rm i}{t}^{3},\\
    L_2=&45-198\,{x}^{2}-60\,{x}^{4}-486\,{t}^{2}-60\,{t}^{4}-48\,{\rm i}{x}^{3}+528
         \,{\rm i}{t}^{3}+72\,{\rm i}{t}^{5}-414\,{\rm i}t+8\,{x}^{6}\\
         &+8\,{t}^{6}+72\,{\rm i}t{x}^{4}+144\,{\rm i}{t}^{3}{x}^{2}+24\,{\rm i}{t}^{4}x+48\,{\rm i}{t}^{2}{x}^{3}-576\,{\rm i}{t}^{2}x-
         288\,{\rm i}{x}^{2}t-90\,{\rm i}x\\&
         -504\,{x}^{2}{t}^{2}
         +504\,xt+24\,{x}^{4}{t}^{2}-
         144\,{x}^{3}t-144\,x{t}^{3}+24\,{\rm i}{x}^{5}+24\,{x}^{2}{t}^{4}.
  \end{split}
  \end{equation*}

  Besides, we succeed in reaching the $7$-order rogue wave by applying the above formula (\ref{RW}).
  Nevertheless, the analytical expression is too tedious, we omit it here. Their  dynamical evolutions are shown in Fig. \ref{fig.nrw}.
  From the figures, we find that the maximum height of the $k$-th order rogue wave
  is $(2k+1)^2$.
  As remarked in \cite{arxiv12093742,PRE84056611}, there
  are $\frac{k(k+1)}{2}-1$ local maxima on each side of the $x=0$ line for $k$-order rogue wave of the NLS equation.
  However, there are only $k$ small peak on the each side of the $t=0$ line in Fig. \ref{fig.nrw} for $k$-order rogue wave solution of the DNLS equation. We may make a conjecture here that the central peak of rogue wave of the DNLS  equation contain more energy than the NLS.
  \end{itemize}

%%%%%%%%%%%%%%%%%%%%%%%%%%%%%%%%%%%%%%%%%%%%% fig.nrw%%%%%%%%%%%%%%%%%%%%%%%%%%%%%%%%%%%%

%%%%%%%%%%%%%%%%%%%%%%%%%%%%%%%%%%%%%%%%%%%%%%%%%%%%%%%%%%%%%%%%%%%%%%%%%%%%%%%%%%%%%%%%%%%%%%%%%%%%%%%%%%%%%%%%%%%%%%%%%

 \section{{\bf The dynamics of rogue wave with parameters}}

 In above section, we assume that $D_1=1$ and $D_2=1$ \eqref{D1}. Actually, both $D_1$ and $D_2$ can be assumed as some new constants on the premise that \eqref{eigenfun} is the eigenfunction of spectral system \eqref{lax}. In this section, we set $D_1$ and $D_2$  as following:
 \begin{equation}\label{D3}
 \left\{
    \begin{aligned}
      D_1=&\exp\left(-{\rm i}c_1(S_0+S_1\epsilon+S_2\epsilon^2+S_3\epsilon^3+\cdots+S_{k-1}\epsilon^{k-1})\right),\\
      D_2=&\exp\left({\rm i}c_1(S_0+S_1\epsilon+S_2\epsilon^2+S_3\epsilon^3+\cdots+S_{k-1}\epsilon^{k-1})\right).
    \end{aligned}
    \right.
 \end{equation}
 Here $S_0,S_1,S_2,S_3,\cdots,S_{k-1}\in \mathbb{C}$ . Although the terms with nonzero
 orders of $\epsilon$ in eq.(\ref{D3}) vanish in the $\epsilon\rightarrow 0$ limit, their
 coefficients $S_i$($i=0,1,2,\cdots,k-1$) have a crucial effect on the structure of higher order rogue wave.
 Depending on these parameters,
 we can obtain a variety of solutions of the same order. Finding these relative positions in terms of $S_i$ ($i=0,1,2,\cdots,k-1$)
 is the subject of our analysis below.\\

 \noindent
 {\bf 3.1. Solutions with one parameter}\\

 In this subsection, we will discuss the dynamics of high-order rogue wave in detail. In the case that only one of the parameters is nonzero, four typical models are obtained: \emph{fundamental pattern}, \emph{triangular} structure, \emph{ring} structure, and \emph{modified-triangular} structure. Moreover, the \emph{modified-triangular} structure has never been found in other equations.

 \begin{itemize}
   \item Fundamental pattern

   For $n=2$, the first order rogue wave possesses only one parameter $S_0$, which is shown in Fig. \ref{fig.1rws0}. We observe that
   it is a translation of the solution in Fig. \ref{fig.1rw} relative to the origin. Actually, it can shift the $1$-order rogue wave solution to an arbitrary position on the ($x$,$t$)-plane, but it is trivial. So we omit this case for higher order solutions.\\

   %%%%%%%%%%%%%%%%%%%%%%%%%%%%%%%%%%%%%%%%%%%%%%%%%%%%%%%%%%%%%%%%%%%%%%%%%%%%%%%%%%%%%%%%%%%%%%%%%%%%%%

   \item Triangular structure

   In this subsection, we set $S_i=0$ except $S_1$. The resulting wave functions for orders $k=2,3,4,5,6,7$ are shown in Fig. \ref{fig.triangle}.  Remarkably, all higher order solutions display \emph{triangular} structure.
   We observe triangle with three peaks in Fig. \ref{fig.triangle}(a)(It has been obtained in Ref \cite{Guo}.), ten peaks in Fig. \ref{fig.triangle}(b), fifteen peaks in Fig. \ref{fig.triangle}(c), etc. All peaks within the triangle are first order rogue waves.
   So we can conclude that the \emph{triangular} structure of an order
   $k$ rogue wave solution is composed of $\frac{k(k+1)}{2}$ first order rogue waves, and it can be observed that successive $k$ rows possessing $k$, $k-1$, $k-2$, $\cdots$, $1$ peaks respectively.
   Evidently, the structure of the second order rogue wave is same as the
   result of the  NLS  equation,
   which is called triplet \cite{PLA3752782}.

   %%%%%%%%%%%%%%%%%%%%%%%%%%%%%%%%%%%%%%%%%%%%%%%%%%%%%%%%%%%%%%%%%%%%%%%%%%%%%%%% triangle %%%%%%%%%%%%%%%%%

   %%%%%%%%%%%%%%%%%%%%%%%%%%%%%%%%%%%%%%%%%%%%%%%%%%%%%%%%%%%%%%%%%%%%%%%%%%%%%%%%%%%%%%%%%%%%%%%%%%%%%%%%%%%%%%%%

   Another remarkable feature of these solutions is that, for the triangle of $k$-order($k>2$) rogue wave solutions, the outer triangle is composed
   of $3k-3$ first order rogue waves, and the inner triangle contains $\frac{k^2-5k+6}{2}$ first order rogue waves which is similar to the
   \emph{triangular} structure of $(k-3)$-th order rogue wave solution.
   For example, the $7$-th order rogue wave solution in Fig. \ref{fig.triangle}(f) is composed of $28$ first order rogue waves, $18$
   first order rogue waves locating on the outer shell,  and the inner is similar to the triangle of the forth order rogue wave
   containing $10$ first order rogue waves.

   %%%%%%%%%%%%%%%%%%%%%%%%%%%%%%%%%%%%%%%%%%%%%%%%%%%%%%%%%%%%%%%%%%%%%%%%%%%%%%%%%%%%%%%%%%%%%%%%%%%%%%%%%%%%%%%%%%%

   \item Modified-triangular structure

   Actually, the inner triangle structure can form  a higher order rogue wave inversely  by changing the appearance of (\ref{D3}). For instance, when $k=5$, if we set

   %%%%%%%%%%%%%%%%%%%%%%%%%%%%%%%%%%%%%%%%%%%%%%%%%%%%%%%%%%%%%%%%%%%%%%%%%%%%%%%%%%%%%%%%%
   \begin{equation}
   \label{eigenfun1}
   \left(
   \begin{array}{c}
   f(x,t,\lambda)\\
   g(x,t,\lambda)
   \end{array}
   \right)=\left(
   \begin{array}{c}
   D_1\omega^1_{11}(x,t,\lambda)+D_1\omega^2_{11}(x,t,\lambda)+D_2{\omega^1_{12}}^\ast(x,t,-\lambda^\ast)+D_2{\omega^2_{12}}^\ast(x,t,-\lambda^\ast)\\
   D_1\omega^1_{12}(x,t,\lambda)+D_1\omega^2_{12}(x,t,\lambda)+D_2{\omega^1_{12}}^\ast(x,t,-\lambda^\ast)+D_2{\omega^2_{12}}^\ast(x,t,-\lambda^\ast)
   \end{array}
   \right),\nonumber
   \end{equation}

   with

   \begin{equation}\nonumber
   \left\{
   \begin{aligned}
   D_1=&\exp(-{\rm i}c_1^2(S_0+S_1\epsilon+S_2\epsilon^2+S_3\epsilon^3+\cdots+S_{k-1}\epsilon^{k-1})),\\
   D_2=&\exp({\rm i}c_1^2(S_0+S_1\epsilon+S_2\epsilon^2+S_3\epsilon^3+\cdots+S_{k-1}\epsilon^{k-1})).
   \end{aligned}
   \right.
   \end{equation}

   we will get a triangular structure with an second order rogue wave located in the center. It is remarkable that this structure has never been given before in nonlinear science, which is called \emph{modified-triangular} structure. One with special parameter is shown in Fig. \ref{fig.newtriangle}.

   \item Ring structure

   If we assume $S_i=0$ except $S_{k-1}$, we can get \emph{ring} structures, which are shown in Fig. \ref{fig.circle}. They possess $1$-order rogue waves and higher order rogue waves. Peaks locating on the outer shell of the ring are all first order rogue waves, and locating in the center of the ring are higher order rogue waves. Besides, the number of the first order rogue wave and the order of the inner higher order rogue wave increase according to the order of rogue wave. From those figures, we can conclude that there are $2k-1$ first order rogue waves locating on the outer shell of \emph{ring} structure of the $k$-th order rogue wave solution, and a WANDT of order $k-2$ locates in the center of ring. Notably, this structure has never been displayed for the DNLS equation.

   %%%%%%%%%%%%%%%%%%%%%%%%%%%%%%%%%%%%%%%%%%%%%%%%%%%%%%%%%%%%%%%% fig.circle %%%%%%%%%%%%%%%%%%%%%%%%

   %%%%%%%%%%%%%%%%%%%%%%%%%%%%%%%%%%%%%%%%%%%%%%%%%%%%%%%%%%%%%%%%%%%%%%%%%%%%%%%%%%%%%%%%%%%%%%%%%%%%%%%%

   \end{itemize}

\noindent
{\bf 3.2. Solutions with more than one parameter}\\

 Generally, there are $k-1$ free parameters for $k$-order rogue wave solution. As we discussed in previous subsection, the four basic models are depending on particular parameters. If there are two or more parameters which are non-zero, new models will be obtained. Moreover, higher order WANDT locating in the center of ring structure can be split into lower order waves.\\

 %%%%%%%%%%%%%%%%%%%%%%%%%%%%%%%%%%%%%%%%%%%%%%%%%%%%%%%%%%%%%%%%%%%%%%%%%%%%%%%%%%%%%%%%%%%%%%%%5

\begin{itemize}

  \item Ring-triangle

  When $k=4$ with parameter $S_3\neq0$, we have gotten a \emph{ring} structure with $7$ $1$-order rogue wave solution located on the outer shell,
  and a $2$-order rogue wave locating in the center in above section. Further more, if the parameter $S_1$ is also non-zero, the central higher order peak is split
  into a \emph{triangular} structure and the outer shell remains the same.  When $k=5,6,7$, the similar structure is also displayed.
  These phenomenons are
  displayed in Fig. \ref{fig.circletriangle}. Therefore, we are able to conclude that the central higher order rogue wave will be split into a \emph{triangular} structure if $S_{k-1}\gg0$ and $S_1\neq0$ for $k$-order WANDT.\\

  \item Multi-ring

  Similarly, the central higher order WANDT in \emph{ring} structure can also be split into \emph{ring} structure. For instance, when $k=5$, we have observe a \emph{ring} structure
  in Fig. \ref{fig.circle}(b). In this case, if we set $S_2\neq0$, the inner $3$-order rogue wave can be split into a \emph{ring} model.  Its dynamics are shown in Fig. \ref{fig.5rwmulticircle}. When $k=6$, if we assume $S_5=1\times10^8$ and $S_4=1\times10^6$, the inner higher order rogue wave is split into a \emph{ring} structure with a second order rogue wave locating in the center. Its evolution is shown in Fig. \ref{fig.6rwmulticircle1}. Indeed, we can continuing decomposing the inner structure with the help of another parameters. A new \emph{multi-ring} model of the six order is displayed in Fig.\ref{fig.6rwmulticircle2}. From these figures, we find that both the outer shell and the middle shell are circular,  and the inner shell is triangular (or circular).  Naturally, the $7$-order rogue wave solution possesses the same character except for the difference that
  the central peak can be split into both a triangle model and a ring pattern. They are shown in Fig. \ref{7rwmulticircle1} and Fig. \ref{7rwmulticircle2}.
  \end{itemize}

\section{{\bf Conclusions}}

In this paper, we generate the formulae of higher order positon solution in  proposition 1 and higher order rogue wave solution in proposition 2 for the DNLS equation at the same eigenvalue with the method of Taylor expansion and limit technique. By applying these formulae, we get positon solutions, rational traveling solutions and rogue
wave solutions. These formulae are given in terms of determinants  explicitly. Remarkably, the formula for rogue wave solutions
is really effective in achieving the analytic expression and computer simulation of $k$-order
rogue wave. Further more, we give rise to solutions with different structures are obtained by adjusting the free parameters $D_1$ and $D_2$ in our formula. With the help of these parameters, we study the dynamics of higher order rogue wave solutions. Overall, there are four basic models for higher order rogue wave. By choosing proper parameters, combination structures can be obtained. For example,  \emph{fundamental pattern}, \emph{triangular} structure, \emph{ring} structure,\emph{modified-triangular} structure, \emph{ring-triangle} structure, and \emph{multi-ring} structure. The last three models have never been given before.

In the last part of this paper, we make a classification of higher order rogue wave of the DNLS equation. We found out that some basis structures (\emph{fundamental pattern}, \emph{triangular} structure, \emph{ring} structure) also appear in other equations such as  NLS. But the \emph{modified-triangular} structure is unique to the DNLS equation.

Our results give an essential understanding of the relation of shift parameters with
relative positions, which will be useful in other integrable equations such as Hirota equation,
Gerdjikov-Ivanov equation, the Davey-Stewartson equation, and so on.

%%%%%%%%%%%%%%%%%%%%%%%%%%%%%%%%%%%%%%%%%%%%%%%%%%%
\mbox{\vspace{1cm}}

{\bf Acknowledgments}

{\noindent This work is supported by the NSF of China under Grant No.11271210, No.10971109 and K. C. Wong
Magna Fund in Ningbo University. Jingsong He is also supported by Natural Science Foundation of
Ningbo under Grant No. 2011A610179. We want to thank Prof. Yishen Li (USTC, Hefei, China) for
his long time support and useful suggestions.}

\appendix
\section{The expression of the second order rogue wave solution}
Here we present the expression of the second order rogue wave solution
\begin{equation}
  q_{2_{rw}}=-\frac{L_1^*L_2}{L_1^2}{\rm exp}({{\rm i} \left( ax+ \left( -{c}^{2}+a \right) at \right) }),
\end{equation}
where
\begin{equation}\nonumber
  \begin{split}
    L_1=&e_1+{\rm i}e_2,\qquad
    L_2=e_3+{\rm i}e_4,\\
    e_1=&-9+72\,a{c}^{4}xt-216\,{a}^{2}{c}^{2}xt-108\,{a}^{2}{c}^{4}{t}^{2}-18
         \,a{c}^{6}{t}^{2}+216\,{c}^{6}xt-54\,a{c}^{2}{x}^{2}-216\,{a}^{3}{c}^{2}{t}^{2}\\
       &+1088\,{a}^{3}{c}^{12}{x}^{3}{t}^{3}-1536\,{a}^{8}{c}^{6}x{t}^{5}
        +6528\,{a}^{6}{c}^{8}{x}^{2}{t}^{4}-3264\,{a}^{4}{c}^{10}{x}^{3}{t}^{3}
        -1920\,{a}^{7}{c}^{6}{x}^{2}{t}^{4}\\
        &+3456\,{a}^{5}{c}^{8}{x}^{3}{t}^{3}-456\,{a}^{3}{c}^{10}{x}^{4}{t}^{2}+648\,a{c}^{14}{t}^{4}
        -1280\,{a}^{6}{c}^{6}{x}^{3}{t}^{3}+912\,{a}^{4}{c}^{8}{x}^{4}{t}^{2}-2268\,{a
        }^{2}{c}^{12}{t}^{4}\\
        &-480\,{a}^{5}{c}^{6}{x}^{4}{t}^{2}+4272\,{a}^{3}{c
     }^{10}{t}^{4}+96\,{a}^{3}{c}^{8}{x}^{5}t-1296\,a{c}^{12}x{t}^{3}-4368
     \,{a}^{4}{c}^{8}{t}^{4}-96\,{a}^{4}{c}^{6}{x}^{5}t\\
  &+3456\,{a}^{2}{c}^{
    10}x{t}^{3}+1728\,{a}^{5}{c}^{6}{t}^{4}-4320\,{a}^{3}{c}^{8}x{t}^{3}-8
    \,{a}^{3}{c}^{6}{x}^{6}+864\,a{c}^{10}{x}^{2}{t}^{2}-216\,{a}^{3}{c}^{
    18}{t}^{6}\\
  &+1296\,{a}^{4}{c}^{16}{t}^{6}-3456\,{a}^{5}{c}^{14}{t}^{6}+
    864\,{a}^{3}{c}^{16}x{t}^{5}+5184\,{a}^{6}{c}^{12}{t}^{6}-4320\,{a}^{4
    }{c}^{14}x{t}^{5}-4608\,{a}^{7}{c}^{10}{t}^{6}\\
  &+9216\,{a}^{5}{c}^{12}x{
   t}^{5}-1368\,{a}^{3}{c}^{14}{x}^{2}{t}^{4}+2304\,{a}^{8}{c}^{8}{t}^{6}
   -10368\,{a}^{6}{c}^{10}x{t}^{5}+5472\,{a}^{4}{c}^{12}{x}^{2}{t}^{4}-
   512\,{a}^{9}{c}^{6}{t}^{6}\\
 &+6144\,{a}^{7}{c}^{8}x{t}^{5}-8736\,{a}^{5}{
   c}^{10}{x}^{2}{t}^{4}-96\,{a}^{3}{c}^{4}{x}^{3}t+24\,a{c}^{6}{x}^{4}-
   12\,{a}^{2}{c}^{4}{x}^{4}-36\,{c}^{4}{x}^{2}-324\,{c}^{8}{t}^{2}\\
 &-192\,
   {a}^{6}{c}^{4}{t}^{4}+2496\,{a}^{4}{c}^{6}x{t}^{3}-1656\,{a}^{2}{c}^{8
   }{x}^{2}{t}^{2}-384\,{a}^{5}{c}^{4}x{t}^{3}+1296\,{a}^{3}{c}^{6}{x}^{2
   }{t}^{2}-240\,a{c}^{8}{x}^{3}t\\
   &-288\,{a}^{4}{c}^{4}{x}^{2}{t}^{2}+288\,
     {a}^{2}{c}^{6}{x}^{3}t,\\
%%%%%%%%%%%%%%%%%%%%%%%%%%%%%%%%%%%%%%%%%%%%%%%%%%%%%%%%%%%%%%%%%%%%%%%%%%%%%%%%%%%%%%%%%%%%%%%%%%%%%%%%%%%%%%%%%%%%%%
e_2=&6\,{c}^{2} ( 9\,x+12\,{c}^{4}{x}^{2}ta-24\,{c}^{4}x{t}^{2}{a}^{2}
     +48\,x{t}^{2}{c}^{2}{a}^{3}+24\,{x}^{2}t{a}^{2}{c}^{2}-180\,{c}^{6}{t}
      ^{2}xa+1776\,{c}^{8}{a}^{4}{t}^{4}x\\
      &-360\,{c}^{10}{a}^{2}{t}^{3}{x}^{2}
        -51\,t{c}^{2}+18\,ta+4\,{c}^{4}{a}^{2}{x}^{5}+108\,{c}^{8}{t}^{2}x-
            1008\,{c}^{10}{a}^{4}{t}^{5}-240\,{c}^{6}{t}^{3}{a}^{2}-\\
     &108\,{c}^{14}{
a}^{2}{t}^{5}-576\,{c}^{6}{a}^{6}{t}^{5}-64\,{c}^{4}{t}^{3}{a}^{3}+128
\,{c}^{4}{a}^{7}{t}^{5}+1056\,{c}^{8}{a}^{5}{t}^{5}+32\,{t}^{3}{c}^{2}
{a}^{4}+324\,{c}^{8}{t}^{3}a\\
&-36\,{c}^{6}t{x}^{2}+504\,{c}^{12}{a}^{3}{
t}^{5}+40\,{c}^{4}{a}^{3}t{x}^{4}+324\,{c}^{12}{a}^{2}{t}^{4}x-1200\,{
c}^{10}{a}^{3}{t}^{4}x-336\,{c}^{6}{a}^{3}{t}^{2}{x}^{3}\\
&+160\,{c}^{4}{
a}^{4}{t}^{2}{x}^{3}-44\,{c}^{6}{a}^{2}t{x}^{4}+4\,{c}^{4}{x}^{3}-108
\,{c}^{10}{t}^{3}-960\,{c}^{6}{a}^{4}{t}^{3}{x}^{2}+184\,{c}^{8}{a}^{2
}{t}^{2}{x}^{3}\\
&+320\,{c}^{4}{a}^{5}{t}^{3}{x}^{2}+4\,{x}^{3}{c}^{2}a-
1216\,{c}^{6}{a}^{5}{t}^{4}x+320\,{c}^{4}{a}^{6}{t}^{4}x+992\,{c}^{8}{
a}^{3}{t}^{3}{x}^{2} ),\\
%%%%%%%%%%%%%%%%%%%%%%%%%%%%%%%%%%%%%%%%%%%%%%%%%%%%%%%%%%%%%%%%%%%%%%%%%%%%%%%%%%%%%%
e_3=&(e_5-e_1)c,\\
%%%%%%%%%%%%%%%%%%%%%%%%%%%%%%%%%%%%%%%%%%%%%%%%%%%%%%%%%%%%%%%%%%%%%%%%%%%%%%%%%%%%%%%%%%%%%%%%%%%%%%%%%%%%
e_4=&6\,{c}^{3} ( -15\,x+108\,{c}^{4}{x}^{2}ta+456\,{c}^{4}x{t}^{2}{a}
^{2}-336\,x{t}^{2}{c}^{2}{a}^{3}-120\,{x}^{2}t{a}^{2}{c}^{2}-324\,{c}^
{6}{t}^{2}xa+2928\,{c}^{8}{a}^{4}{t}^{4}x\\
&-360\,{c}^{10}{a}^{2}{t}^{3}{
x}^{2}+21\,t{c}^{2}-90\,ta+4\,{c}^{4}{a}^{2}{x}^{5}+108\,{c}^{8}{t}^{2
}x-1584\,{c}^{10}{a}^{4}{t}^{5}-384\,{c}^{6}{t}^{3}{a}^{2}-108\,{c}^{
14}{a}^{2}{t}^{5}\\
&-1344\,{c}^{6}{a}^{6}{t}^{5}+544\,{c}^{4}{t}^{3}{a}^{
3}+384\,{c}^{4}{a}^{7}{t}^{5}+2016\,{c}^{8}{a}^{5}{t}^{5}-288\,{t}^{3}
{c}^{2}{a}^{4}+324\,{c}^{8}{t}^{3}a-36\,{c}^{6}t{x}^{2}\\
&+648\,{c}^{12}{
a}^{3}{t}^{5}+56\,{c}^{4}{a}^{3}t{x}^{4}+324\,{c}^{12}{a}^{2}{t}^{4}x-
1584\,{c}^{10}{a}^{3}{t}^{4}x-464\,{c}^{6}{a}^{3}{t}^{2}{x}^{3}+288\,{
c}^{4}{a}^{4}{t}^{2}{x}^{3}\\
&-44\,{c}^{6}{a}^{2}t{x}^{4}
+4\,{c}^{4}{x}^{
3}-108\,{c}^{10}{t}^{3}-1664\,{c}^{6}{a}^{4}{t}^{3}{x}^{2}+184\,{c}^{8
}{a}^{2}{t}^{2}{x}^{3}+704\,{c}^{4}{a}^{5}{t}^{3}{x}^{2}-12\,{x}^{3}{c
}^{2}a\\
&-2496\,{c}^{6}{a}^{5}{t}^{4}x+832\,{c}^{4}{a}^{6}{t}^{4}x+1344\,
{c}^{8}{a}^{3}{t}^{3}{x}^{2}),\\
%%%%%%%%%%%%%%%%%%%%%%%%%%%%%%%%%%%%%%%%%%%%%%%%%%%%%%%%%%%%%%%%%%%%%%%%%%%%%%%%%%%%%%%%%%%%%%%%%%%%%%%%%%
e_5=&36+288\,a{c}^{4}xt-576\,{a}^{2}{c}^{2}xt+288\,{a}^{2}{c}^{4}{t}^{2}+
432\,a{c}^{6}{t}^{2}+864\,{c}^{6}xt-144\,a{c}^{2}{x}^{2}-576\,{a}^{3}{
c}^{2}{t}^{2}\\
&+1296\,{a}^{2}{c}^{12}{t}^{4}-4032\,{a}^{3}{c}^{10}{t}^{4
}+4032\,{a}^{4}{c}^{8}{t}^{4}-1728\,{a}^{2}{c}^{10}x{t}^{3}-768\,{a}^{
5}{c}^{6}{t}^{4}+3072\,{a}^{3}{c}^{8}x{t}^{3}\\
&-384\,{a}^{3}{c}^{4}{x}^{
3}t-48\,{a}^{2}{c}^{4}{x}^{4}-144\,{c}^{4}{x}^{2}-1296\,{c}^{8}{t}^{2}
-768\,{a}^{6}{c}^{4}{t}^{4}+288\,{a}^{2}{c}^{8}{x}^{2}{t}^{2}-1536\,{a
}^{5}{c}^{4}x{t}^{3}\\
&+576\,{a}^{3}{c}^{6}{x}^{2}{t}^{2}-1152\,{a}^{4}{c
}^{4}{x}^{2}{t}^{2}+192\,{a}^{2}{c}^{6}{x}^{3}t.
\end{split}
\end{equation}

With the help the formula \eqref{RW}, we can also obtain the expression of $k$-th (k=3,4,5,6,7) order  rogue wave. Since they are too complicated to write down, we omit them.

%%%%%%%%%%%%%%%%%%%%%%%%%%%%%%%%%%%%%%%%%%%%%%%%%%%%%%%%%%%%%%%%%%%%%%%%%%%%%%%%%%%%%%%%%%%%%%%%%%%%%%

%%%%%%%%%%%%%%%%%%%%%%%%%%%%%%%%%%%%%%%%%%%%%%%%%%%%%%%%%%%%%%%%%%%%%%%%%%%%%%%%%%%%%%%%%%%%%%%%%%%%%%fig.nrw
\newpage
\begin{figure}[!ht]
  \centering
  \subfigure[]{\label{Fig.sub.1}\includegraphics[height=4cm,width=4cm]{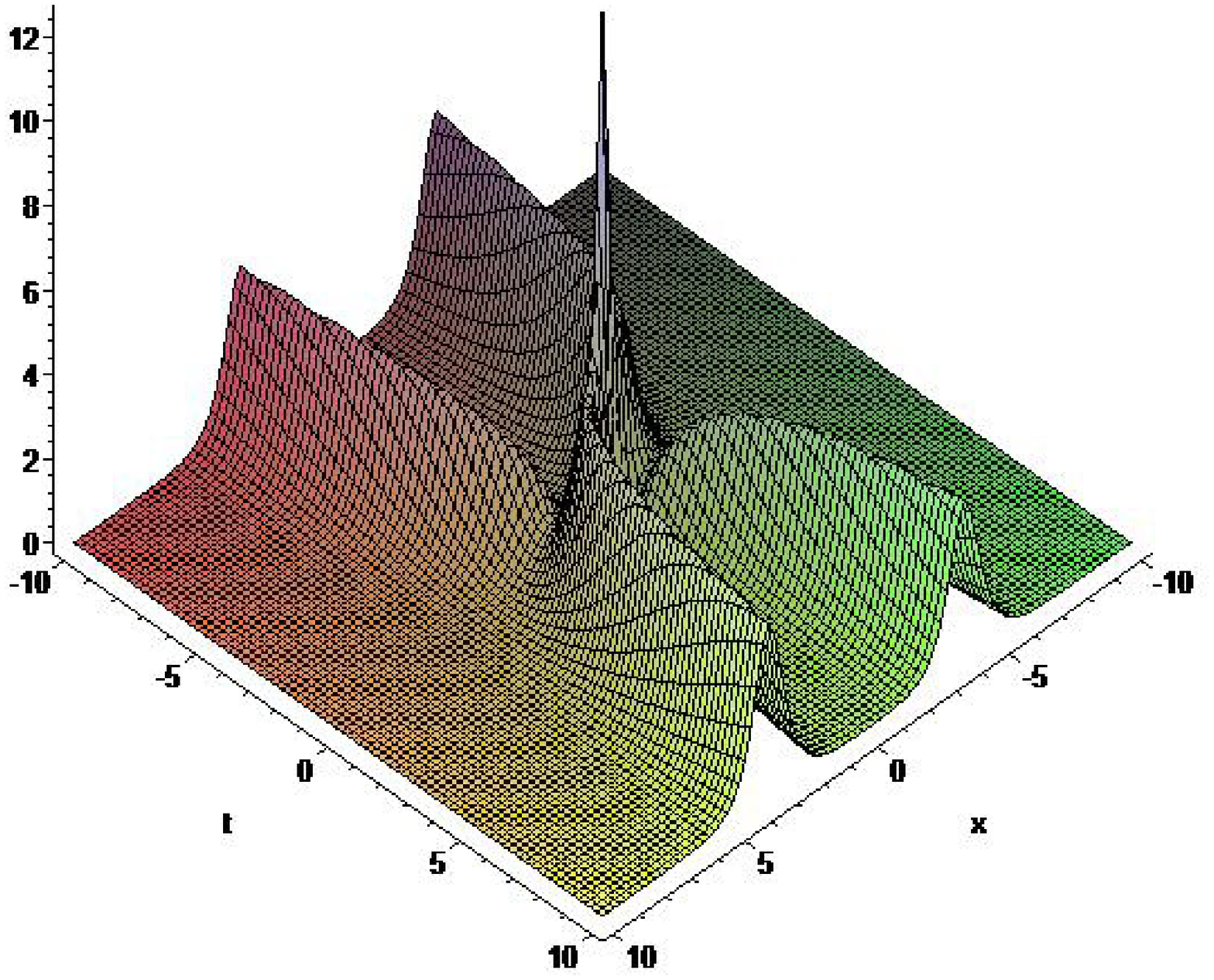}}
  \qquad
  \subfigure[]{\label{Fig.sub.2}\includegraphics[height=4cm,width=4cm]{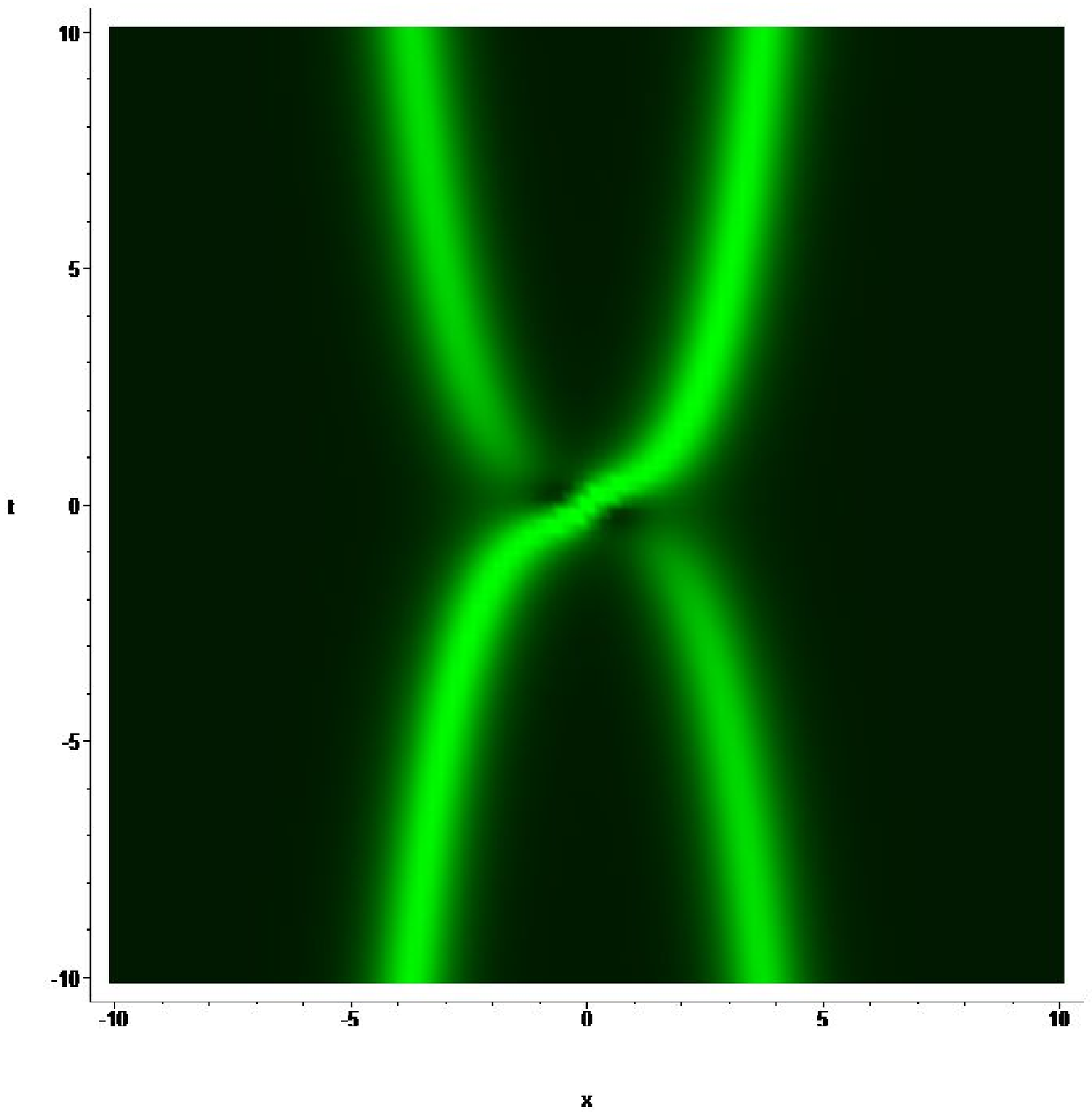}}
  \caption{(Color online) The dynamics of positon solution  on the (x, t) plane with $\alpha_1=0.5$, $\beta_1=0.5$.}
  \label{positon}
\end{figure}

\begin{figure}[!ht]
  \subfigure[]{\label{Fig.sub.3}\includegraphics[height=4cm,width=4cm]{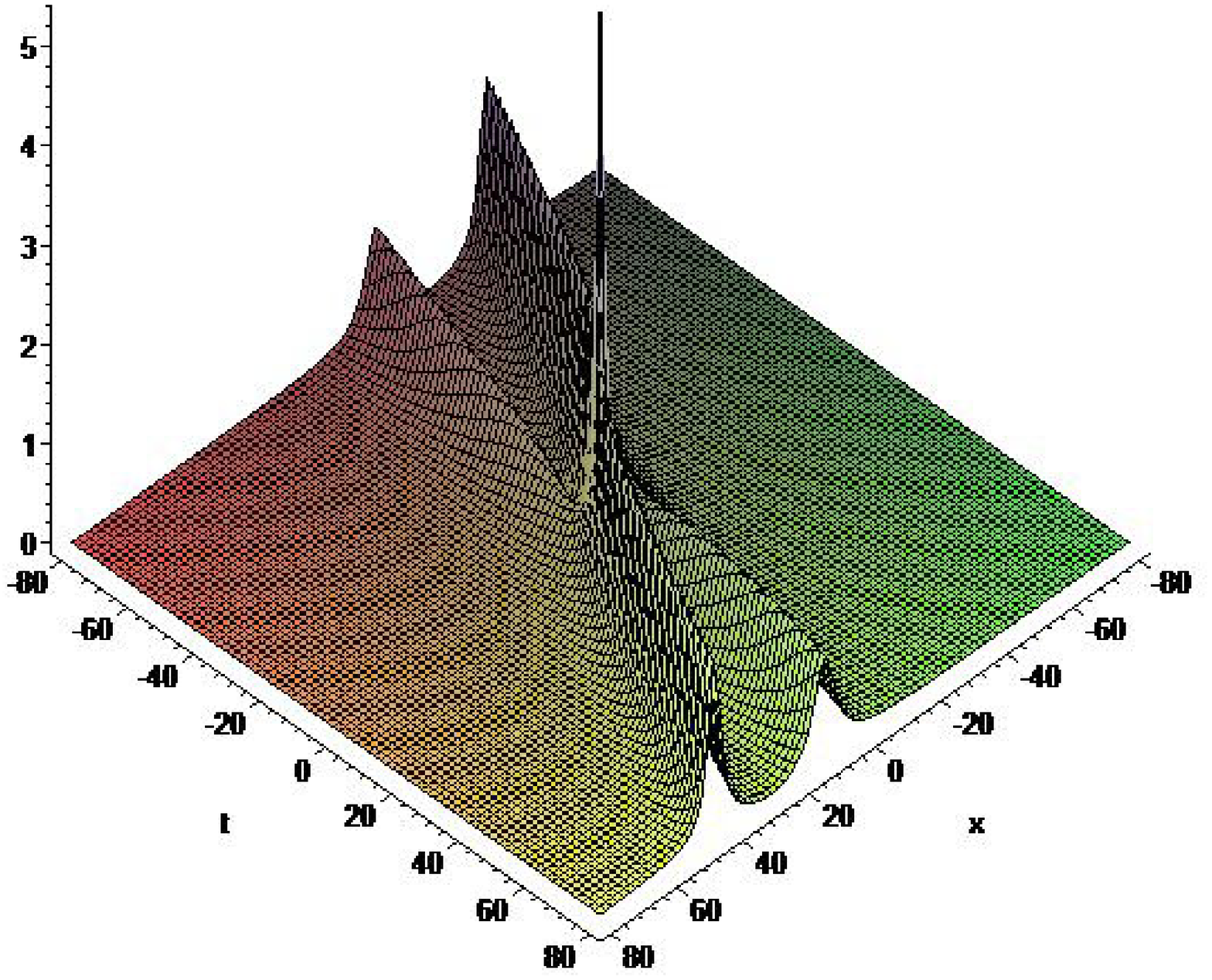}}
  \qquad
  \subfigure[]{\label{Fig.sub.4}\includegraphics[height=4cm,width=4cm]{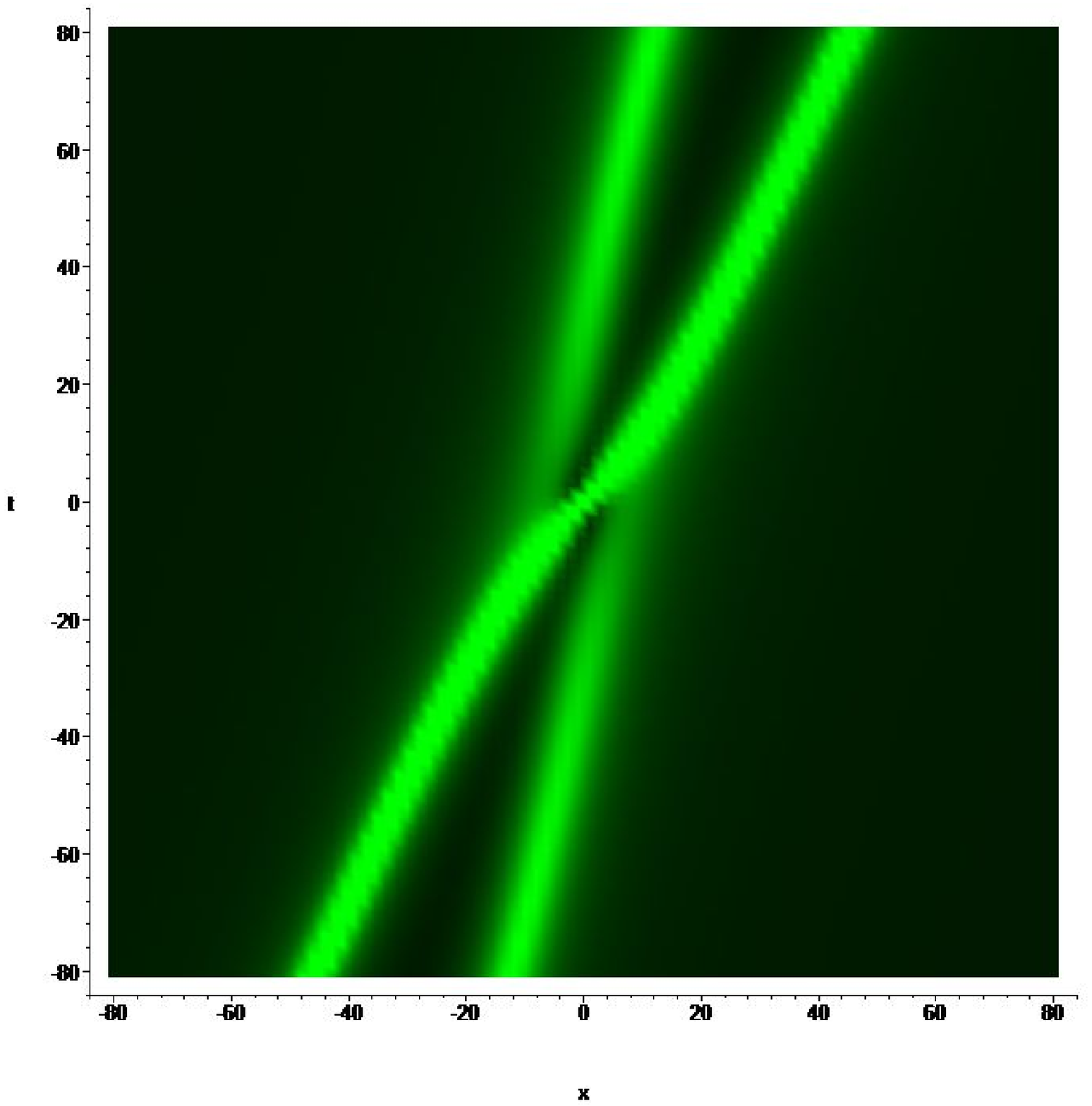}}
  \caption{(Color online) The dynamics of $2$-rd rational traveling solution on the (x, t) plane with $\beta_1=0.3$.}
  \label{ration}
\end{figure}

%%%%%%%%%%%%%%%%%%%%%%%%%%%%%%%%%%%%%%%%%%%%%%%%%%%%%%%%%%%%%%%%%%%%%%%%%%%%%%%%%%%%%%%%%%%%%%%%%%%%%%%%%%%%%%

\begin{figure}[!htbp]
  \centering
  \includegraphics[width=4cm,height=4cm]{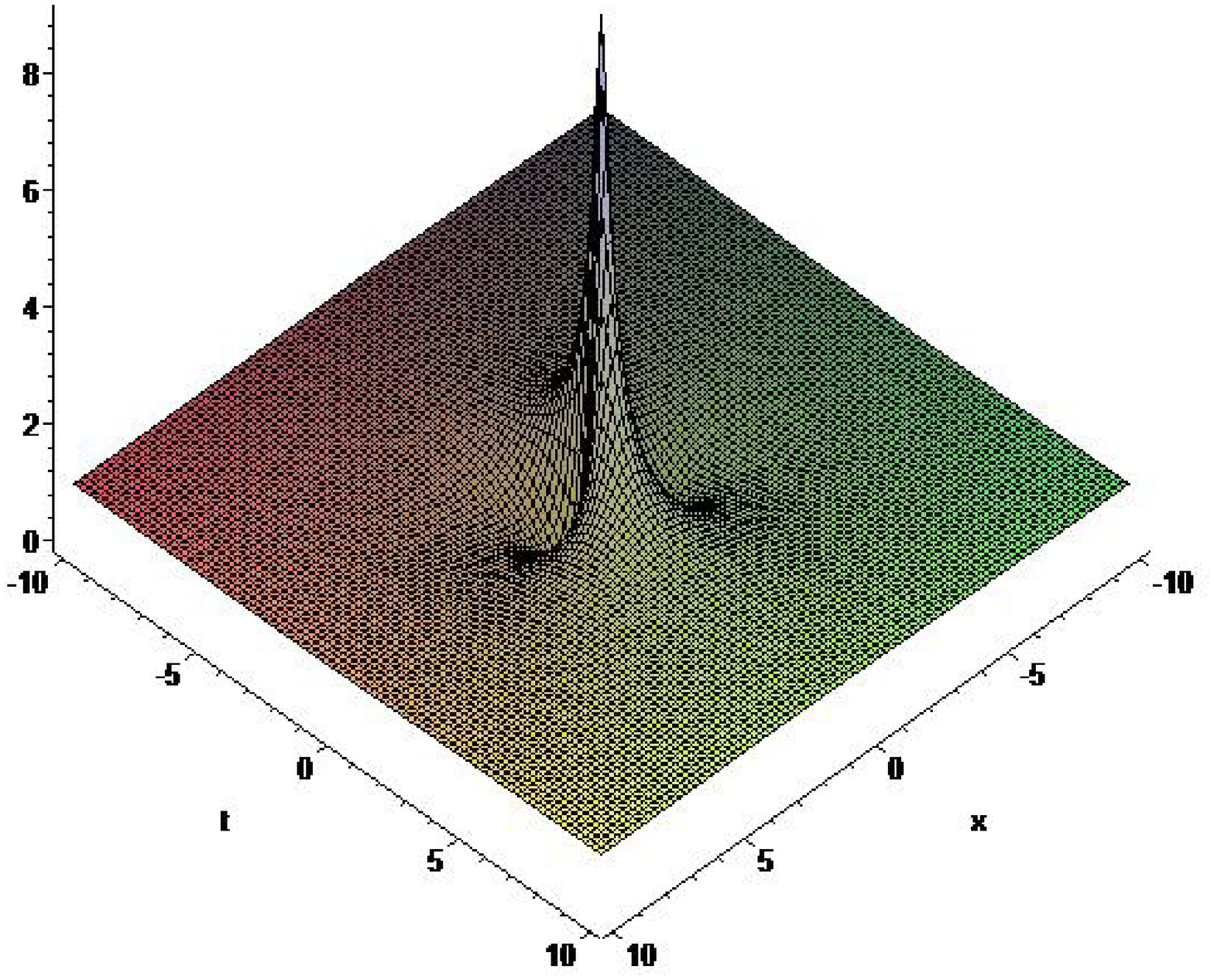}
  \qquad
  \includegraphics[width=4cm,height=4cm]{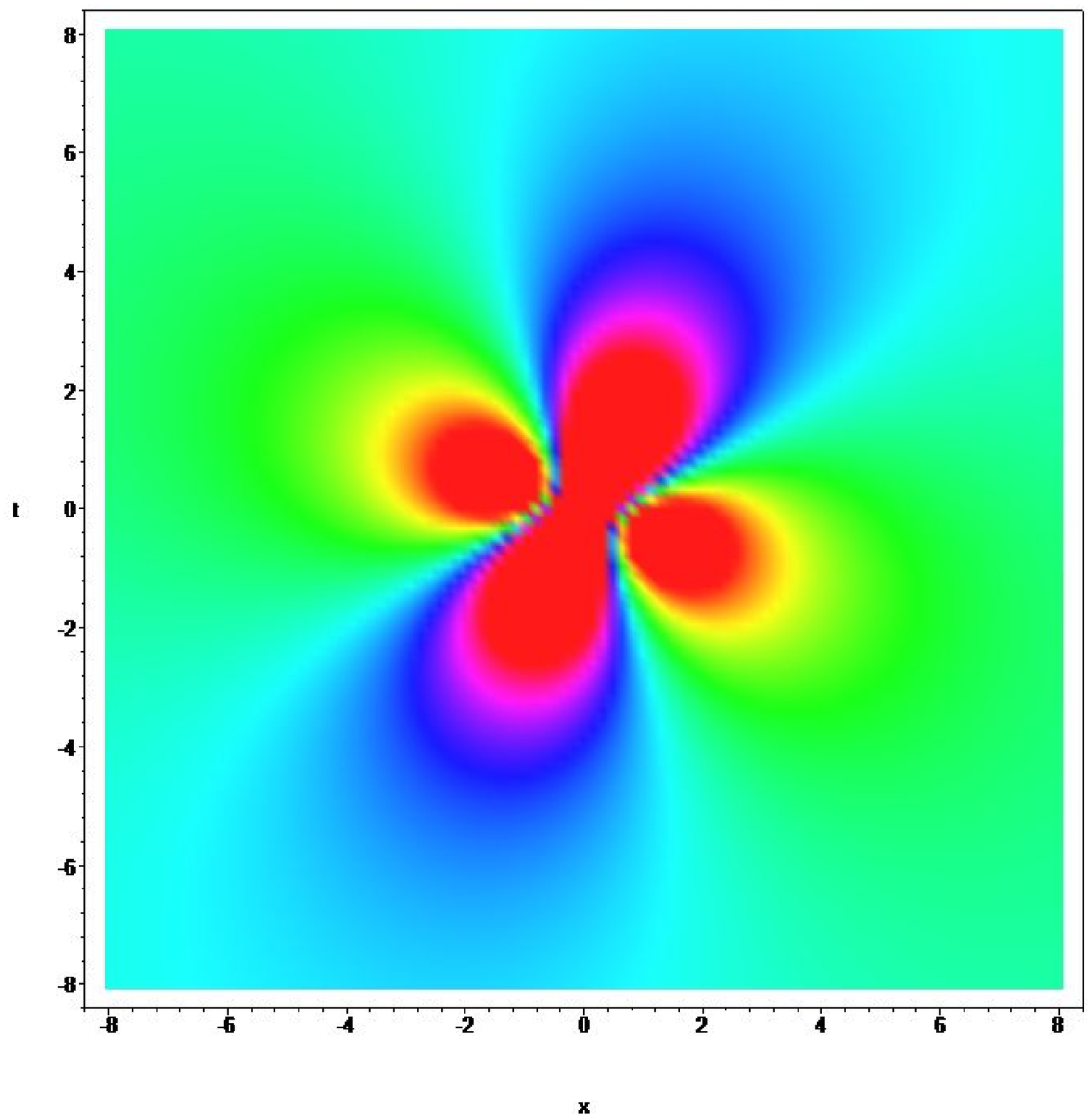}
  \caption{1-rogue wave solution with $a=1$, $c=1$.}\label{fig.1rw}
\end{figure}

\begin{figure}[!ht]
\centering
\subfigure[2-rogue wave]{\includegraphics[height=4cm,width=4cm]{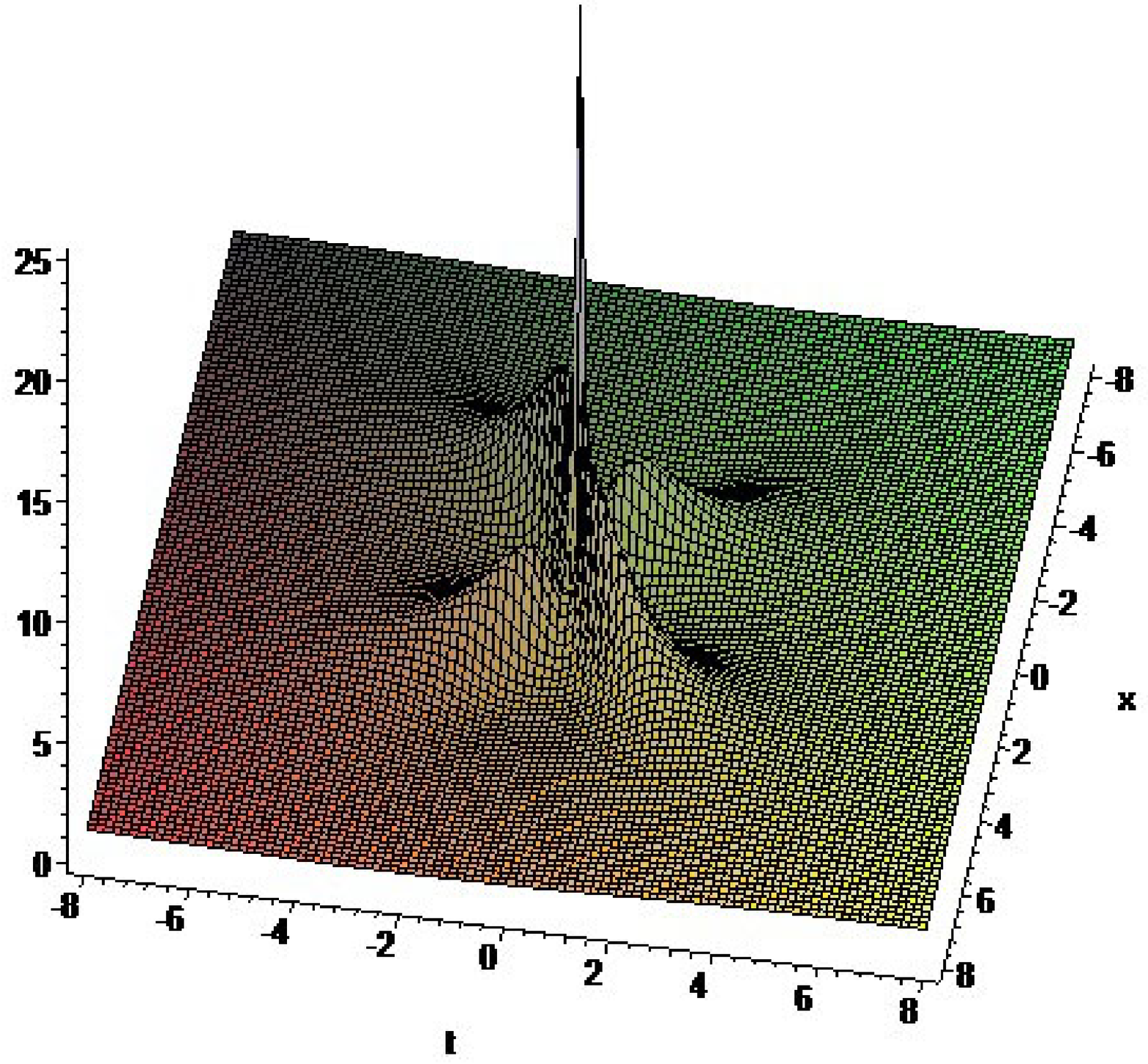}}
\qquad
\subfigure[3-rogue wave]{\includegraphics[height=4cm,width=4cm]{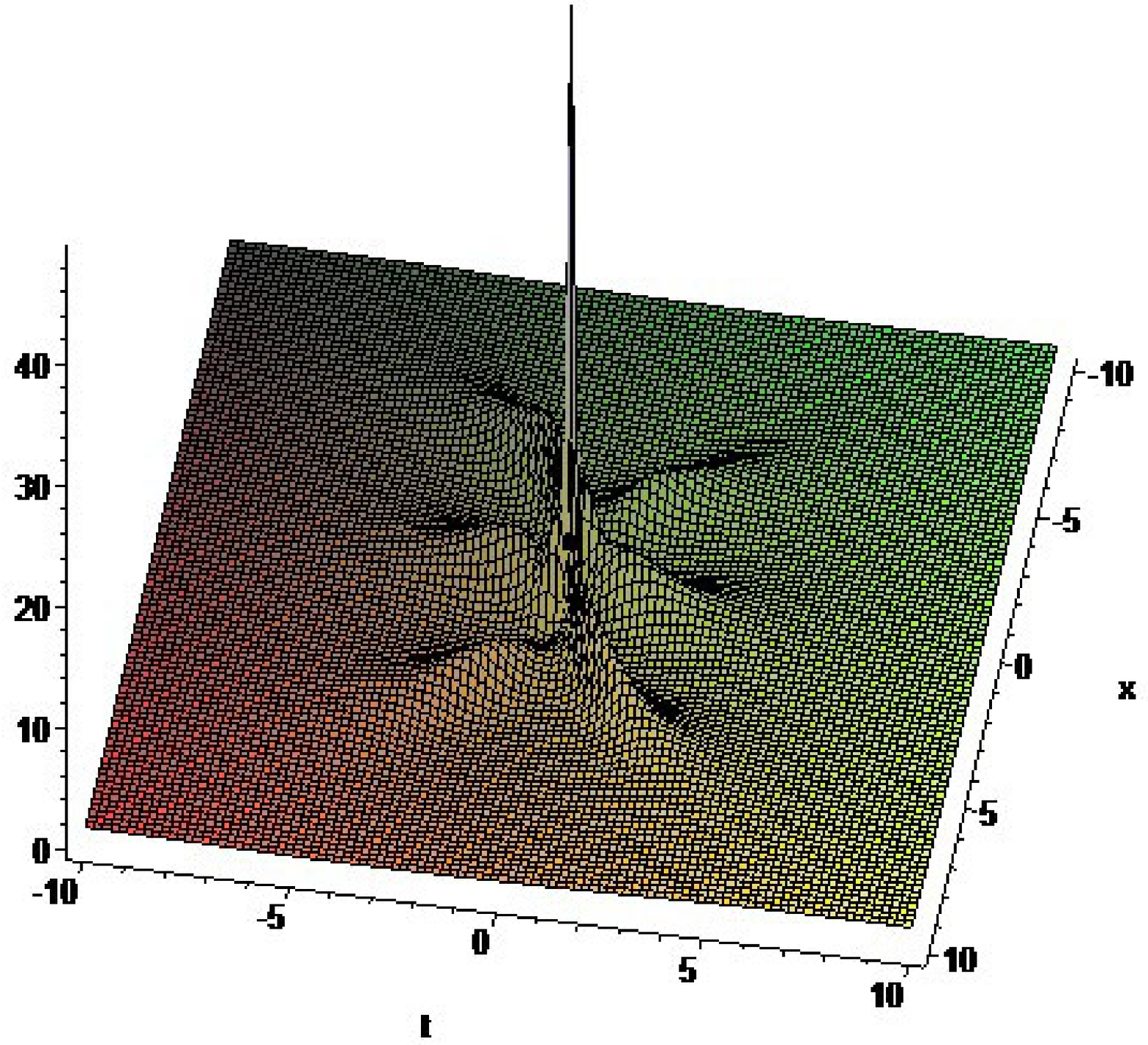}}
\qquad
\subfigure[4-rogue wave]{\includegraphics[height=4cm,width=4cm]{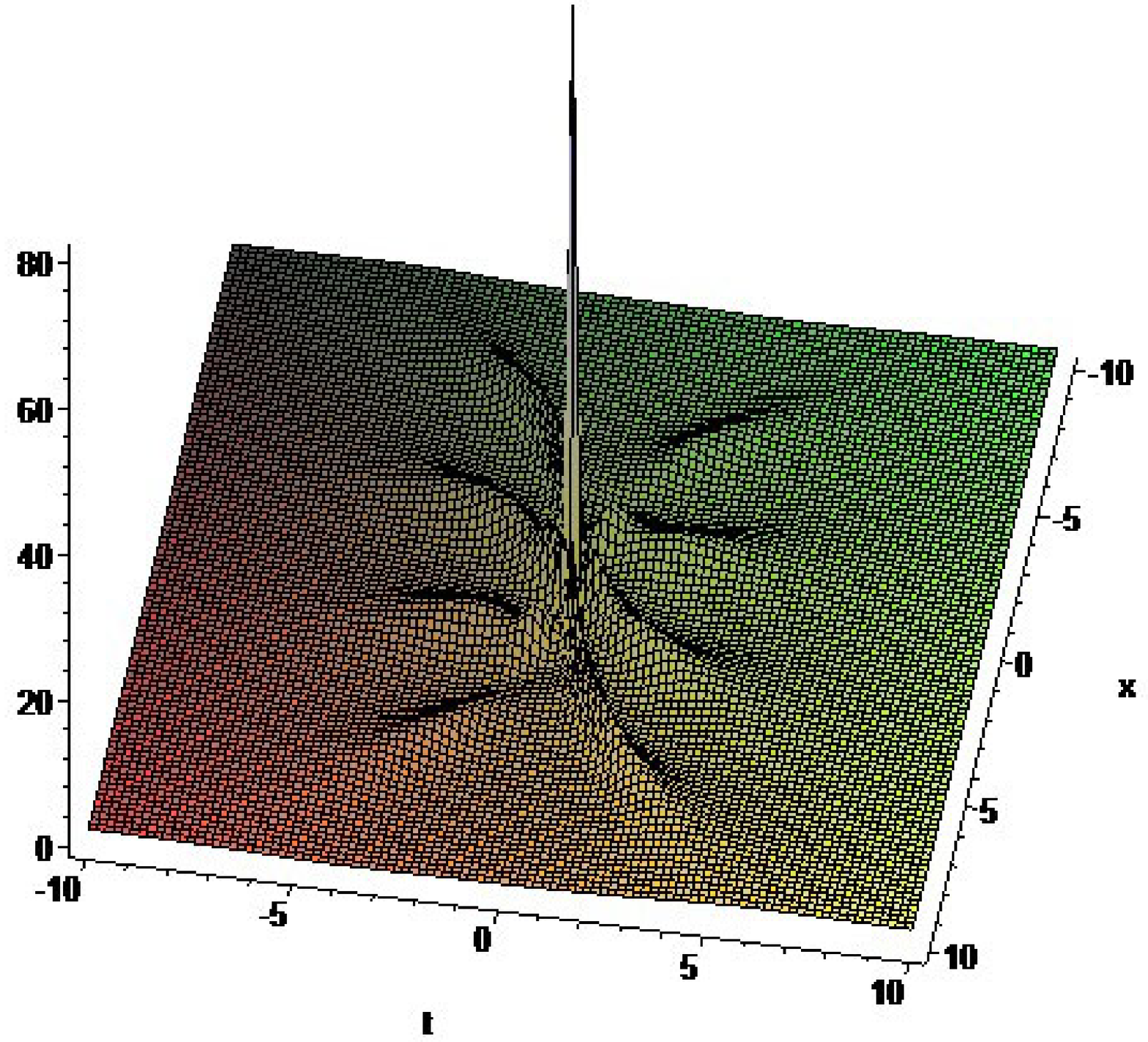}}
\qquad
\subfigure[5-rogue wave]{\includegraphics[height=4cm,width=4cm]{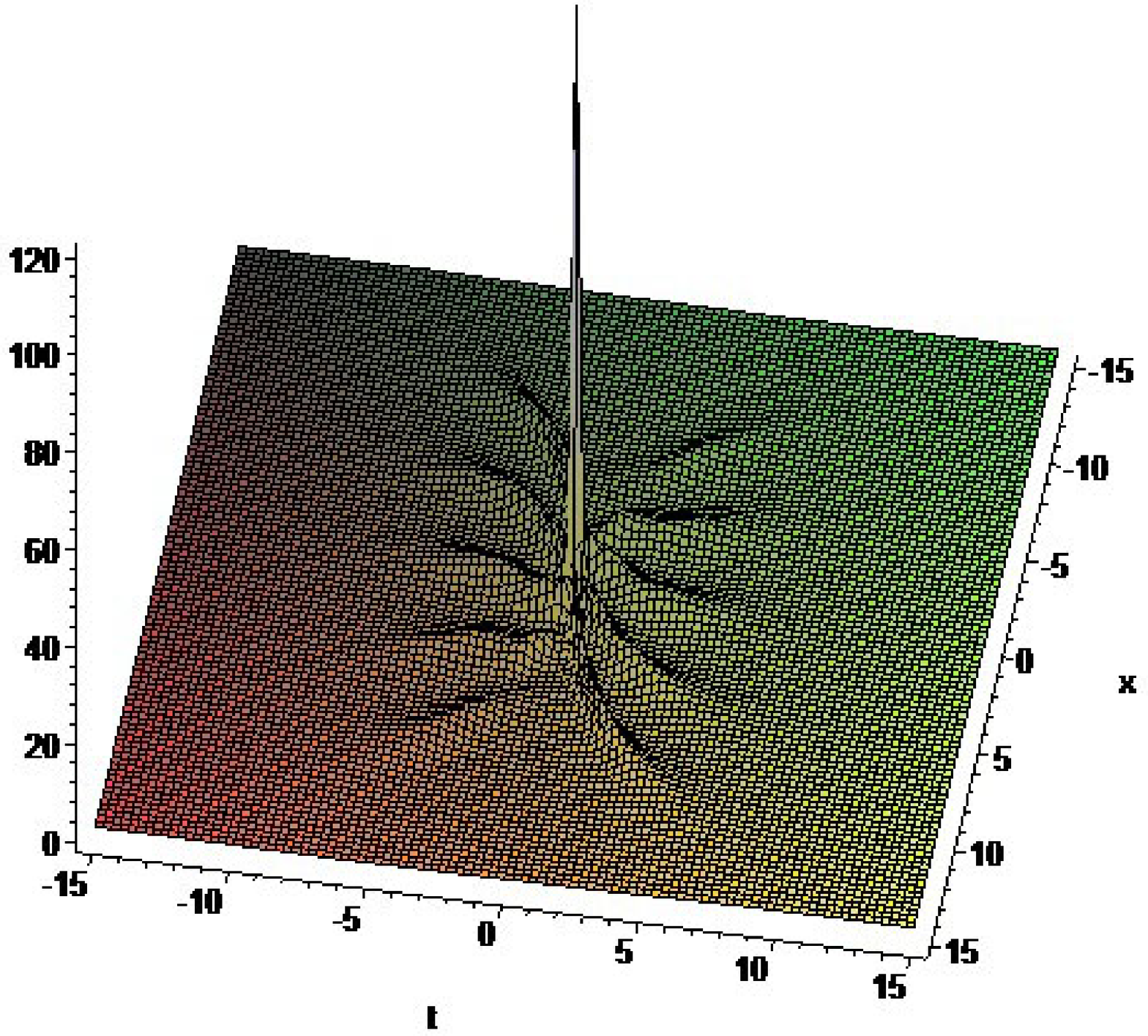}}
\qquad
\subfigure[6-rogue wave]{\includegraphics[height=4cm,width=4cm]{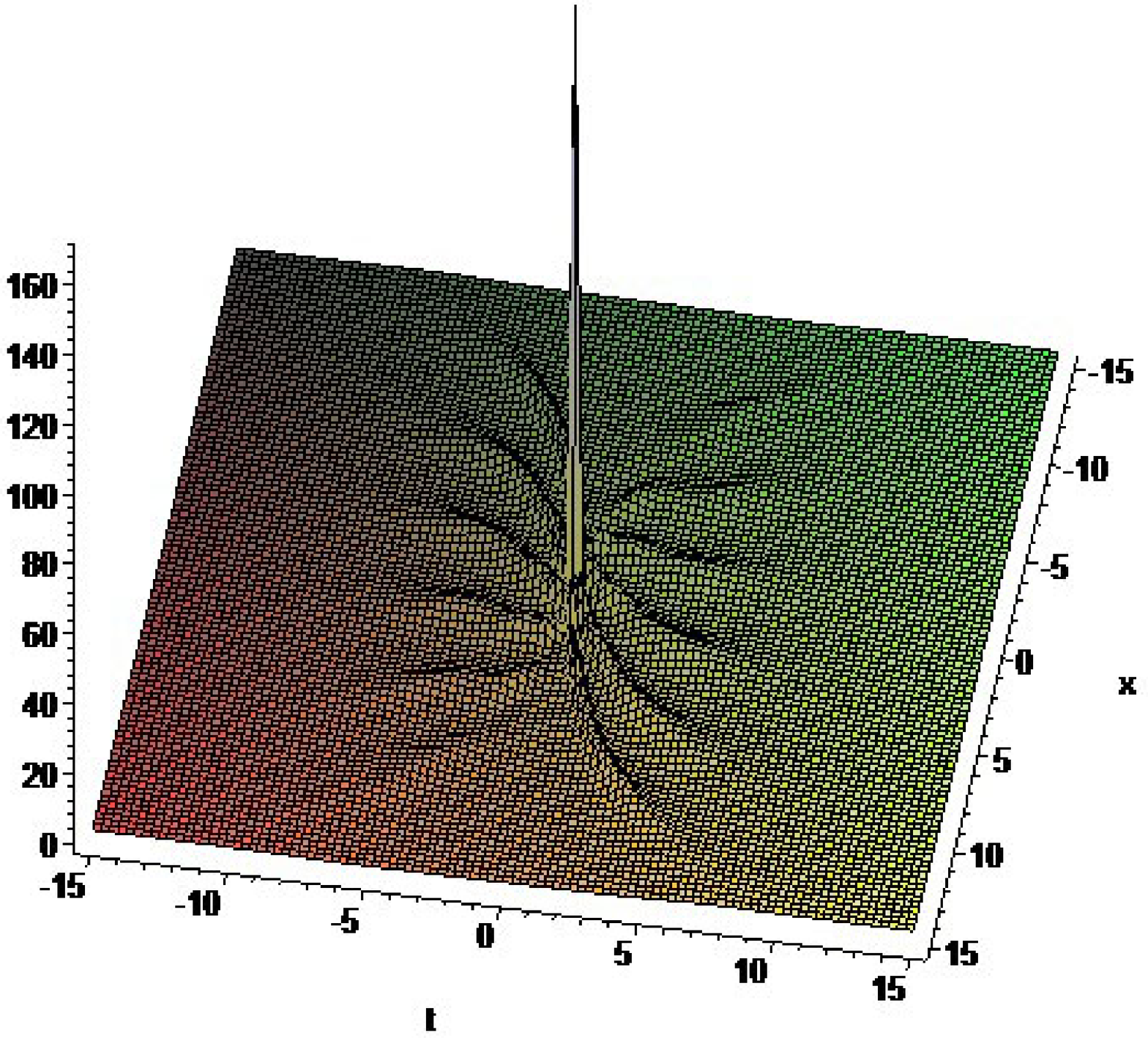}}
\qquad
\subfigure[7-rogue wave]{\includegraphics[height=4cm,width=4cm]{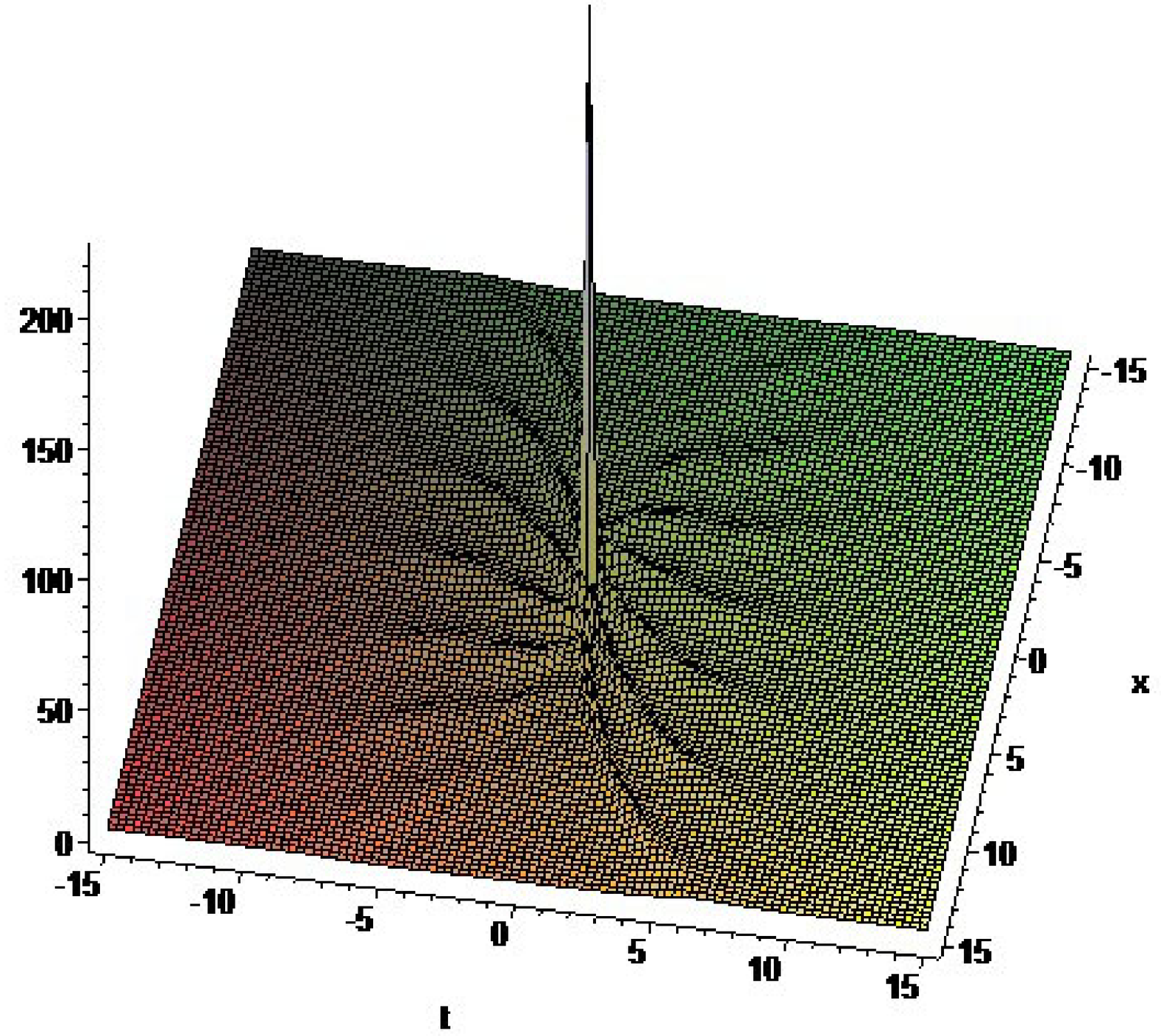}}
\caption{(Color online)The dynamics of higher order rogue wave solution.}\label{fig.nrw}
\end{figure}

%%%%%%%%%%%%%%%%%%%%%%%%%%%%%%%%%%%%%%%%%%%%%%%%%%%%%%%%%%%%%%%%%%%%%%%%%%%%%%%%%%%%%%%%%%%%%%%%%%%%%%%%%%%%%%%%%%

\begin{figure}[!htp]
   \centering
   \subfigure[]{\includegraphics[height=4cm,width=4cm]{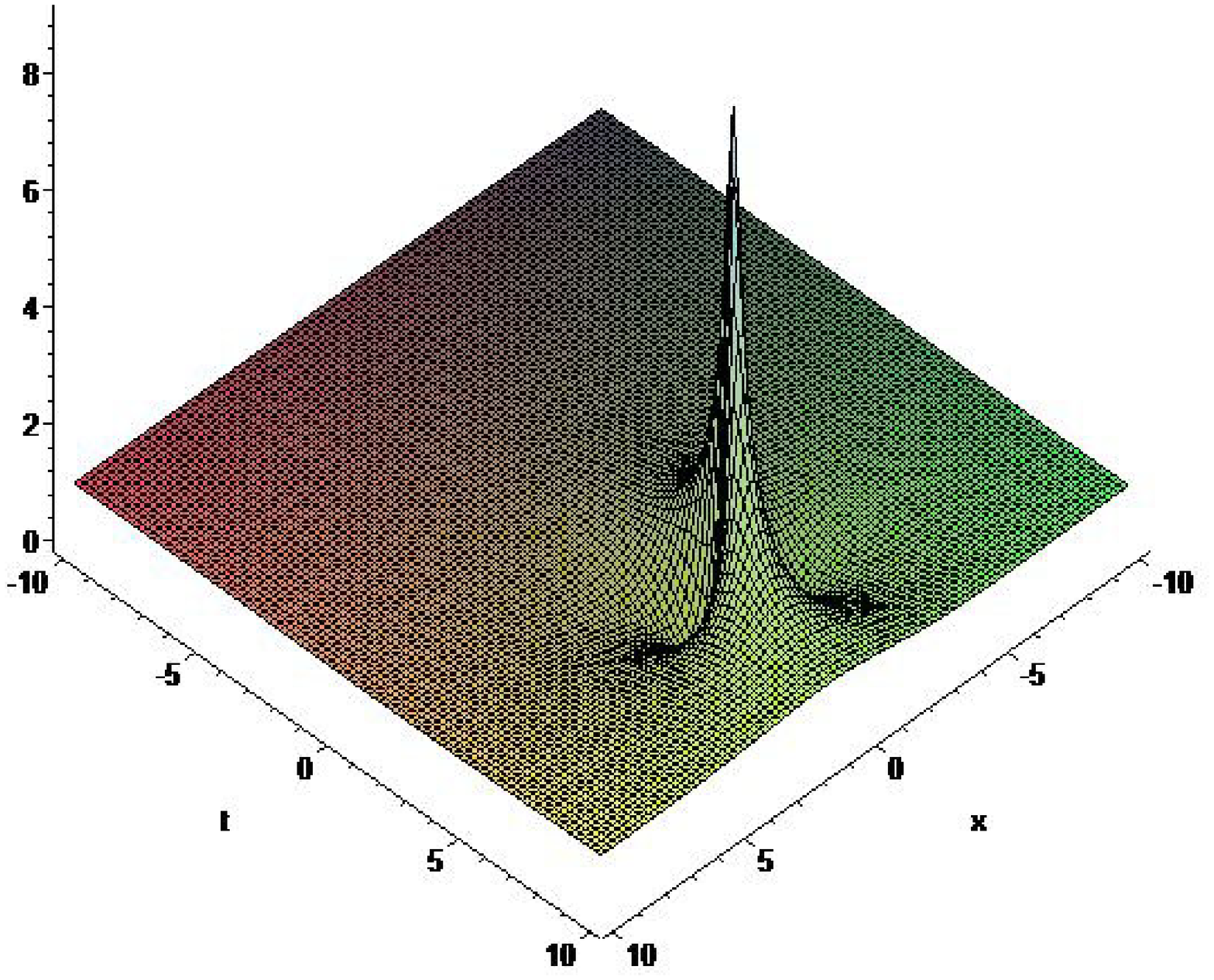}}
   \qquad
   \subfigure[]{\includegraphics[height=4cm,width=4cm]{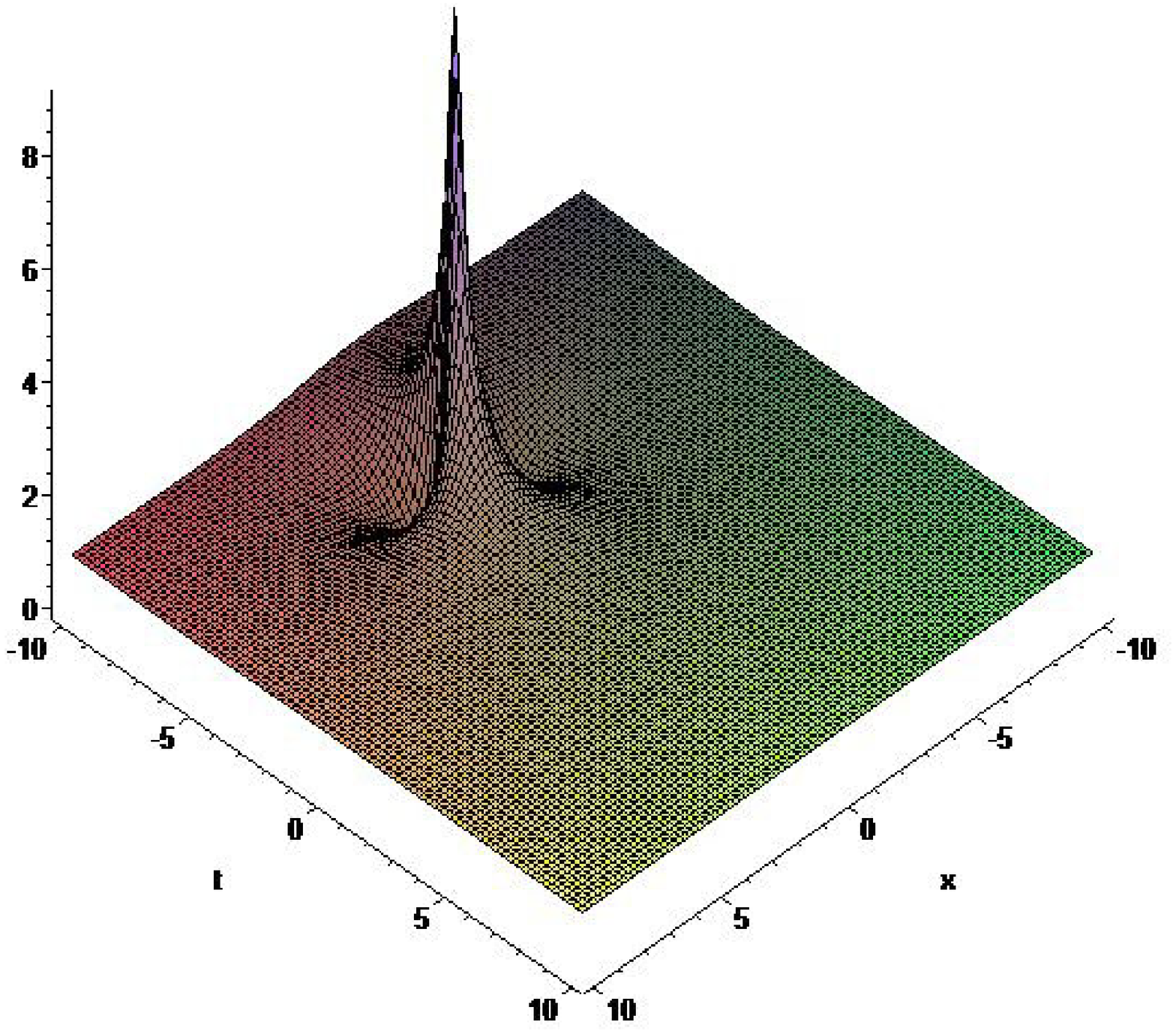}}
   \\
   \subfigure[]{\includegraphics[height=4cm,width=4cm]{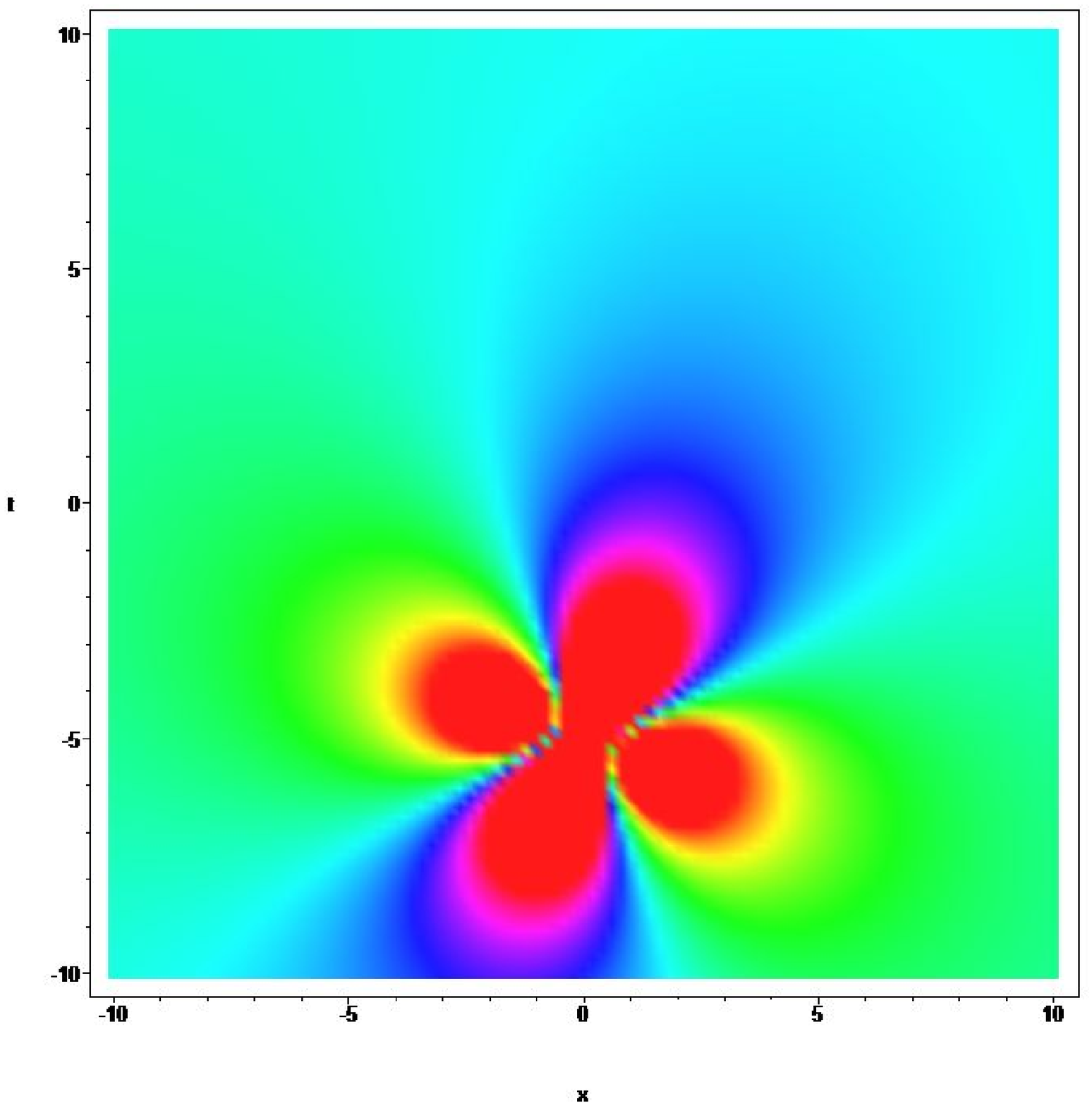}}
   \qquad
   \subfigure[]{\includegraphics[height=4cm,width=4cm]{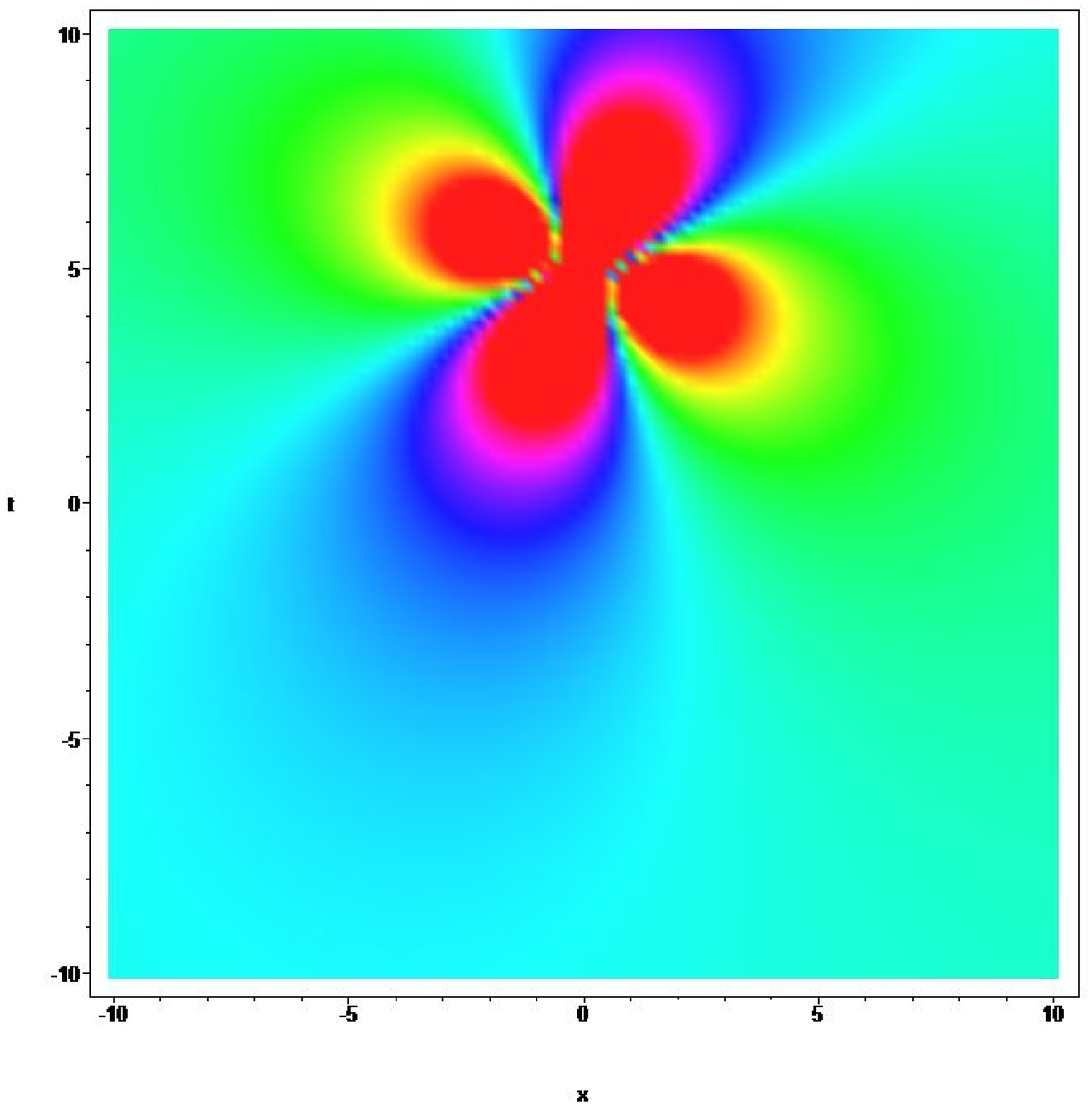}}
   \caption{(Color online)Various form of $1$-order rogue wave solution with particular parameter $S_0$. (a) The first order rogue wave with $S_0=5$, the maximum amplitude occurs at $x=0$ and $t=-5$. (b) The first order rogue wave with $S_0=-5$, the maximum amplitude occurs at $x=0$ and $t=5$}\label{fig.1rws0}
\end{figure}

\begin{figure}[!htp]
   \centering
   \subfigure[$2$-order rogue wave with $S_1=500$]{\includegraphics[height=4cm,width=4cm]{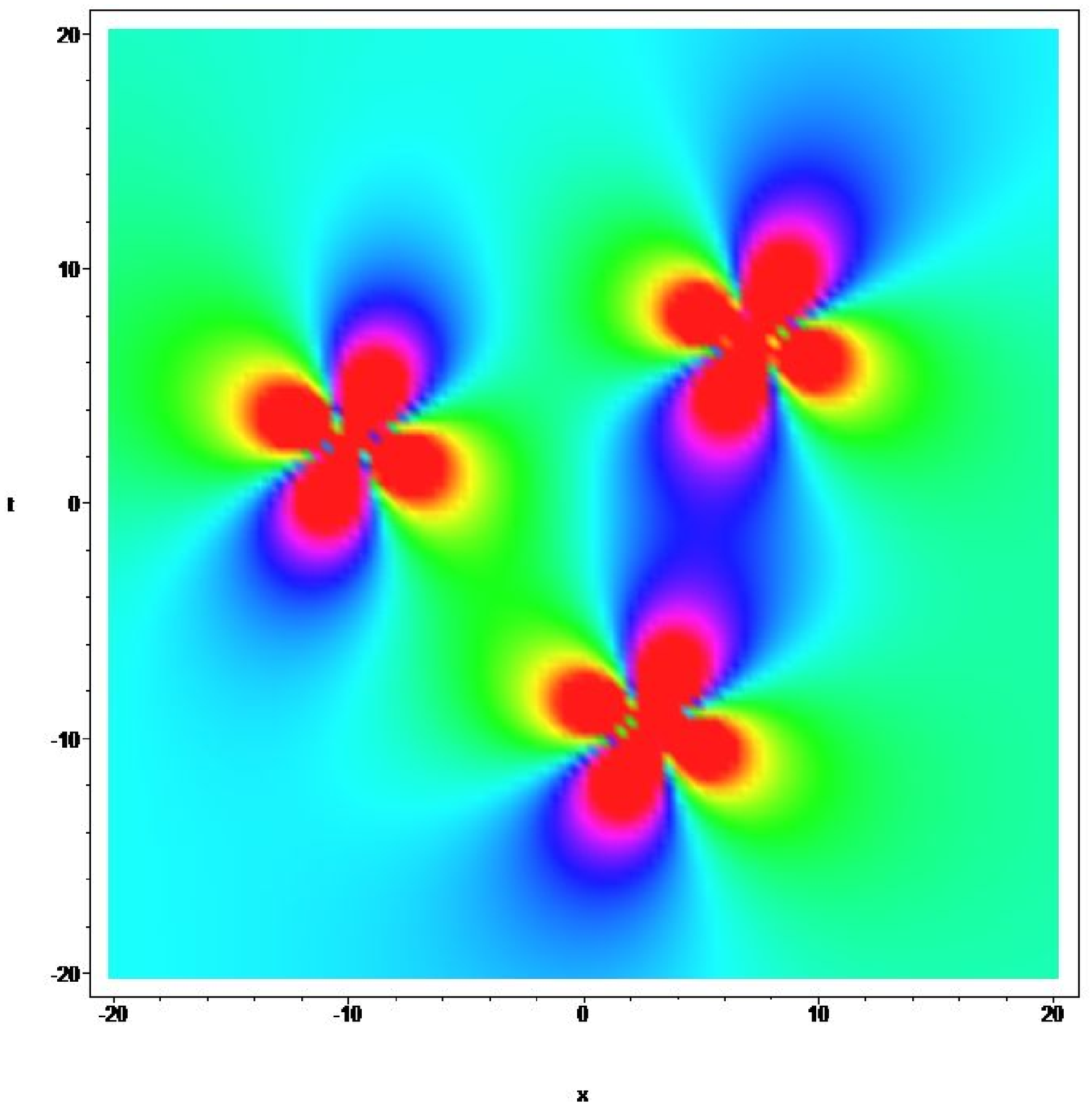}}
   \qquad
   \subfigure[$3$-order rogue wave with $S_1=500$]{\includegraphics[height=4cm,width=4cm]{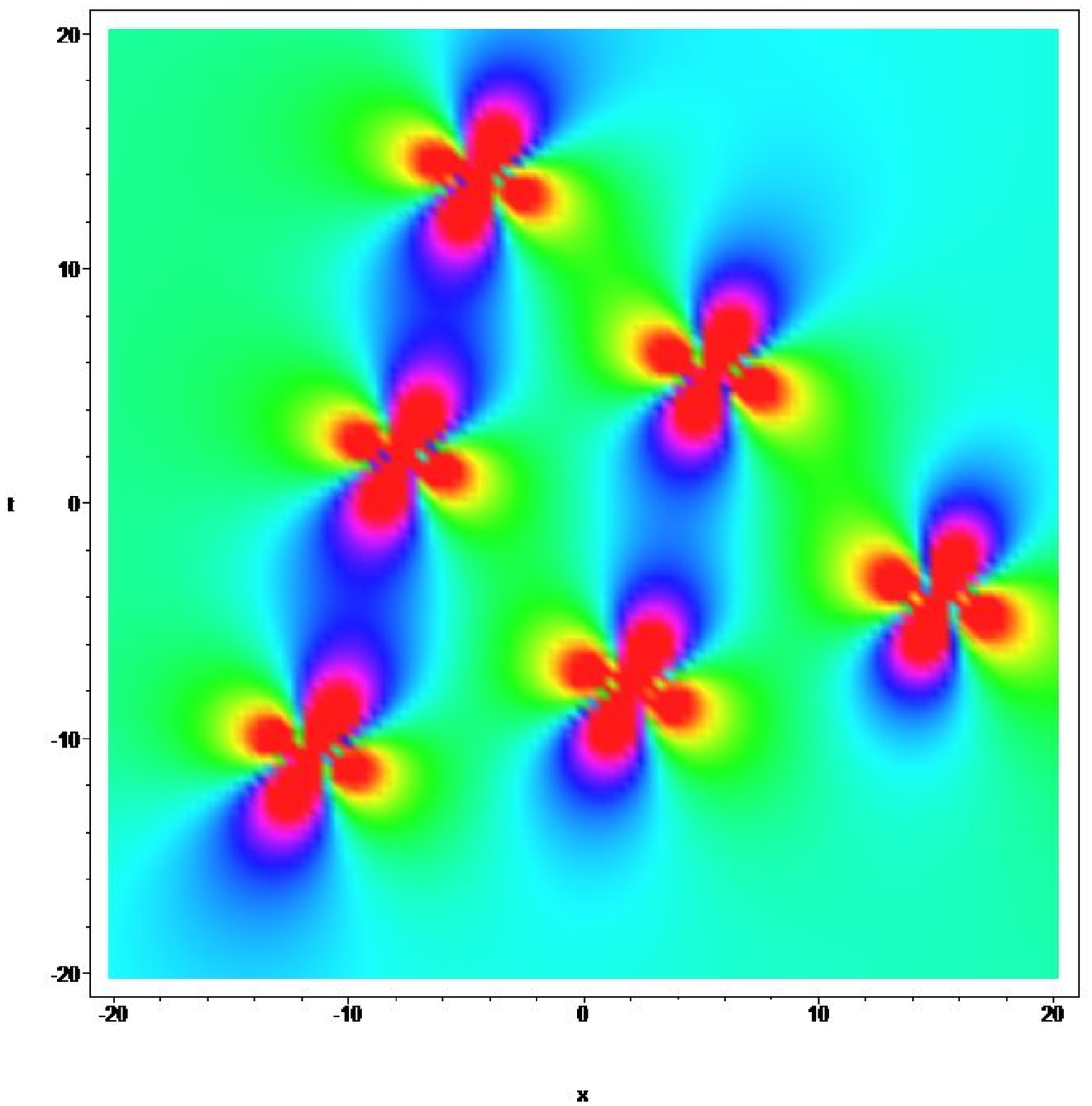}}
   \qquad
   \subfigure[$4$-order rogue wave with $S_1=500$]{\includegraphics[height=4cm,width=4cm]{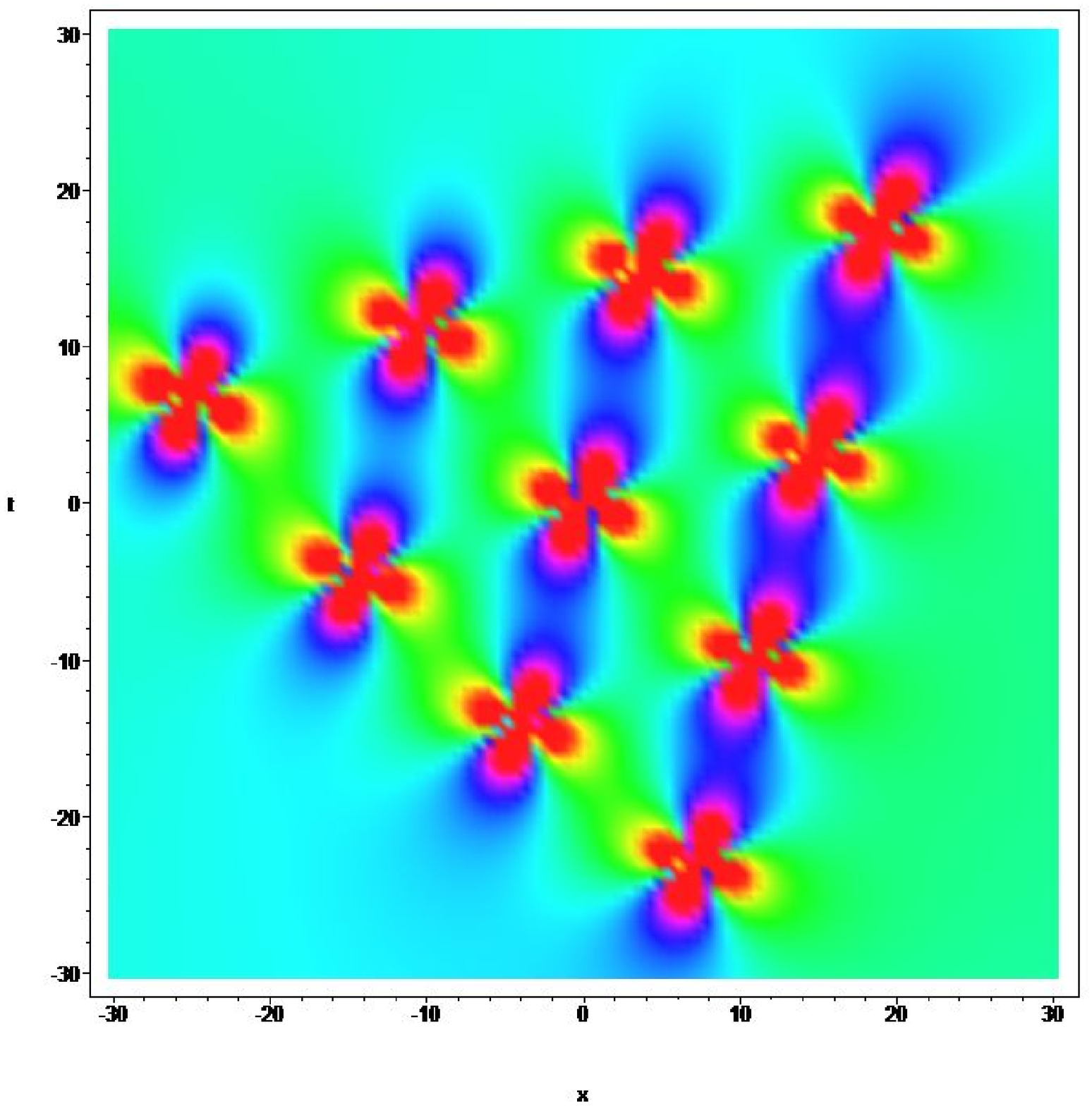}}
   \qquad
   \subfigure[$5$-order rogue wave with $S_1=500$]{\includegraphics[height=4cm,width=4cm]{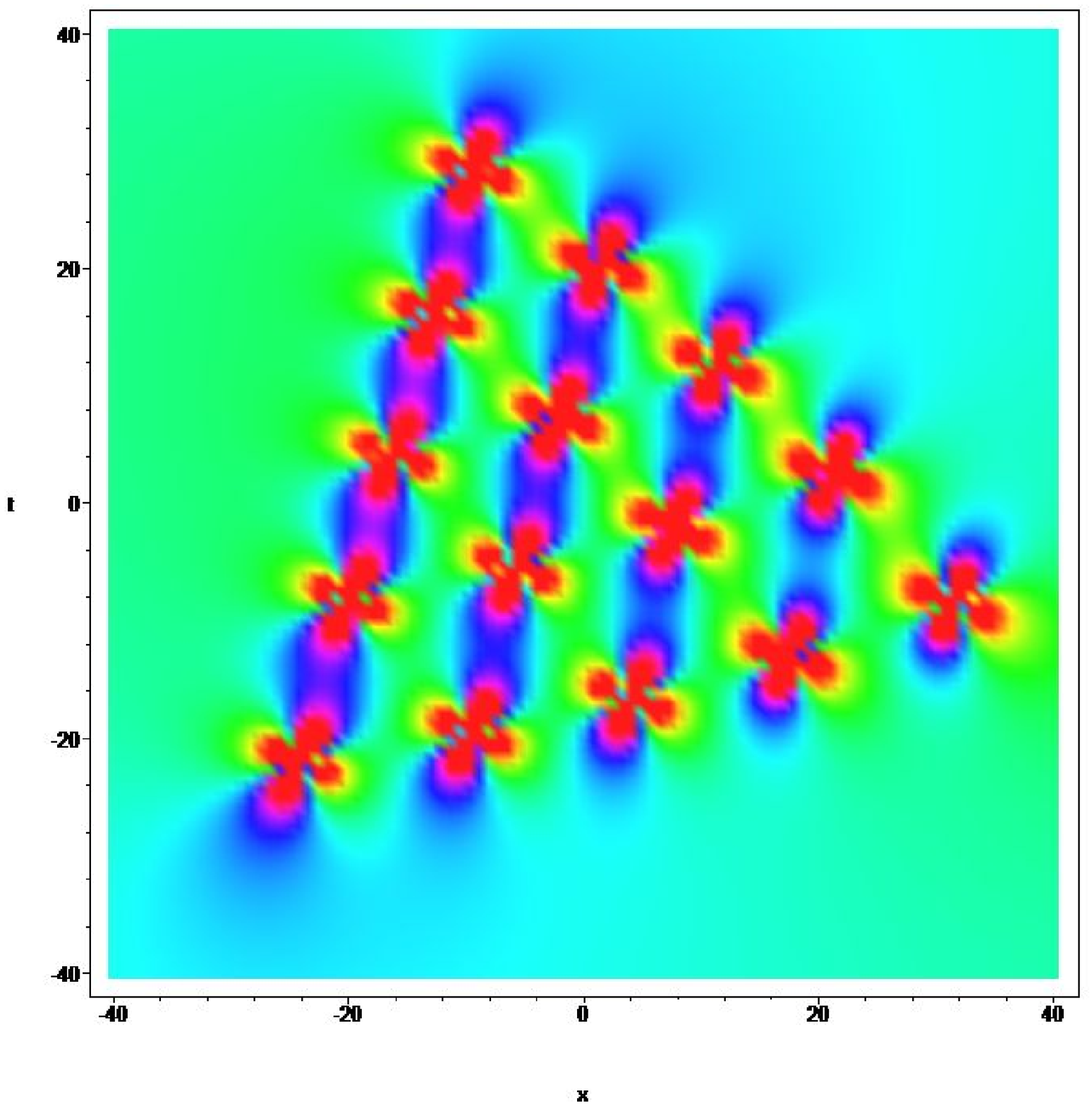}}
   \qquad
   \subfigure[$6$-order rogue wave with $S_1=500$]{\includegraphics[height=4cm,width=4cm]{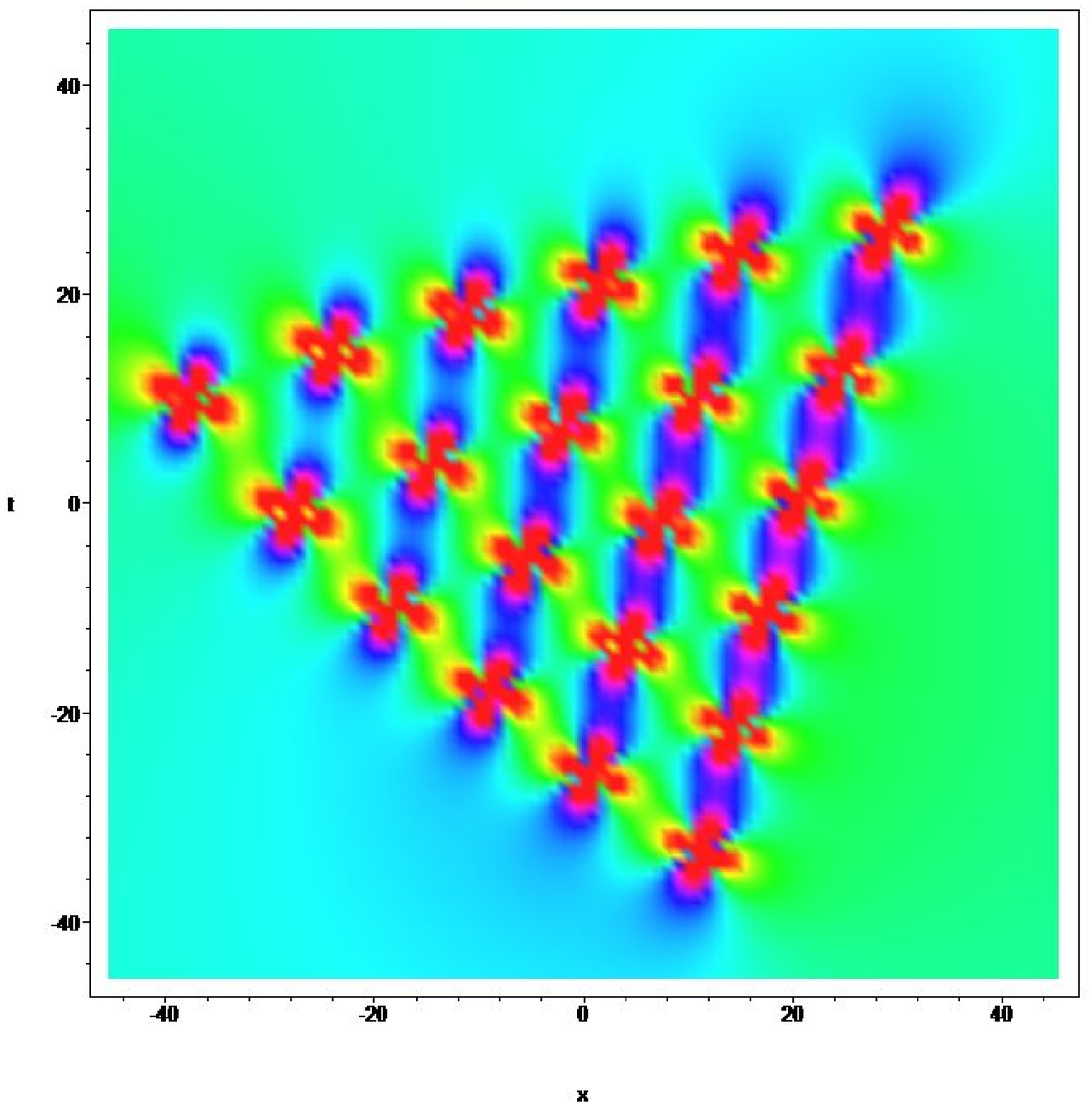}}
   \qquad
   \subfigure[$7$-order rogue wave with $S_1=250$]{\includegraphics[height=4cm,width=4cm]{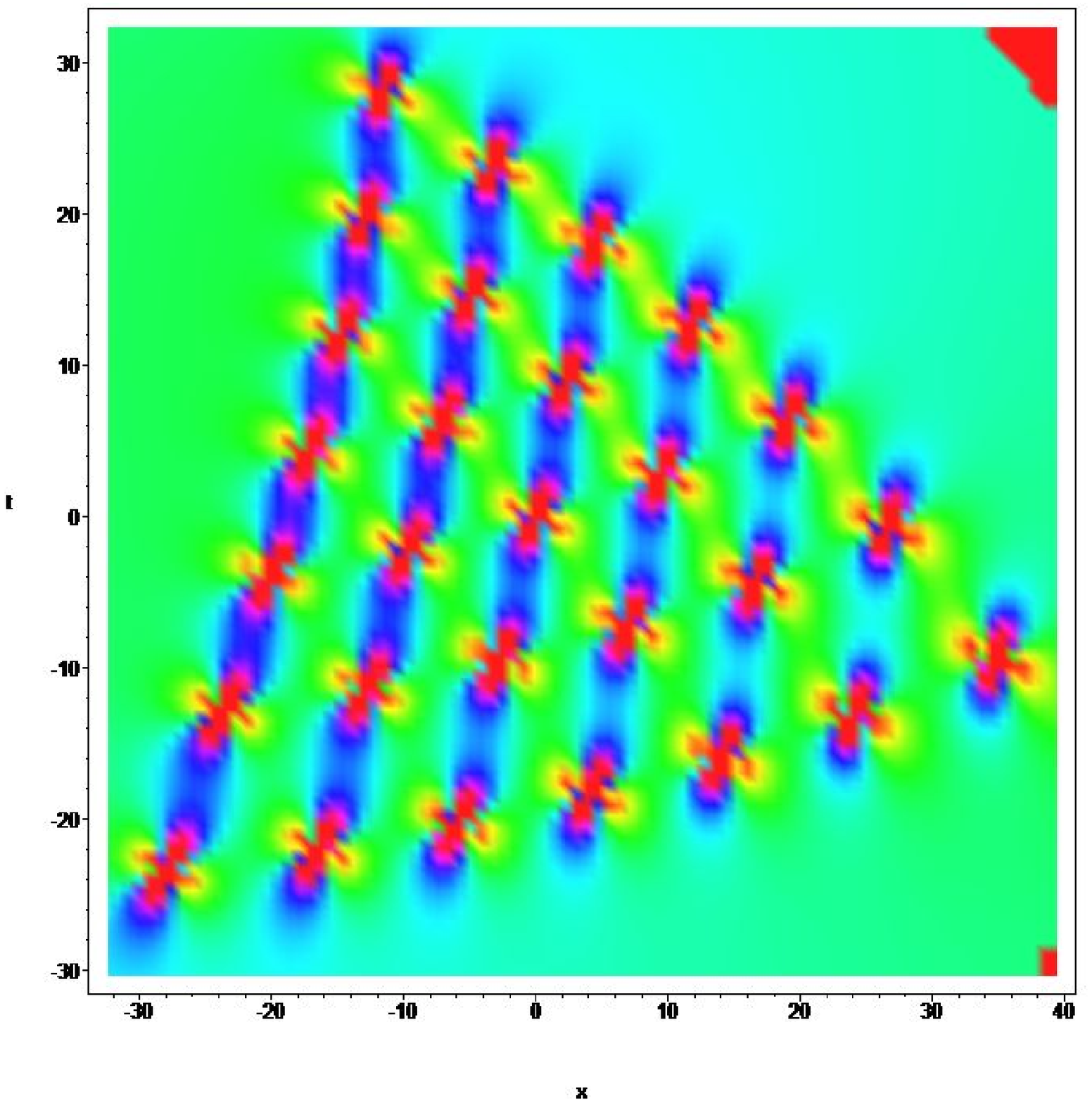}}
   \caption{The \emph{triangular} structures of higher order rogue waves.}\label{fig.triangle}
\end{figure}

%%%%%%%%%%%%%%%%%%%%%%%%%%%%%%%%%%%%%%%%%%%%%%%%%%%%%%%%%%%%%%%%%%%%%%%%%%%%%%%%%%%%%%%%%%%%%%%%%%%%%%%%%%%

\begin{figure}[!htp]
   \centering
   \subfigure[]{\includegraphics[height=4cm, width=4cm]{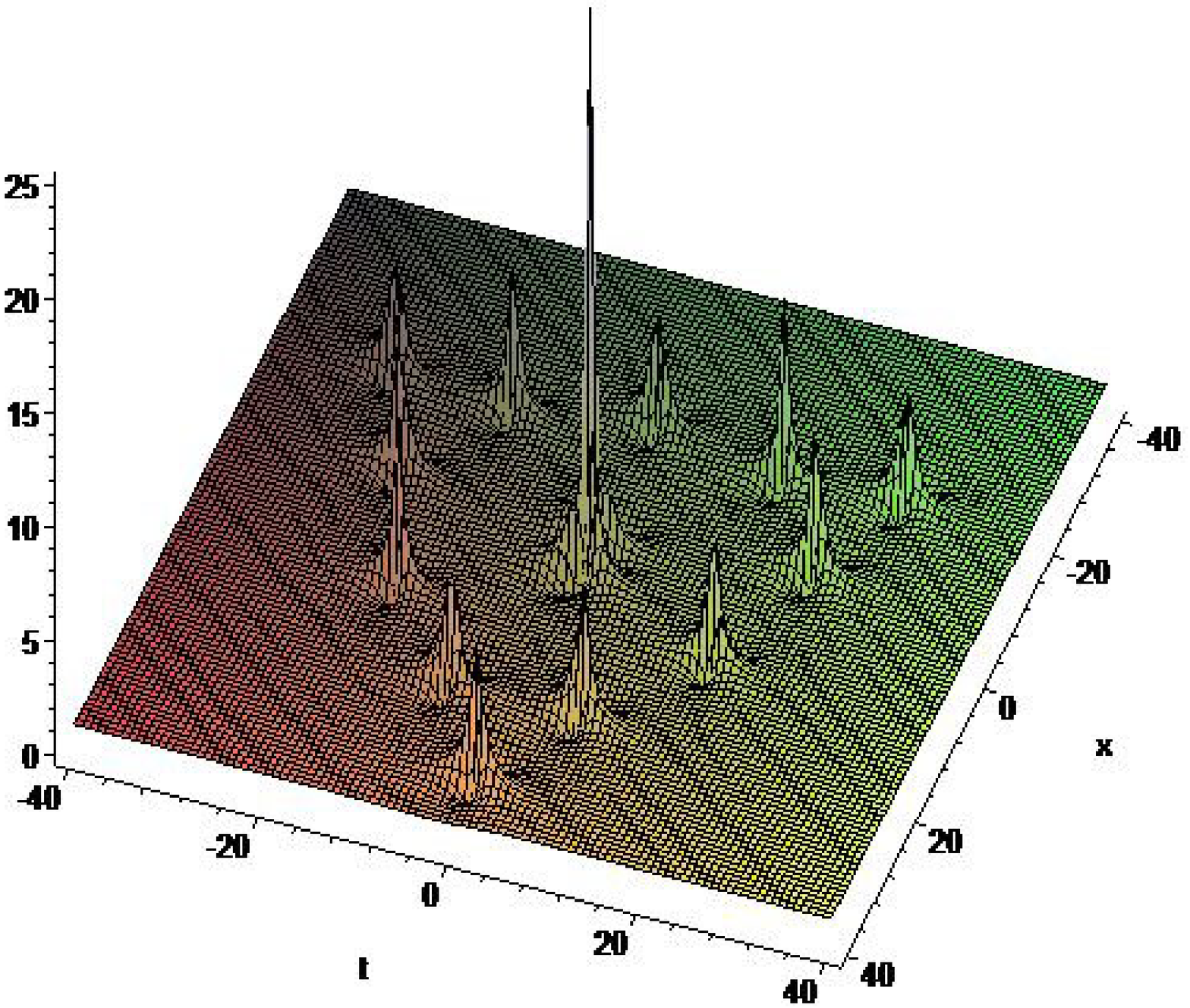}}
   \qquad
   \subfigure[]{\includegraphics[height=4cm, width=4cm]{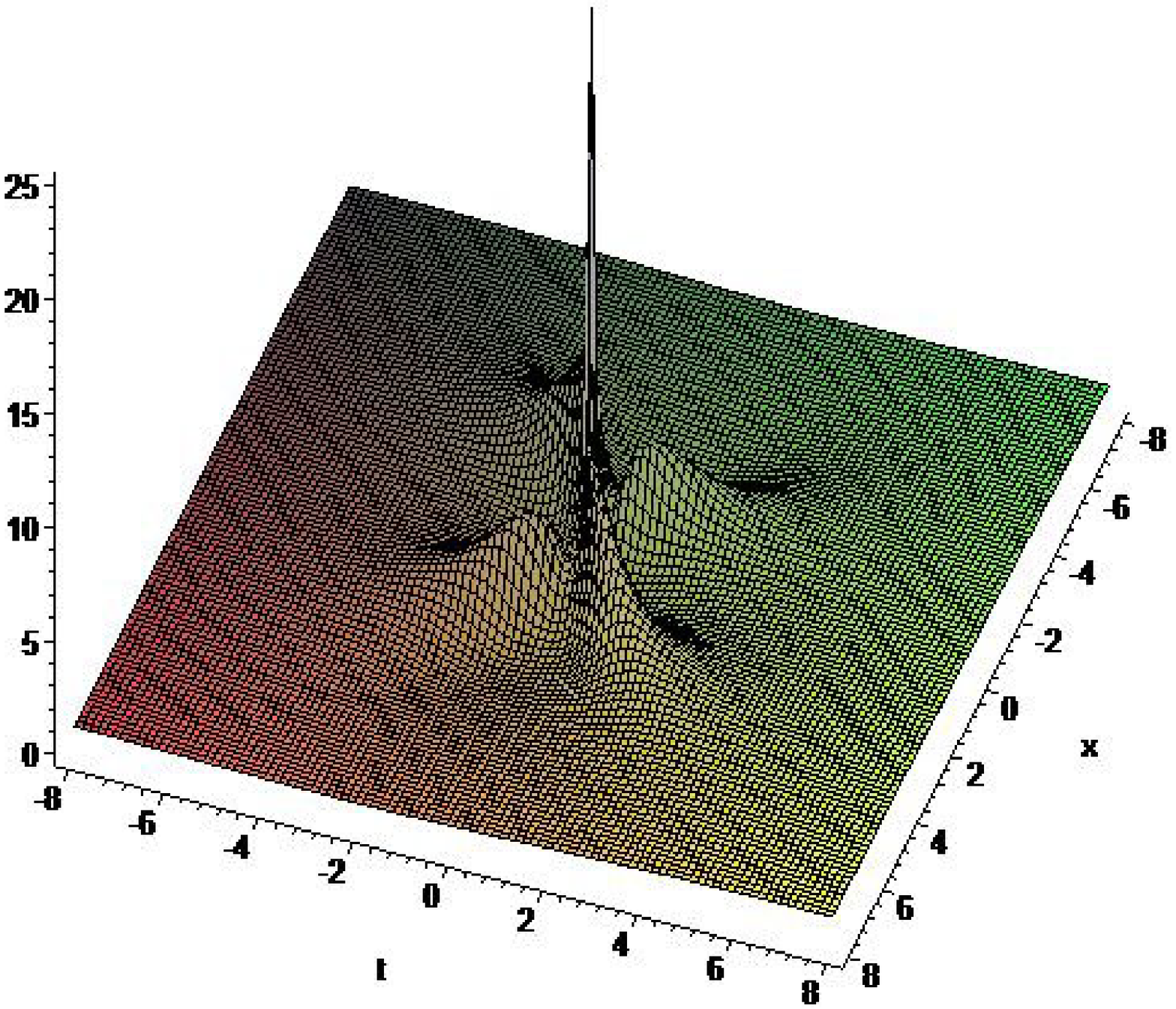}}
   \caption{(Color online)The \emph{modified-triangular} structure of $5$-order rogue wave with a second order rogue wave located in the center. (a) An overall profile of $5$-order rogue wave with $S_1=100$. (b) The centra profile of the right panel.}\label{fig.newtriangle}
\end{figure}

\begin{figure}[!htp]
   \centering
   \subfigure[$4$-order rogue wave with $S_3=10000$]{\includegraphics[height=4cm,width=4cm]{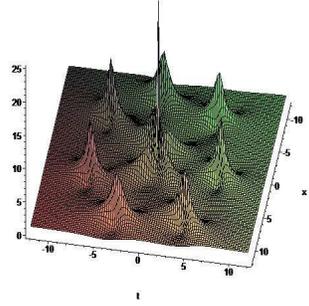}}
   \qquad
   \subfigure[$5$-order rogue wave with $S_4=500000$]{\includegraphics[height=4cm,width=4cm]{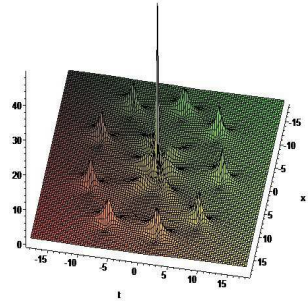}}
   \\
   \subfigure[$6$-order rogue wave with $S_5=1\times10^8$]{\includegraphics[height=4cm,width=4cm]{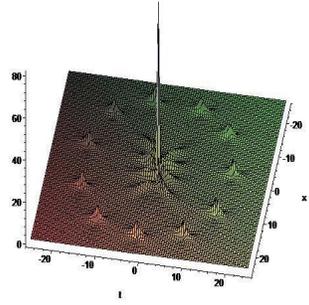}}
   \qquad
   \subfigure[$7$-order rogue wave with $S_6=1\times10^{10}$]{\includegraphics[height=4cm,width=4cm]{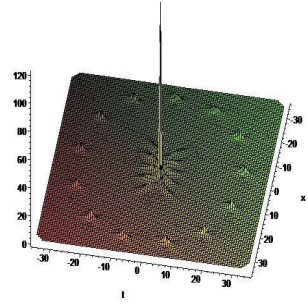}}
   \caption{(Color online) The \emph{ring} structures of higher order rogue wave solutions with $S_{k-1}\neq0$.}\label{fig.circle}
\end{figure}

%%%%%%%%%%%%%%%%%%%%%%%%%%%%%%%%%%%%%%%%%%%%%%%%%%%%%%%%%%%%%%%%%%%%%%%%%%%%%%%%%%%%%%%%%%%%%%%%%%%%%%%%%%%

\begin{figure}[!htp]
\centering
\subfigure[$4$-order rogue wave with $S_1=300$ and $S_3=1\times10^7$]{\includegraphics[height=4cm,width=4cm]{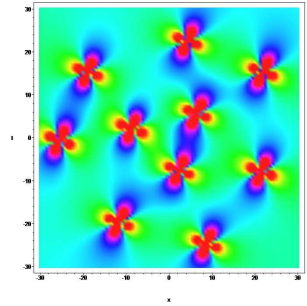}}
\qquad
\subfigure[$5$-order rogue wave with $S_1=200$ and $S_4=1\times10^9$]{\includegraphics[height=4cm,width=4cm]{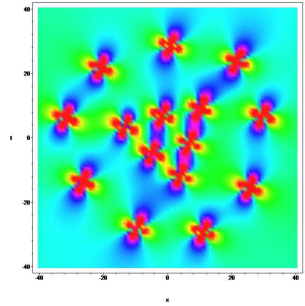}}
\\
\subfigure[$6$-order rogue wave with $S_1=500$ and $S_5=3\times10^{10}$]{\includegraphics[height=4cm,width=4cm]{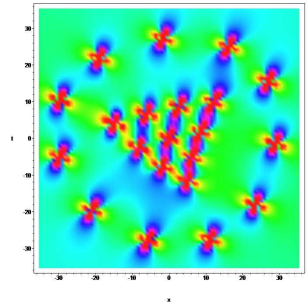}}
\qquad
\subfigure[$7$-order rogue wave with $S_1=300$ and $S_6=5\times10^{10}$]{\includegraphics[height=4cm,width=4cm]{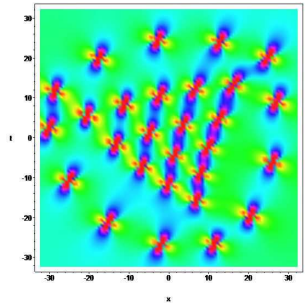}}
\caption{(Color online) A \emph{triangular} pattern in a ring for higher order rogue wave with $S_{k-1}\neq0$ and $S_1\neq0$.}\label{fig.circletriangle}
\end{figure}

\begin{figure}[!htp]
\centering
\subfigure[]{\includegraphics[height=4cm,width=4cm]{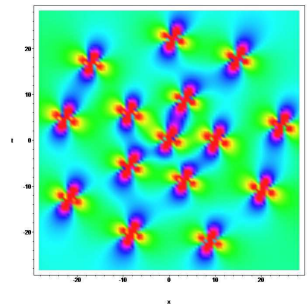}}
\qquad
\subfigure[]{\includegraphics[height=4cm,width=4cm]{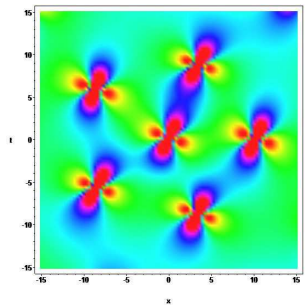}}
\caption{(Color online) (a) The \emph{multi-ring} model of  $5$-order rogue wave solution with $S_2=5000$ and $S_4=1\times10^8$. (b)
The local central profile.}\label{fig.5rwmulticircle}
\end{figure}

%%%%%%%%%%%%%%%%%%%%%%%%%%%%%%%%%%%%%%%%%%%%%%%%%%%%%%%%%%%%%%%%%%%%%%%%%%%%%%%%%%%%%%%%%%%%%%%%%%

\begin{figure}[!ht]
\centering
\subfigure[The entire part]{\includegraphics[height=4cm,width=4cm]{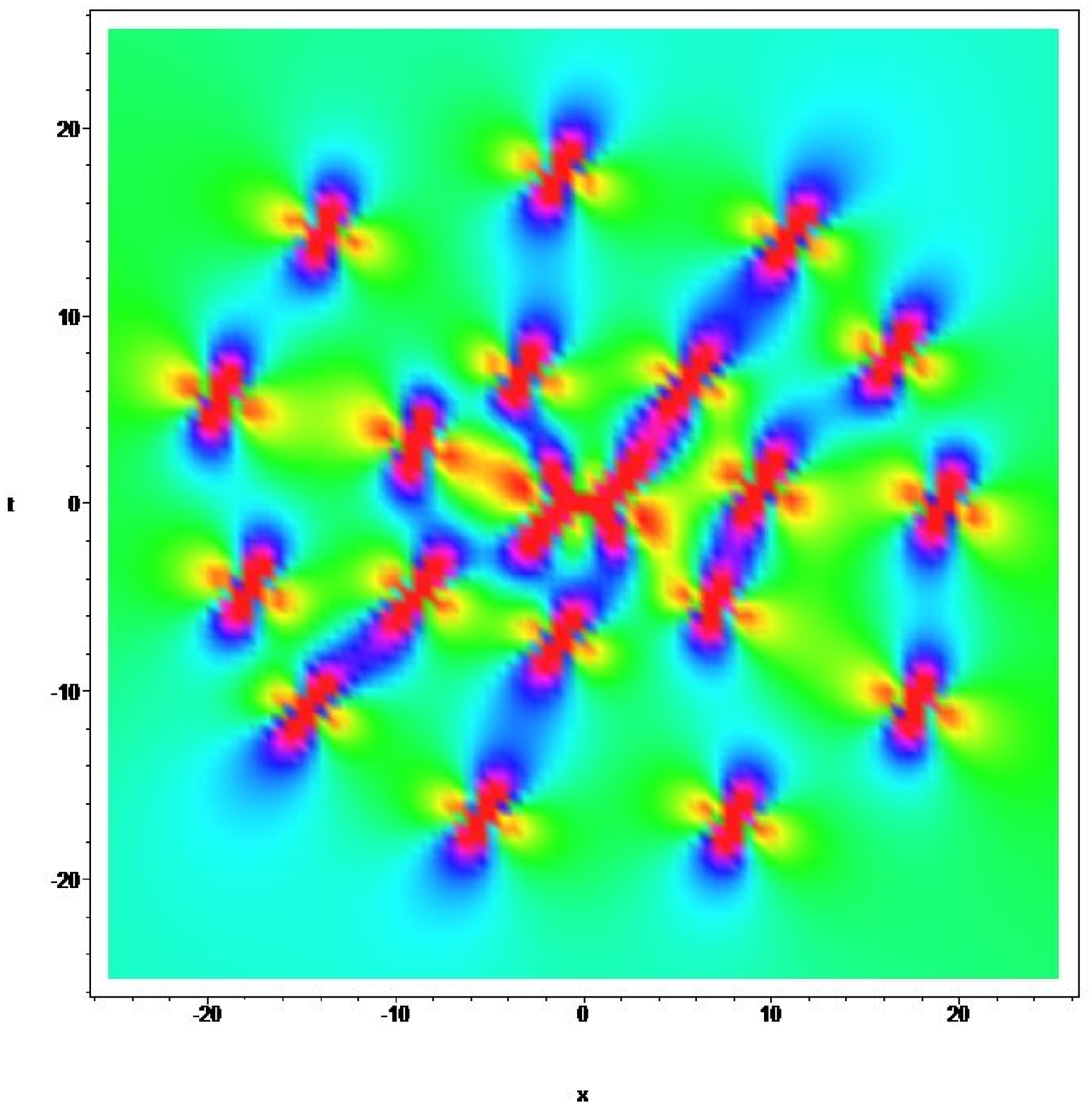}}
\qquad
\subfigure[The inner part]{\includegraphics[height=4cm,width=4cm]{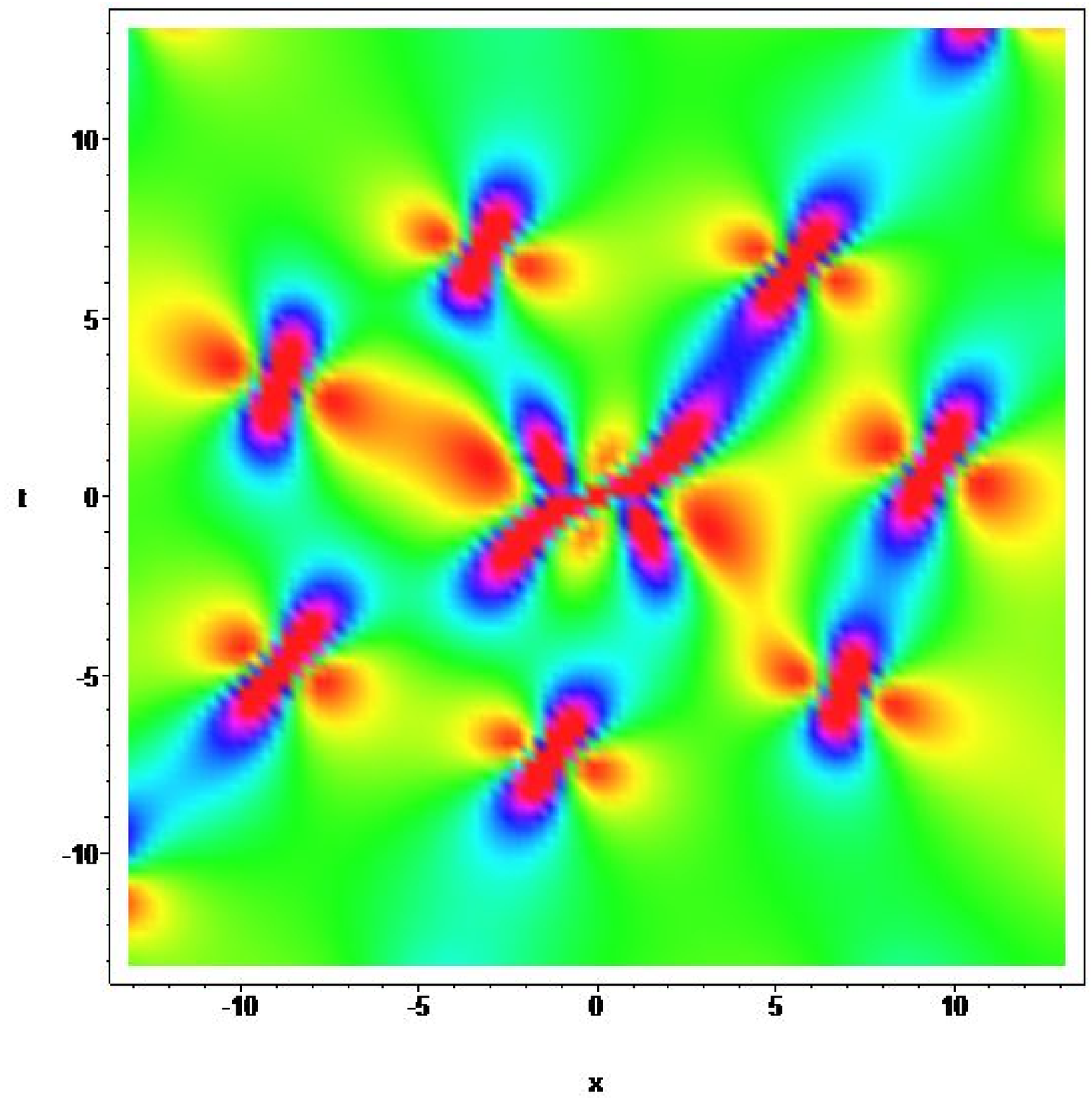}}
\caption{(Color online)The \emph{multi-ring-1} structure of $6$-order rogue wave .}\label{fig.6rwmulticircle1}
\end{figure}

\begin{figure}[!ht]
\centering
\subfigure[The entire part]{\includegraphics[height=4cm,width=4cm]{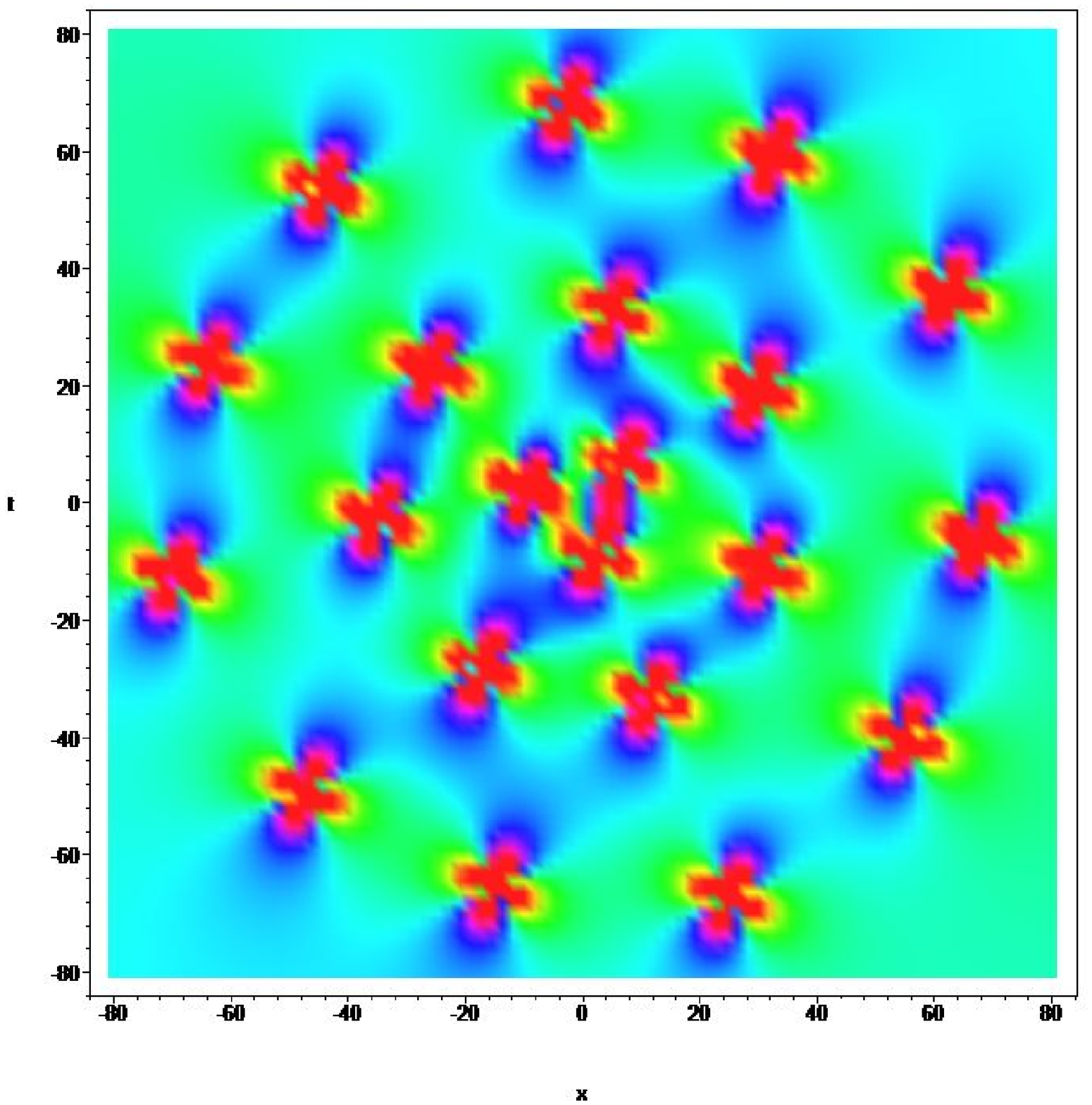}}
\qquad
\subfigure[The inner part]{\includegraphics[height=4cm,width=4cm]{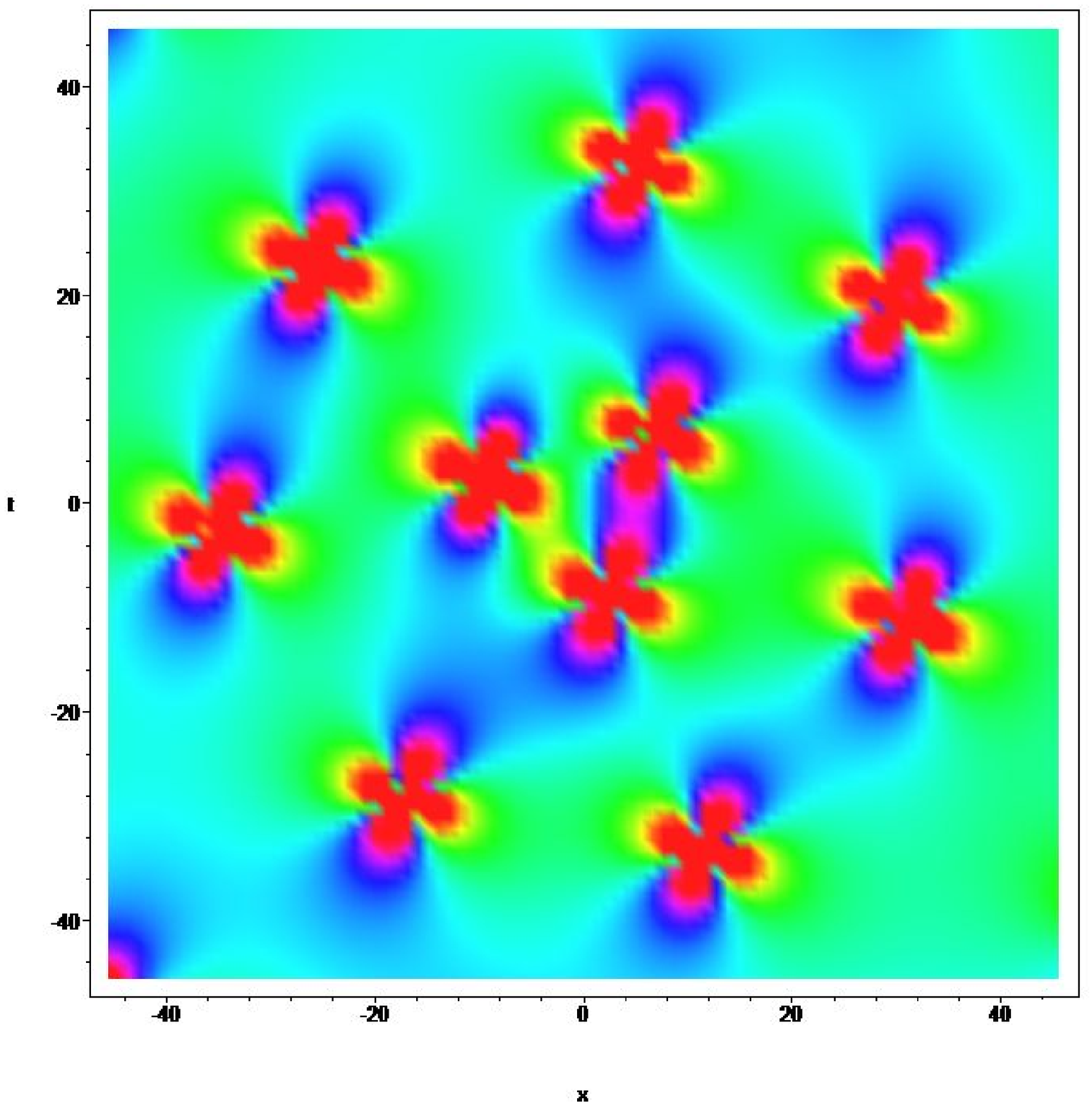}}
\caption{(Color online)The \emph{multi-ring-2} structure of the $6$-order rogue wave. Both the outer shell and the middle shell are circular, and the inner shell is triangular(or circular).}\label{fig.6rwmulticircle2}
\end{figure}

%%%%%%%%%%%%%%%%%%%%%%%%%%%%%%%%%%%%%%%%%%%%%%%%%%%%%%%%%%%%%%%%%%%%%%%%%%%%%%%%%%%%%%%%%%%%%%%%%%%%%%%%%

\begin{figure}[!htp]
\centering
\subfigure[The entire part]{\includegraphics[height=4cm,width=4cm]{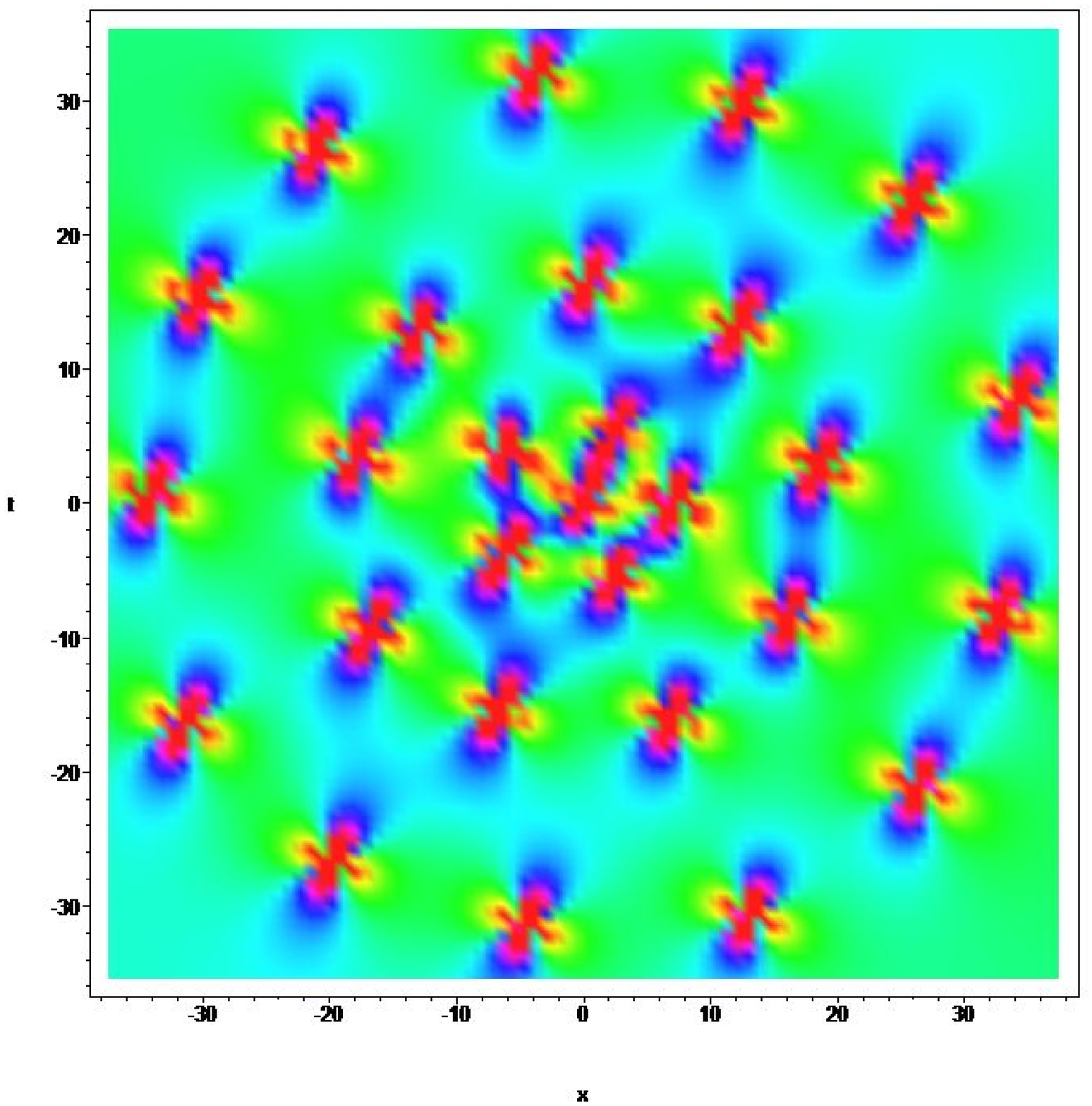}}
\qquad
\subfigure[The inner part]{\includegraphics[height=4cm,width=4cm]{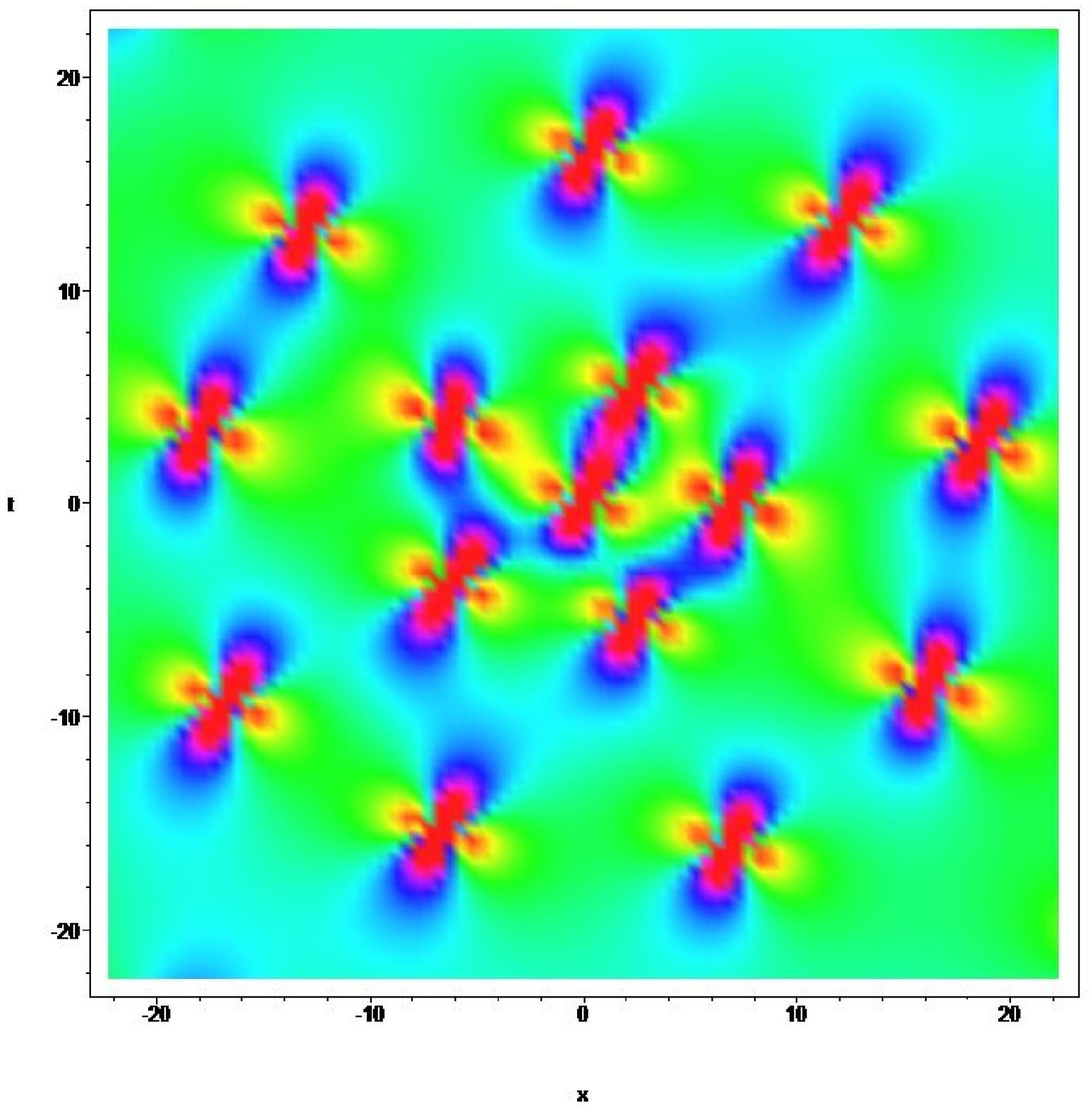}}
\caption{ (Color online)  The \emph{multi-ring-1} structure of $7$-order rogue wave with $S_2 = 1000$, $S_4 =
1\times10^7$and $S_6= 5\times10^{11}$. All the three shell are circular.}\label{7rwmulticircle1}
\end{figure}

\begin{figure}[!htp]
\centering
\subfigure[The entire part]{\includegraphics[height=4cm,width=4cm]{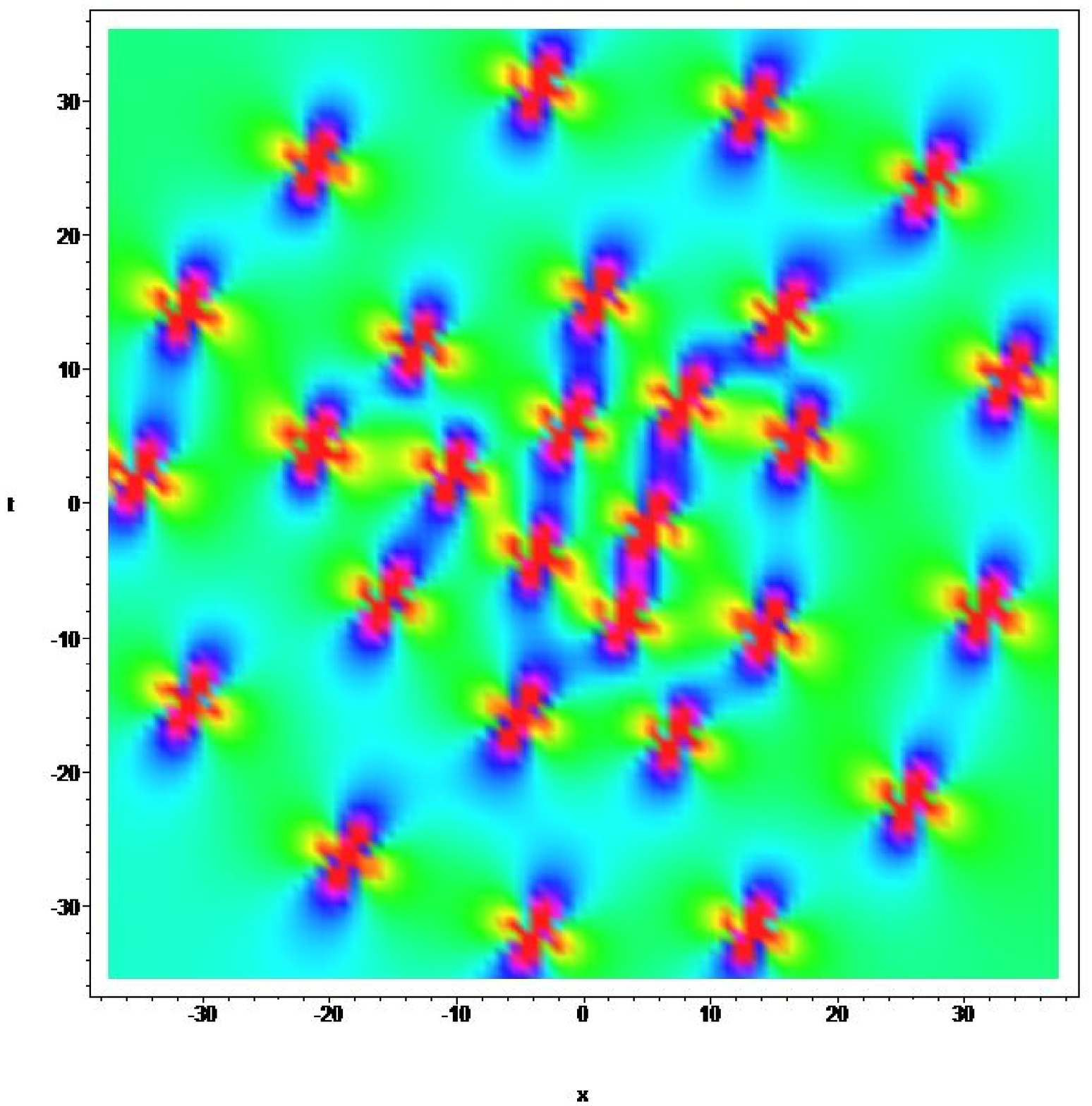}}
\qquad
\subfigure[The inner part]{\includegraphics[height=4cm,width=4cm]{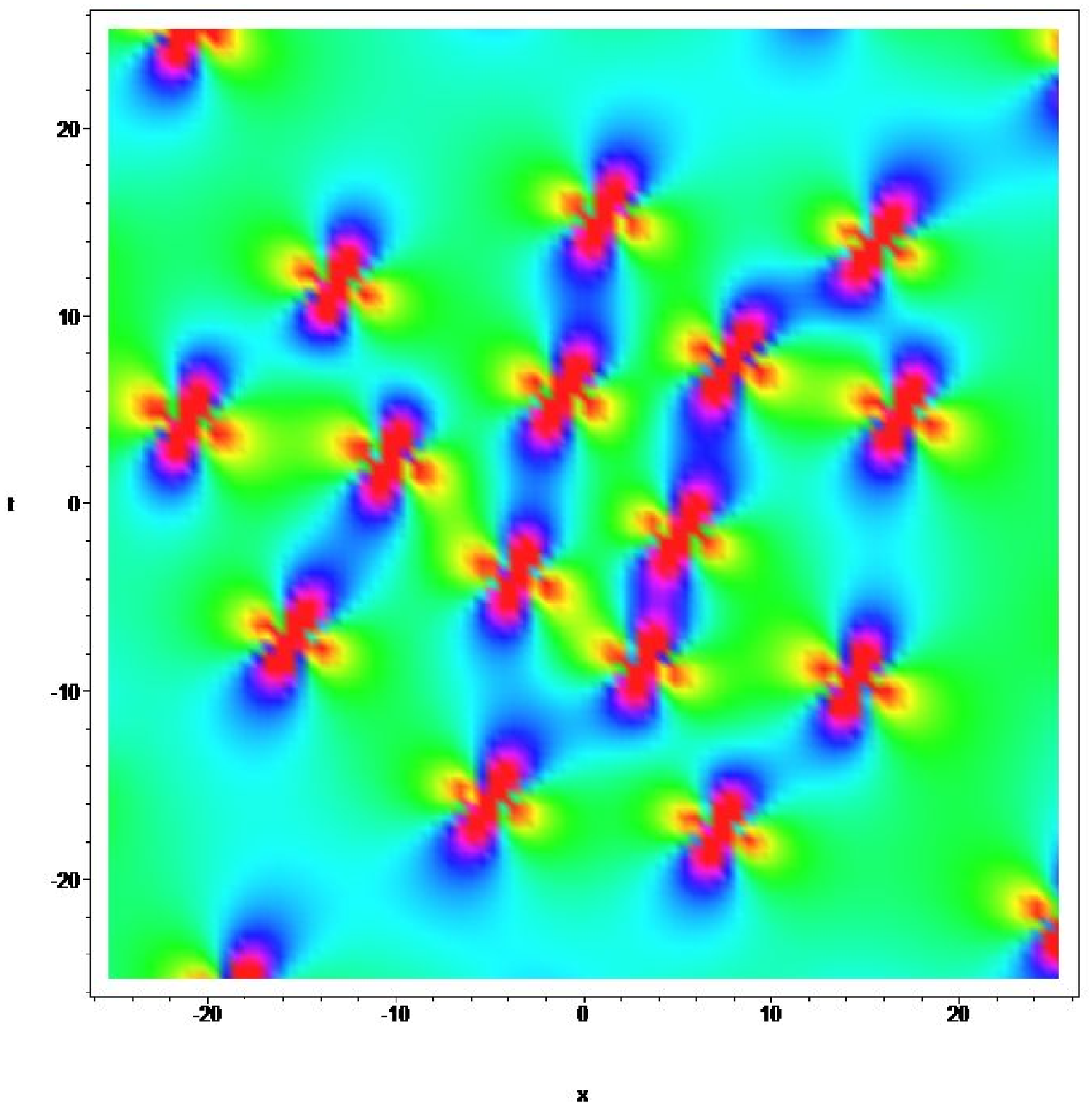}}
\caption{ (Color online) The \emph{multi-ring-2} structure of $7$-order rogue wave with $S_1 = 100$, $S_4 =
1\times10^7$and $S_6= 5\times10^{11}$. Both the outer shell and the middle shell are circular, and the inner shell is triangular.}\label{7rwmulticircle2}
\end{figure}

%%%%%%%%%%%%%%%%%%%%%%%%%%%%%%%%%%%%%%%%%%%%%%%%%%%%%%%%%%%%%%%%%%%%%%%%%%%%%%%%%%%%%%%%%%%%%%%%%%%


\begin{thebibliography}{200}
\bibitem{JPSJ41265}Mio K, Ogino T, Minami K and Taketa S 1976 { J. Phys. Soc. Japan} {\bf 41}, 265--271.

\bibitem{JPP16321}Mj$\phi$lhus E 1976 { J. Plasma Phys.} {\bf 16}, 321--334.

\bibitem{PS40227}Mj$\phi$lhus E 1989 { Phys. Scr}. {\bf 40}, 227--237.

\bibitem{NWCSP}Mj$\phi$lhus E and Hada T 1997 {\sl Nonlinear Waves and Chaos in Space Plasma} ed T Hada and H Matsumoto (Tokyo: Terra Sci.) p, 171.

\bibitem{JPP67271}Ruderman M S 2002 { J. Plasma Phys}. {\bf 67}, 271--276.

\bibitem{PLA3726107}Fedun V, Ruderman M S and Erd\'{e}lyi R 2008 { Phys. Lett}. A {\bf 372}, 6107--6110.

\bibitem{PRA231266}Tzoar N and Jain M 1981 { Phys. Rev}. A {\bf 23}, 1266--1270.

\bibitem{PRA271393}Anderson D and Lisak M 1983 { Phys. Rev}. A {\bf 27}, 1393--1398.

\bibitem{NFO}Govind P A 2001 {\sl Nonlinear Fibers Optics} 3rd edn (New York: Academic).

\bibitem{JMP19798}Kaup D J and Newell A C 1978 { J. Math. Phys}. {\bf 19}, 798--801.

%\bibitem{OL12516}Ohkuma K, Ichikawa Y H, and Abe Y 1987 {OPt. Lett.} {\bf 12}, 516.

%\bibitem{EPJB29227}Doktorov E V 2002 {Eur. Phys. J. B} {\bf 29}, 227.

\bibitem{JPSJ49813}Nakamura A and Chen H H 1980 {J. Phys. Soc. Japan} {\bf 49}, 813--816.

\bibitem{JPA23439}Huang N N and Chen Z Y 1900 {J. Phys. A: Math. Gen} {\bf 23}, 439--453.

\bibitem{JPSJ441968}Kawata T and Inoue H 1978 {J. Phys. Soc. Japan} {\bf 44}, 1968--1976.

\bibitem{JPSJ48279}Ichikawa Y. Konno K, Wadati M and Sanuki H 1980 {J. Phys. Soc. Japan} {\bf 48}, 279--286.

\bibitem{PRE69066604}Chen X J and Lam W K 2004 { Phys. Rev.} E {\bf 69}, 066604.

\bibitem{NOWATIST}A R Osborne, {\sl Nonlinear Ocean Waves And The Inverse Scattering Transform} (Academic Press, New York 2010).

\bibitem{EOW}E Pelinovsky and C Kharif, {\sl Extreme Ocean Waves} (Springer, Berlin, Heidelberg, 2008).

\bibitem{Nature450}D R Solli, C Ropers, P Koonath, and B Jalali 2007 Nature (London) {\bf 450}, 1054--1058.

\bibitem{PRL101233902}D R Solli, C Ropers, and B Jalali 2008 Phys. Rev. Lett. {\bf 101}, 233902.

\bibitem{OE163644}J M Dudley, G Genty, and B J Eggleton 2008 Opt. Express {\bf 16}, 3644--3651.

\bibitem{PRA80033610}Yu V Bludov, V V Konotop and N Akhmediev 2009 {Phys. Rev.} A {\bf 80}, 033610.

\bibitem{PRL104104503}M Shats, H Punzmann, H Xia 2010 Phys. Rev. Lett. {\bf 104}, 104503.

\bibitem{PLA373675}Akhmediev N, Ankiewicz A, and Taki M 2009 {Phys. Lett.} A {\bf 373}, 675--678.

\bibitem{JAMSS2516}D H Peregrine 1983 J. Austral. Math. Soc. Ser. B {\bf 25}, 16--43.

\bibitem{NP6790}B Kibler, J Fatome, C Finot, G Millot, F Dias, G Genty, N Akhmediev, J M Dudley 2010 Nature
             Phys. {\bf 6}, 790--795.
\bibitem{PRL106204502}A Chabchoub, N P Hoffmann, and N Akhmediev 2011 Phys. Rev. Lett. {\bf 106}, 204502.

\bibitem{PRL107255005}H Bailung, S K Sharma, and Y Nakamura 2011 Phys. Rev. Lett. {\bf 107}, 255005.

\bibitem{JPA44305203}S Xu, J He and L Wang 2011 {J. Phys. A: Math. Theor.} {\bf 44}, 305203.

\bibitem{Guo}Boling Guo, Liming Ling and Q. P. Liu 2012 Stud. Appl. Math. DOI: 10.1111/j.1467--9590.2012.00568.x.

\bibitem{JPSJ68355}Kenji Imai 1999 J. Phys. Soc. Japan {\bf 68}, 355--359.

\bibitem{arxiv12093742}J S He, H R Zhang, L H Wang, K Porsezian, A S Fokas 2012 arXiv:1209.3742v3 [nlin.SI].

\bibitem{PRE84056611}D J Kedziora, A Ankiewicz, and N Akhmediev 2011 Phys. Rev. E {\bf 84}, 056611.

\bibitem{PLA3752782}A Ankiewizc, D J Kedziora and N Akhmediev 2011 {Phys. Lett.} A {\bf 375}, 2782--2785.

\end{thebibliography}
\end{document}